\def\Hinf{\mbox{${\mathcal{H}_\infty}$}}
\def\H2{\mbox{${\mathcal{H}_2}$}}
\def\R{\mathbb{R}}
\def\MA{\overline{\bf A}}
\def\GG{\mathcal{G}}
\def\E{\mathbf{E}}
\def\e{\mathrm{e}}
\def\c{\mathrm{c}}
\def\sn{\mbox{$| \! | \! |$}}
\def\ol{\overline}
\def\wh{\widehat}
\def\bA{\bar{A}}
\def\bB{\bar{B}}
\def\bC{\bar{C}}
\def\bS{\bar{S}}
\def\bR{\bar{R}}
\def\bPhi{\bar{\Phi}}
\def\bPi{\bar{\Pi}}
\def\euA{\EuScript{A}}
\def\euB{\EuScript{B}}
\def\euC{\EuScript{C}}
\def\euD{\EuScript{D}}
\def\euAc{\EuScript{A}_\c}
\def\euBc{\EuScript{B}_\c}
\def\euCc{\EuScript{C}_\c}
\def\euDc{\EuScript{D}_\c}
\def\calA{\mathcal{A}}
\def\calB{\mathcal{B}}
\def\calC{\mathcal{C}}
\def\calD{\mathcal{D}}
\newcommand{\rank}{\mathop{\mathrm{rank}}\nolimits}
\newcommand{\esssup}{\mathop{\mathrm{ess\,sup}}}
\newcommand{\tr}{\mathop{\mathrm{tr}}\nolimits}
\newcommand{\blockdiag}{\mathop{\mathrm{blockdiag}}\nolimits}
\newcommand{\T}{\mathrm{T}}
\newtheorem{theorem}{Theorem}
\newtheorem{lemma}{Lemma}
\newtheorem{corollary}{Corollary}
\newtheorem{problem}{Problem}
\newtheorem{remark}{Remark}
\def\be#1{\begin{equation}\label{#1}}
\def\ee{\end{equation}}
\begin{document}
\title{Synthesis of Anisotropic Suboptimal Controllers\linebreak by Convex
Optimization\footnote{This work was supported
    by the Russian Foundation for Basic Research (grant 11-08-00714-a) and Program for Fundamental Research No.~15
of EEMCP Division of Russian Academy of Sciences.}}
\author{Michael M.~Tchaikovsky$^\dag$\quad Alexander P.
Kurdyukov$^\dag$\quad Victor N. Timin\footnote{The authors are
with Institute of Control Sciences of Russian Academy of Sciences,
Moscow, Russia, 117997, Profsoyuznaya 65, fax:
+7\,495\,334\,93\,40, e-mails: mmtchaikovsky@hotmail.com,
akurd@ipu.ru,\newline timin.victor@rambler.ru.}}
\date{\today}
\maketitle

\begin{abstract}
This paper considers a disturbance attenuation problem for a
linear discrete time invariant system under random disturbances
with imprecisely known distributions. The statistical uncertainty
is measured in terms of the mean anisotropy functional. The
disturbance attenuation capabilities of the system are quantified
by the anisotropic norm which is a stochastic counterpart of the
$\Hinf$ norm. The designed anisotropic suboptimal  controller
generally is a dynamic fixed-order output-feedback compensator
which is required to stabilize the closed-loop system and keep its
anisotropic norm below a prescribed threshold value.  Rather than
resulting in a unique controller, the suboptimal design procedure
yields a family of controllers, thus providing freedom to impose
some additional performance specifications on the closed-loop
system. The general fixed-order synthesis procedure employs
solving a convex inequality on the determinant of a positive
definite matrix and two linear matrix inequalities in reciprocal
matrices which make the general optimization problem nonconvex. By
applying the known standard convexification procedures it is shown
that the resulting optimization problem is convex for the
full-information state-feedback, output-feedback full-order
controllers, and static output-feedback controller for some
specific classes of plants defined by certain structural
properties. In the convex cases, the anisotropic $\gamma$-optimal
controllers are obtained by minimizing the squared norm threshold
value subject to convex constraints. In a sense, the anisotropic
controller seems to offer a promising and flexible trade-off
between $\H2$ and $\Hinf$ controllers which are its limiting
cases. In comparison with the state-space solution to the
anisotropic optimal controller synthesis problem presented before
which results in a unique full-order estimator-based controller
defined by a complex system of cross-coupled nonlinear matrix
algebraic equations, the proposed optimization-based approach is
novel and does not require developing specific homotopy-like
computational algorithms.
\end{abstract}

\textit{Keywords:} discrete time, linear systems, random
disturbance, stochastic uncertainty, norm, anisotropy, state
feedback, full-order, fixed-order controller, static output
feedback, convex optimization, reciprocal matrices

\section{Introduction}

The stochastic uncertainty of random disturbances regarded as a
discrepancy between an inexactly known probability distribution of
a real-world noise and its nominal model can significantly degrade
the designed performance of a control system if the applied
controller synthesis procedure relies upon a specific probability
law of the disturbance and the assumption that it is known
precisely. Such situations can also result from the inherent
variability of the conditions of the control system operational
environment. So, the $\H2$ and $\Hinf$ controllers are efficient
in full only if the basic hypotheses on the nature of external
disturbances are met closely enough. It is known that the $\H2$
(or LQG) controller may perform poorly if the input disturbance is
a strongly correlated noise~\cite{Doyle_1978}, while the $\Hinf$
controller designed for the deterministic worst
case~\cite{DGKF_1989} demonstrates excessive conservatism if the
external disturbance is white or weakly correlated random signal.

One of the first ideas aimed at overcoming the lack of performance
of the LQG controller in the case when the external disturbance is
not the Gaussian white noise arose in
work~\cite{Jacobson_1977_book} devoted to some modification of the
performance criterion. This idea gave rise to development of the
whole class of problems in the control theory called the risk
sensitivity problems~\cite{Whittle_1981,Whittle_1989}.

The ideas of deriving controller which combines the positive
features of LQG ($\H2$) and $\Hinf$ controllers (i.e. minimizes
the quadratic cost sufficiently good and is robust enough)
appeared in the beginning of 1990's. In particular, one can
distinguish an approach concerned with minimization of $\H2$ norm
of the closed-loop system under constraints on its $\Hinf$
norm~\cite{BH_1989} and approach related to minimization of
$\Hinf$ entropy functional under constraints on the closed-loop
$\Hinf$ norm~\cite{MG_1991}.

As is shown in~\cite{Glover&Doyle_1988}, the problem of synthesis
of a controller which minimizes the $\Hinf$ entropy functional is
equivalent in a sense to the problem of optimal risk-sensitive
(LEQG) controller synthesis. A lot of papers are devoted to the
problems concerned with minimization of the $\Hinf$ entropy
functional (see
e.g.~\cite{Iglesias&Mustafa&Glover_1990}--\cite{Fridman&Shaked_2000}).

The ideas of the mixed $\H2/\Hinf$ control first introduced
in~\cite{BH_1989} were extended in~\cite{ZGBD_1994a,RS_1998} based
on splitting the external disturbance into signals with bounded
spectrum and bounded power and using the multi-objective
$\H2/\Hinf$ performance criterion. A solution to the stochastic
mixed $\H2/\Hinf$ control problem for the discrete-time systems is
given in~\cite{Miradore&Ricci_2005}.

All of the works mentioned above exploit the techniques based on
solving certain (sometimes cross-coupled) Riccati equations.
In~\cite{Khargonekar&Rotea_1991} the mixed $\H2/\Hinf$ problem was
considered in terms of algebraic Riccati inequalities rather than
equations and solved by means of convex optimization. Since then
the efficient interior-point algorithms for solving convex
optimization problems had been
developed~\cite{BG_1993}--\cite{NG_1994}, 
convex optimization has become a standard strategy for control
system analysis and synthesis. The linear matrix inequalities have
proved to be a powerful formulation and design technique for a
variety of linear problems~\cite{Boyd_book_1994}. After the
$\Hinf$ controller synthesis problem had been solved via
LMI~\cite{IS_1994,GA_1994}, the semidefinite programming was
successfully applied to developing effective solutions to
multi-objective $\H2/\Hinf$ control
problems~\cite{Scherer_1995}--\cite{Yu_2004}. 
A detailed survey of these extensive results is far beyond the
topic of this paper and may be presented elsewhere.

An approach to attenuation of uncertain stochastic disturbances
based on minimax control was proposed in the middle of
1990's~\cite{Gusev_1995a}--\cite{Gusev_1996} 
and  extended later to the MIMO systems and synthesis of
structured controllers via LMI in~\cite{Scherer_2000_stoch}.
Instead of exact knowledge of the disturbance's covariance
coefficients, it is only required that the covariance coefficients
belong to an a priori known set. The designed controller minimizes
the worst possible asymptotic output variance for all these
disturbances. The considered problem is intermediate between the
extreme $\H2$ and $\Hinf$ design scenarios and reduces to a robust
control problem with uncertainty in the external disturbance
signal~\cite{Scherer_2000_stoch}.

At the same time, another promising stochastic minimax alternative
had emerged from ideas of I.G.\,Vladimirov who originally
developed the anisotropy-based theory of robust stochastic
control presented in a series of
papers~\cite{SVK_1994}--\cite{VKS_1996_1}. 
In the view of this approach, the robustness in stochastic control
is achieved by explicitly incorporating different scenarios of the
noise distribution into a single performance index to be
optimized; the statistical uncertainty is measured in entropy
theoretic terms, and the robust performance index can be chosen so
as to quantify the worst-case disturbance attenuation capabilities
of the system. The main concepts of the anisotropy-based approach
to robust stochastic control are the anisotropy of a random vector
and anisotropic norm of a system.

The anisotropy functional introduced by I.G.\,Vladimirov is an
entropy theoretic measure of the deviation of a probability
distribution in Euclidean space from Gaussian distributions with
zero mean and scalar covariance matrices. The mean anisotropy of a
stationary random sequence is defined as the anisotropy production
rate per time step for long segments of the sequence. In
application to random disturbances, the mean anisotropy describes
the amount of statistical uncertainty which is understood as the
discrepancy between the imprecisely known actual noise
distribution and the family of nominal models which consider the
disturbance to be a stationary Gaussian white noise sequence with
a scalar covariance matrix~\cite{VKS_1996_1,DVKS_2001}.

Another fundamental concept of I.G.\,Vladimirov's theory is the
$a$-anisotropic norm of a linear discrete time invariant (LDTI)
system which quantifies the disturbance attenuation capabilities
by the largest ratio of the power norm of the system output to
that of the input provided that the mean anisotropy of the input
disturbance does not exceed a given nonnegative level
$a$~\cite{VKS_1996_1,DVKS_2001}. A generalization of the
anisotropy-based robust performance analysis to finite horizon
time varying systems is developed in~\cite{VDK_2006}.

In the context of robust stochastic control design aimed at
suppressing the potentially harmful effects of statistical
uncertainty, the anisotropy-based approach offers an important
alternative to those control design procedures that rely on a
precisely known specific probability law of the disturbance and
the assumption that it is known precisely. Minimization of the
anisotropic norm of the closed-loop system as a performance
criterion leads to internally stabilizing dynamic output-feedback
controllers that are less conservative than the $\Hinf$
controllers and more efficient for attenuating the correlated
disturbances than the $\H2$ controllers~\cite{DVKS_2001}. A
state-space solution to the anisotropic optimal control problem
derived by I.G.\,Vladimirov in~\cite{VKS_1996_2} involves the
solution of three cross-coupled algebraic Riccati equations, an
algebraic Lyapunov equation and an equation on the determinant  of
a related matrix. The resulted optimal full-order estimator-based
(central) controller is a unique one. An extension of these
results to the systems with parametric uncertainties was given
in~\cite{KM_2005_IFAC,KM_2006_AiT}. But solving these complex
systems of equations  requires special developing of homotopy-like
numerical algorithms~\cite{DKSV_1997_report}. Besides, the applied
equation-based synthesis procedure is not aimed at the synthesis
of reduced- or fixed-order (decentralized, structured,
multi-objective) controllers which still remains open. Moreover,
although the ideas of entropy-constrained induced norms and
associated stochastic minimax find further development in the
control literature~\cite{CR_2007}, the anisotropy-based theory of
stochastic robust control remains largely unnoticed. One of the
reasons seems to be hard numerical tractability of the analysis
and synthesis problems as well as a lack of additional degrees of
freedom in the controller synthesis procedure.

The  anisotropic suboptimal controller design is a natural
extension of the approach proposed by I.G.\,Vladimirov
in~\cite{VKS_1996_2}. Instead of minimizing the anisotropic norm
of the closed-loop system, a suboptimal controller is only
required to keep it below a given threshold value. Rather than
resulting in a unique controller, the suboptimal synthesis yields
a family of controllers, thus providing freedom to impose some
additional specifications on the closed-loop system. One of such
specifications, for example, may be a particular pole placement to
achieve desirable transient performance. Getting a solution to the
anisotropic suboptimal controller synthesis problem requires a
state-space criterion to verify whether the anisotropic norm of a
system does not exceed a given value. An Anisotropic Norm Bounded
Real Lemma (ANBRL) as a stochastic counterpart of the well-known
$\Hinf$ norm Bounded Real Lemma for LDTI systems under
statistically uncertain stationary Gaussian random disturbances
with limited mean anisotropy was presented in~\cite{KMT_2010}. The
resulting criterion has the form of an inequality on the
determinant of a matrix associated with an algebraic Riccati
equation which depends on a scalar parameter. A similar criterion
for linear discrete time varying systems involving a
time-dependent inequality and difference Riccati equation is
derived in~\cite{MKV_2011}. Recently, a sufficient strict version
of ANBRL was introduced in~\cite{TKT_2011_IFAC,TK_2011_prep} in
form of a convex feasibility problem employing a strict inequality
in the determinant of a positive-definite matrix and a related
LMI. Moreover, the determinant constraint turns out to depend
linearly on the squared threshold value, thus allowing to minimize
it directly subject to the convex constraints and compute the
$a$-anisotropic norm of a LDTI system as a solution to the convex
optimization problem~\cite{TK_2011_prep}. The developed analysis
procedure is numerically attractive and easily realizable by means
of available convex optimization
software~\cite{Sturm_1999,Lofberg_2004}. This paper is aimed at
application of the powerful technique of convex optimization to
synthesis of the anisotropic suboptimal and $\gamma$-optimal
controllers generally of fixed order. The anisotropic controller
seems to offer a promising and flexible trade-off between $\H2$
and $\Hinf$ controllers. In comparison with the state-space
solution to anisotropic optimal controller synthesis problem
derived before in~\cite{VKS_1996_2}, the proposed
optimization-based approach is novel and does not require
developing specific homotopy-like computational
algorithms~\cite{DKSV_1997_report}.

The structure of the paper is as follows. In
Section~\ref{sect:problem statement} we give the statement of the
general problem of synthesis of the fixed-order anisotropic
suboptimal controller. In Section~\ref{sect:problem solution} we
introduce a solution to the general fixed-order synthesis problem
and consider three important design cases: static state-feedback
gain for full-information case, dynamic output-feedback
controller, and static output-feedback
 gain. Section~\ref{sect:numerical example} provides a
number of illustrative numerical examples. Concluding remarks are
given in Section~\ref{sect:conclusion}.

\subsection{Notation}

The set of reals is denoted by $\R,$ the set of real $(n\times
m)$- matrices is denoted by $\R^{n\times m}.$ For a complex matrix
$M = [m_{ij}]$, $M^\ast$ denotes the Hermitian conjugate of the
matrix: $M^\ast:= [m^\ast_{ji}].$ For a real matrix $M =
[m_{ij}]$, $M^\T$ denotes the transpose of the matrix: $M^\T :=
[m_{ji}].$ For real symmetric matrices, $M\succ N$ stands for
positive definiteness of $M-N.$ In block symmetric matrices,
symbol $\ast$ replaces blocks that are readily inferred by
symmetry. The spectral radius of a matrix $M$ is denoted by
$\rho(M):=\max_k|\lambda_k(M)|,$ where $\lambda_k(M)$ is $k$-th
eigenvalue of the matrix $M.$ The maximum singular value of a
complex matrix $M$ is denoted by
$\ol{\sigma}(M):=\sqrt{\lambda_{\max}(M^\ast M)}.$ $I_n$ denotes
$(n\times n)$ identity matrix, $0_{n\times m}$ denotes zero
$(n\times m)$ matrix. The dimensions of zero matrices, where they
can be understood from the context, will be omitted for the sake
of brevity.

The angular boundary value of a transfer function $F(z)$ analytic
in the unit disc of the complex plane $|z|<1$ is denoted by
$$
{\widehat F}(\omega) :=\lim_{r\rightarrow 1-}{F(r\e^{i\omega})}.
$$
$\mathcal{H}_2^{p\times m}$ denotes the Hardy space of ($p\times
m$)-matrix-valued transfer functions $F(z)$ of a complex variable
$z$ which are analytic in the unit disc $|z|<1$ and have bounded
$\H2$ norm
$$
\|F\|_2 := \left(
\frac{1}{2\pi}\int\limits_{-\pi}^{\pi}{\tr(\wh{F}(\omega)\wh{F}^\ast(\omega))d\omega}
\right)^{1/2}.
$$
$\mathcal{H}_\infty^{p\times m}$ denotes the Hardy space of
($p\times m$)-matrix-valued transfer functions $F(z)$ of a complex
variable $z$ which are analytic in the unit disc $|z|<1$ and have
bounded $\Hinf$ norm
$$
\|F\|_\infty := \sup_{|z|\geqslant1}\ol{\sigma}(F(z)) =
\esssup_{-\pi\leqslant\omega\leqslant\pi}\ol{\sigma}(\wh{F}(\omega)).
$$

\section{Problem statement}\label{sect:problem statement}

Consider a LDTI plant $P(z)$ with $n_x$-di\-men\-si\-o\-nal
internal state $X,$ $m_w$-di\-men\-si\-o\-nal disturbance input
$W,$ $m_u$-di\-men\-si\-o\-nal control input $U,$
$p_z$-di\-men\-si\-o\-nal controlled output $Z,$ and
$p_y$-di\-men\-si\-o\-nal measured output $Y.$ All these signals
are double-sided discrete-time sequences related to each other by
the equations
\begin{equation}\label{eq:standard plant}
P(z):\,\,\left[
\begin{array}{c}
x_{k+1}\\ z_k\\ y_k
\end{array}
\right] = \left[
\begin{array}{ccc}
A & B_w & B_u\\ C_z & D_{zw} & D_{zu}\\ C_y & D_{yw} & 0
\end{array}
\right] \left[
\begin{array}{c}
x_{k}\\ w_k\\ u_k
\end{array}
\right],\quad
     -\infty < k < +\infty,
\end{equation}
where all matrices are assumed to be of appropriate dimensions and
$p_z \leqslant m_w;$  $(A,B_u)$ and $(A,C_y)$ are assumed to be
stabilizable and detectable.

The only prior information on the probability distribution of the
disturbance sequence $W = (w_k)_{-\infty<k<+\infty}$ is as
follows. It is assumed that $W$ is a stationary sequence of random
vectors $w_k$ with zero mean $\E w_k = 0,$ unknown covariance
matrix $\E w_kw_k^\T  = \Sigma_W\succ0,$ and Gaussian PDF
 $$
p(w_k) :=
(2\pi)^{-m_1/2}(\det{\Sigma_W})^{-1/2}\exp\left(-\frac{1}{2}\|w_k\|_{\Sigma_W^{-1}}^2\right),
 $$
where $\|w_k\|_{\Sigma_W^{-1}} = \sqrt{w_k^\T\Sigma_W^{-1}w_k}$
and $\E$ denotes the expectation. It is also assumed that the mean
anisotropy of the sequence $W$ is bounded by a nonnegative
parameter $a$. The latter means that $W$ can be produced from
$m_w$-dimensional Gaussian white noise $V=
(v_k)_{-\infty<k<+\infty}$ with zero mean $\E v_k=0$ and scalar
covariance matrix $\E v_kv_k^\T = \lambda
  I_{m_1},$ $\lambda>0$, by an unknown stable LTI shaping filter $G(z)$
  in the family
\begin{equation*}\label{eq:G_alpha def}
\GG_a := \left\{ G \in \mathcal{H}_2^{m\times m}\colon\ \MA(G)
\leqslant a \right\},
\end{equation*}
where
\begin{equation*}\label{eq:meananiso functional}
    \MA(G) = -\frac{1}{4\pi}\, \int\limits_{-\pi}^{\pi} \ln\det
    \left(\frac{m_w}{\|G\|_2^2}\, {\widehat G}(\omega) {\widehat
    G}^\ast(\omega)\right)\, d\omega
\end{equation*}
is the mean anisotropy functional~\cite{VKS_1996_1,DVKS_2001}.

We are generally interested in finding a fixed-order dynamic
output-feedback controller in general compensator form
\begin{equation}\label{eq:dynamic controller}
K(z):\,\,\left[
\begin{array}{c}
\xi_{k+1}\\ u_k
\end{array}
\right] = \left[
\begin{array}{cc}
A_\c & B_\c\\ C_\c & D_\c
\end{array}
\right]\left[
\begin{array}{c}
\xi_k\\ y_k
\end{array}
\right],\quad -\infty < k < +\infty,
\end{equation}
with $n_\xi$-dimensional internal state
$\Xi=(\xi_k)_{-\infty<k<+\infty}$ to ensure stability of the
closed-loop system (Figure~\ref{fig:closed-loop system}) and
guarantee some designed level of the external disturbance
attenuation performance.

\begin{figure}[thpb]
      \centering
      \includegraphics[scale=1]{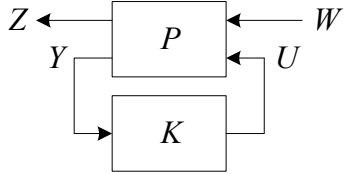}
      \caption{Closed-loop system}\label{fig:closed-loop system}
\end{figure}

Let $T_{ZW}(z)$ denote the closed-loop transfer function from $W$
to $Z$. Recall that the $a$-anisotropic norm of a transfer
function $T_{ZW}(z)\in\mathcal{H}_\infty^{p_z\times m_w}$
quantifies the disturbance attenuation capabilities of the
respective closed-loop system by the largest ratio of the power
norm of the system output to that of the input provided that the
mean anisotropy of the input disturbance does not exceed the level
$a$~\cite{VKS_1996_1,DVKS_2001}:
\begin{equation}\label{eq:aninorm def}
\sn T_{ZW}\sn_a :=
\sup_{G\in\GG_a}{\frac{\|T_{ZW}G\|_2}{\|G\|_2}}.
\end{equation}
Moreover, it is known from~\cite{VKS_1996_1,DVKS_2001} that the
$a$-anisotropic norm of a given system
$F\in\mathcal{H}_\infty^{p_z\times m_w}$ is a nondecreasing
continuous function of the mean anisotropy level $a$ which
satisfies
\begin{equation}
\label{limits}
    \frac{1}{\sqrt{m_w}}
    \|T_{ZW}\|_2
    =
    \sn T_{ZW} \sn_0
    \leqslant
    \lim_{a\to+\infty}
    \sn T_{ZW} \sn_a
    =
    \|T_{ZW}\|_\infty.
\end{equation}
These relations show that the $\H2$ and $\Hinf$ norms are the
limiting cases of the $a$-anisotropic norm  as $a\to 0, +\infty$,
respectively.

The statement of the general problem of synthesis of the
fixed-order anisotropic suboptimal controller is as follows.

\begin{problem}\label{problem:anisotropic suboptimal design}
Given a LDTI plant $P$ described by~(\ref{eq:standard plant}), a
mean anisotropy level $a\geqslant 0$ of the external disturbance
$W$, and some designed threshold value $\gamma>0$, find a
fixed-order LDTI output-feedback controller $K$ defined
by~(\ref{eq:dynamic controller}) which internally stabilizes the
closed-loop system and ensures its $a$-anisotropic norm does not
exceed the threshold $\gamma$, i.e.
\begin{equation}\label{eq:anisotropic suboptimality condition}
\sn T_{ZW}\sn_a < \gamma.
\end{equation}
\end{problem}

\section{Problem solution}\label{sect:problem solution}

Here we introduce a solution to the general fixed-order synthesis
problem and consider three important design cases, namely static
state-feedback gain for full-information case, dynamic
output-feedback controller, and static output-feedback gain. To
solve the synthesis problem, we apply a state-space criterion to
verify if the anisotropic norm  of a system does not exceed a
given threshold value. This criterion called the Strict
Anisotropic Norm Bounded Real Lemma (SANBRL) was recently
presented in~\cite{TKT_2011_IFAC,TK_2011_prep}. But to apply
SANBRL to the synthesis problem we should recast it in slightly
different form.

\subsection{Anisotropic norm bounded real lemma}\label{subsect:BRL}

With the plant $P$ and controller $K$ defined as above, the
closed-loop system admits the realization
\begin{equation}\label{eq:CL system equations}
T_{ZW}(z):\,\,\left[
\begin{array}{c}
\chi_{k+1}\\ z_k
\end{array}
\right] = \left[
\begin{array}{cc}
\EuScript{A} & \EuScript{B}\\ \EuScript{C} & \EuScript{D}
\end{array}
\right]\left[
\begin{array}{c}
\chi_k\\ w_k
\end{array}
\right],\quad -\infty < k < +\infty,
\end{equation}
where $\chi_k\in\R^n$, $n = n_x+n_\xi$.

It is shown in~\cite{TKT_2011_IFAC,TK_2011_prep} that given
$a\geqslant 0$, $\gamma > 0$, the inequality~(\ref{eq:anisotropic
suboptimality condition}) holds true if there exists
$\eta>\gamma^2$ such that the inequality
\begin{equation}
    \label{eq:determinant inequality gamma2 linear}
    \eta-(\det{(\e^{-2a/m_w}(\eta I_{m_w}-\EuScript{B}^\T \Phi\EuScript{B}-\EuScript{D}^\T \EuScript{D}))})^{1/m_w}
    < \gamma^2
\end{equation}
holds for a real $(n\times n)$-matrix $\Phi=\Phi^\T\succ 0$
satisfying LMI
\begin{equation}\label{eq:aninorm LMI 2x2 in Phi eta}
\left[
\begin{array}{cc}
\EuScript{A}^\T\Phi\EuScript{A}-\Phi+\EuScript{C}^\T\EuScript{C} & \EuScript{A}^\T\Phi\EuScript{B}+\EuScript{C}^\T\EuScript{D}\\
\EuScript{B}^\T\Phi\EuScript{A}+\EuScript{D}^\T\EuScript{C} &
\EuScript{B}^\T\Phi\EuScript{B}+\EuScript{D}^\T\EuScript{D}-\eta
I_{m_w}
\end{array}
\right] \prec 0.
\end{equation}
Note that the constraints described by the
inequalities~(\ref{eq:determinant inequality gamma2 linear}) and
(\ref{eq:aninorm LMI 2x2 in Phi eta}) are convex with respect to
both variables $\eta$ and $\Phi.$ Indeed, the function
$-(\det(\cdot))^{1/m_w}$ of a positive definite $(m_w\times
m_w)$-matrix on the left-hand side of~(\ref{eq:determinant
inequality gamma2 linear}) is convex; see~\cite{NN_1994,BTN_2000}.

Being convex in both variables $\eta$ and $\Phi$, the
conditions~(\ref{eq:determinant inequality gamma2 linear}),
(\ref{eq:aninorm LMI 2x2 in Phi eta}) of
SANBRL~\cite{TKT_2011_IFAC,TK_2011_prep} are not directly
applicable to solving the intended synthesis problem because of
the cross-products of the unknown Lyapunov matrix $\Phi$ and the
closed-loop realization matrices
$(\EuScript{A},\EuScript{B},\EuScript{C},\EuScript{D})$ depending
affinely on the controller parameters, which also appear
in~(\ref{eq:determinant inequality gamma2 linear}). Moreover, just
the inequality~(\ref{eq:determinant inequality gamma2 linear})
does not allow for the well-known Projection
Lemma~\cite{GA_1994,IS_1994} to be applied to get rid of the
controller realization matrices in the synthesis inequalities.

To overcome this obstacle, let us first move the positive definite
matrix $\eta I_{m_w}-\EuScript{B}^\T
\Phi\EuScript{B}-\EuScript{D}^\T \EuScript{D}$ away from the
determinant in~(\ref{eq:determinant inequality gamma2 linear}) by
introducing a slack variable, real $(m_w\times m_w)$-matrix
$\Psi=\Psi^\T\succ 0$ such that
\begin{equation}\label{eq:one-constraint S-procedure inequalities}
    \eta-(\det{(\e^{-2a/m_w}{\Psi})})^{1/m_w} < \gamma^2,\quad
    \Psi\prec\eta I_{m_w}-\euB^\T\Phi \euB-\euD^\T
    \euD
\end{equation}
which is equivalent to~(\ref{eq:determinant inequality gamma2
linear}). Then, let us decouple the cross-products of $\Phi$,
$\euB$, and $\euD$ in~(\ref{eq:one-constraint S-procedure
inequalities}). For this purpose, the latter inequality
in~(\ref{eq:one-constraint S-procedure inequalities}) can be
rewritten as
$$
\Psi-\eta I_{m_w}-\left[
\begin{array}{cc}
\euB^\T & \euD^\T
\end{array}
\right]\left[
\begin{array}{cc}
-\Phi^{-1} & 0\\
0 & -I_{p_z}
\end{array}
\right]^{-1}\left[
\begin{array}{c}
\euB\\ \euD
\end{array}
\right] \prec 0,
$$
where $\left[
\begin{array}{cc}
-\Phi^{-1} & 0\\
0 & -I_{p_z}
\end{array}
\right] \prec 0$, which is equivalent to
\begin{equation}\label{eq:Phi Psi eta LMI 3x3}
\left[
\begin{array}{ccc}
\Psi-\eta I_{m_w} & \euB^\T & \euD^\T\\
\euB & -\Phi^{-1} & 0\\
\euD & 0 & -I_{p_z}
\end{array}
\right]\prec 0
\end{equation}
by virtue of the Schur Theorem; see e.g.~\cite{Boyd_book_1994}.

To decouple the cross-products of $\Phi$, $\euA$, and $\euB$
in~(\ref{eq:aninorm LMI 2x2 in Phi eta}), represent it as
$$
\left[
\begin{array}{cc}
-\Phi+\euC^\T\euC & \euC^\T\euD\\
\euD^\T\euC & \-\eta I_{m_w}+\euD^\T\euD
\end{array}
\right]-\left[ \begin{array}{c} \euA^\T\\\euB^\T
\end{array} \right](-\Phi^{-1})^{-1}\left[ \begin{array}{cc} \euA &
\euB
\end{array} \right] \prec 0
$$
where $-\Phi^{-1}\prec 0$ evidently. Then by the Schur Theorem the
last inequality is equivalent to
\begin{equation}\label{eq:aninorm LMI 3x3 in Phi eta}
\left[
\begin{array}{ccc}
-\Phi+\euC^\T\euC & \euC^\T\euD &
\euA^\T\\
\euD^\T\euC & \euD^\T\euD-\eta I_{m_w} & \euB^\T\\
\euA & \euB & -\Phi^{-1}
\end{array}
\right]\prec 0.
\end{equation}
To decouple the cross-products of $\euC$ and $\euD$, let us
represent the inequality~(\ref{eq:aninorm LMI 3x3 in Phi eta}) as
$$
\left[
\begin{array}{ccc}
-\Phi & 0 &
\euA^\T\\
0 & -\eta I_{m_w} & \euB^\T\\
\euA & \euB & -\Phi^{-1}
\end{array}
\right]\\
-\left[
\begin{array}{c}
\euC^\T\\ \euD^\T\\ 0
\end{array}
\right](-I_{p_z})^{-1}\left[
\begin{array}{ccc}
\euC & \euD & 0
\end{array}
\right]\prec 0
$$
where $-I_{p_z}\prec 0$ clearly. Second application of the Schur
Theorem to the above inequality gives the following formulation of
SANBRL in reciprocal matrices.

\begin{lemma}\label{lemma:SANBRL inverse matrices}
Let $T_{ZW} \in \mathcal{H}_{\infty}^{p_z \times m_w}$ be a system
with the state-space realization~(\ref{eq:CL system equations}),
where $\rho(\EuScript{A})<1$. Then its $a$-anisotropic norm
(\ref{eq:aninorm def}) is strictly bounded by a given threshold
$\gamma>0$, i.e. $\sn T_{ZW}\sn_a<\gamma$ if there exists
$\eta>\gamma^2$ such that the inequality
\begin{equation}\label{eq:det inequality ANBRL inverse matrices}
\eta-(\det{(\e^{-2a/m_w}{\Psi})})^{1/m_w} < \gamma^2
\end{equation}
holds true for some real $(m_w\times m_w)$-matrix
$\Psi=\Psi^\T\succ 0$
 and $(n\times n)$-matrix
$\Phi=\Phi^\T\succ 0$ satisfying inequalities
\begin{equation}\label{eq:LMI 3x3 ANBRL inverse matrices}
\left[
\begin{array}{ccc}
\Psi-\eta I_{m_w} & \euB^\T & \euD^\T\\
\euB & -\Phi^{-1} & 0\\
\euD & 0 & -I_{p_z}
\end{array}
\right]\prec 0,
\end{equation}
\begin{equation}\label{eq:LMI 4x4 ANBRL inverse matrices}
\left[
\begin{array}{cccc}
-\Phi & 0 & \euA^\T & \euC^\T\\
0 & -\eta I_{m_w} & \euB^\T & \euD^\T\\
\euA & \euB & -\Phi^{-1} & 0\\
\euC & \euD & 0 & -I_{p_z}
\end{array} \right] \prec 0.
\end{equation}
\end{lemma}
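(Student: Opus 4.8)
The plan is to prove the lemma by showing that its hypotheses are nothing but a reformulation, in reciprocal (inverse) matrices, of the conditions of the Strict Anisotropic Norm Bounded Real Lemma from \cite{TKT_2011_IFAC,TK_2011_prep}, namely the determinant inequality (\ref{eq:determinant inequality gamma2 linear}) together with the LMI (\ref{eq:aninorm LMI 2x2 in Phi eta}) for some $\Phi=\Phi^\T\succ0$; once that reduction is in place, $\sn T_{ZW}\sn_a<\gamma$ follows immediately from SANBRL. The argument is precisely the chain of manipulations sketched in the paragraphs preceding the lemma, reorganized into a single implication, and every step is either an application of the Schur complement lemma or the monotonicity of $\det(\cdot)$ on the cone of positive definite matrices.

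First I would recall SANBRL in its original form: $\sn T_{ZW}\sn_a<\gamma$ holds whenever there is $\eta>\gamma^2$ with $\eta I_{m_w}-\euB^\T\Phi\euB-\euD^\T\euD\succ0$, such that (\ref{eq:determinant inequality gamma2 linear}) and (\ref{eq:aninorm LMI 2x2 in Phi eta}) hold for some $\Phi=\Phi^\T\succ0$. Next I would introduce the slack matrix $\Psi=\Psi^\T\succ0$ and note that the pair (\ref{eq:one-constraint S-procedure inequalities}) implies (\ref{eq:determinant inequality gamma2 linear}): from $0\prec\Psi\prec\eta I_{m_w}-\euB^\T\Phi\euB-\euD^\T\euD$ the right-hand matrix is positive definite, and strict monotonicity of $\det(\cdot)$ together with monotonicity of $t\mapsto t^{1/m_w}$ on $(0,\infty)$ gives $(\det(\e^{-2a/m_w}\Psi))^{1/m_w}<(\det(\e^{-2a/m_w}(\eta I_{m_w}-\euB^\T\Phi\euB-\euD^\T\euD)))^{1/m_w}$, so the first inequality of (\ref{eq:one-constraint S-procedure inequalities}) — which is exactly (\ref{eq:det inequality ANBRL inverse matrices}) — forces (\ref{eq:determinant inequality gamma2 linear}). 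Note also that the second diagonal block of (\ref{eq:aninorm LMI 2x2 in Phi eta}), being negative definite, already guarantees $\eta I_{m_w}-\euB^\T\Phi\euB-\euD^\T\euD\succ0$, so the determinant appearing there is well defined.

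It then remains to recast the two matrix conditions — the Lyapunov-type LMI (\ref{eq:aninorm LMI 2x2 in Phi eta}) and the slack constraint $\Psi\prec\eta I_{m_w}-\euB^\T\Phi\euB-\euD^\T\euD$ — in reciprocal form. For the slack constraint I would write it as $(\Psi-\eta I_{m_w})-\begin{bmatrix}\euB^\T&\euD^\T\end{bmatrix}\begin{bmatrix}-\Phi^{-1}&0\\0&-I_{p_z}\end{bmatrix}^{-1}\begin{bmatrix}\euB\\\euD\end{bmatrix}\prec0$; since $\Phi\succ0$ makes the middle block negative definite, the Schur complement lemma turns this into (\ref{eq:LMI 3x3 ANBRL inverse matrices}). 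For (\ref{eq:aninorm LMI 2x2 in Phi eta}) I would apply the Schur complement twice: first extract the term $\begin{bmatrix}\euA^\T\\\euB^\T\end{bmatrix}(-\Phi^{-1})^{-1}\begin{bmatrix}\euA&\euB\end{bmatrix}$ (legitimate as $-\Phi^{-1}\prec0$) to obtain the $3\times3$ LMI (\ref{eq:aninorm LMI 3x3 in Phi eta}), then extract $\begin{bmatrix}\euC^\T\\\euD^\T\\0\end{bmatrix}(-I_{p_z})^{-1}\begin{bmatrix}\euC&\euD&0\end{bmatrix}$ to obtain (\ref{eq:LMI 4x4 ANBRL inverse matrices}). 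Collecting the equivalences, (\ref{eq:LMI 4x4 ANBRL inverse matrices})$\Leftrightarrow$(\ref{eq:aninorm LMI 2x2 in Phi eta}), (\ref{eq:LMI 3x3 ANBRL inverse matrices})$\Leftrightarrow$ the slack constraint, and these together with (\ref{eq:det inequality ANBRL inverse matrices}) yield (\ref{eq:determinant inequality gamma2 linear}); hence the hypotheses of the lemma imply those of SANBRL for the same $\eta$ and $\Phi$, and the conclusion follows.

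I do not anticipate a genuine obstacle here, since each step is an exact Schur-complement identity already indicated in the preceding discussion. The only points requiring care are routine bookkeeping: checking the negative-definiteness of the pivot blocks ($-\Phi^{-1}\prec0$ because $\Phi\succ0$, and $-I_{p_z}\prec0$) that licenses each invocation of the Schur complement lemma, keeping track of which variables are "consumed" so that the surviving conditions are precisely those of SANBRL, and observing that the strict slack inequality $\Psi\prec\eta I_{m_w}-\euB^\T\Phi\euB-\euD^\T\euD$ (equivalently (\ref{eq:LMI 3x3 ANBRL inverse matrices})) is exactly what is needed to pass from (\ref{eq:det inequality ANBRL inverse matrices}) to (\ref{eq:determinant inequality gamma2 linear}).
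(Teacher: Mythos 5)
Your proposal is correct and follows essentially the same route as the paper: the paper itself establishes the lemma as the culmination of exactly this chain — introducing the slack variable $\Psi$, using monotonicity of the determinant to recover the original SANBRL determinant condition, and applying the Schur complement (with the negative definite pivots $-\Phi^{-1}$ and $-I_{p_z}$) to convert the $2\times2$ LMI into the $3\times3$ and $4\times4$ reciprocal-matrix forms. Your reorganization of these equivalences into a single implication, together with the observation that only the sufficiency direction is needed, is precisely the intended argument.
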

Thus, with the notation $\Pi := \Phi^{-1}$, verifying if the
condition $\sn T_{ZW} \sn_a < \gamma$ holds true reduces to
finding a positive scalar $\eta$ and two matrices $\Phi\succ 0$,
$\Pi\succ 0$, $\Phi\Pi = I_n$, satisfying the LMIs~(\ref{eq:LMI
3x3 ANBRL inverse matrices}), (\ref{eq:LMI 4x4 ANBRL inverse
matrices}) under the convex constraint~(\ref{eq:det inequality
ANBRL inverse matrices}) or making sure of insolvability of this
problem. For solving this nonconvex problem numerically, one can
make use of known algorithms developed
in~\cite{IS_1995_XY_centering}--\cite{Polyak_Gryazina_2010}
suitable for finding reciprocal matrices under convex constraints.

\subsection{State-feedback controller}\label{subsect:state
feedback}

Before to proceed to general synthesis
Problem~\ref{problem:anisotropic suboptimal design}, let us
consider the full-information case, when the state vector can be
measured precisely and the plant is described by the equations
\begin{equation}\label{eq:plant full information}
P(z):\,\,\left[
\begin{array}{c}
x_{k+1}\\ z_k\\ y_k
\end{array}
\right] = \left[
\begin{array}{ccc}
A & B_w & B_u\\ C_z & D_{zw} & D_{zu}\\ I_{n_x} & 0 & 0
\end{array}
\right] \left[
\begin{array}{c}
x_{k}\\ w_k\\ u_k
\end{array}
\right],\quad
     -\infty < k < +\infty,
\end{equation}
where as above all matrices are assumed to be of appropriate
dimensions and $p_z \leqslant m_w;$ $(A,B_u)$ is assumed to be
stabilizable.

\begin{problem}\label{problem:aniso sub sf}
Given a LDTI plant $P$ described by~(\ref{eq:plant full
information}), a mean anisotropy level $a\geqslant 0$ of the
external disturbance $W$, and some designed threshold value
$\gamma>0$, find a static state-feedback  controller
\begin{equation}\label{eq:static sf controller}
u_k = Kx_k
\end{equation}
which internally stabilizes the closed-loop system $T_{ZW}(z)$
with the state-space realization
\begin{equation}\label{eq:CL realization sf}
\left[
\begin{array}{c|c}
\EuScript{A} & \EuScript{B}\\\hline \EuScript{C} & \EuScript{D}
\end{array}
\right] = \left[
\begin{array}{c|c}
A+B_uK & B_w\\\hline C_z+D_{zu}K & D_{zw}
\end{array}
\right]
\end{equation}
 and ensures its
$a$-anisotropic norm does not exceed the threshold $\gamma$, i.e.
the inequality~(\ref{eq:anisotropic suboptimality condition})
holds.
\end{problem}

The following theorem gives sufficient conditions for the static
state-feedback anisotropic suboptimal controller to exist.

\begin{theorem}\label{theorem:state-feedback problem solution}
Given $a\geqslant 0$, $\gamma>0$, the state-feedback
controller~(\ref{eq:static sf controller}) stabilizing the
closed-loop system~(\ref{eq:CL realization sf}) ($\rho(A+B_uK)<1$)
and ensuring~(\ref{eq:anisotropic suboptimality condition}) exists
if the convex problem
\begin{equation}\label{eq:det Psi inequality sf}
\eta-(\det{(\e^{-2a/m_w}{\Psi})})^{1/m_w} < \gamma^2,
\end{equation}
\begin{equation}\label{eq:Psi LMI 3x3 sf}
\left[
\begin{array}{ccc}
\Psi-\eta I_{m_w} & B_w^\T & D_{zw}^\T\\
B_w & -\Pi & 0\\
D_{zw} & 0 & -I_{p_z}
\end{array}
\right]\prec 0,
\end{equation}
\begin{equation}\label{eq:aninorm LMI 4x4 sf}
\left[
\begin{array}{cccc}
-\Pi & 0 & \Pi A^\T+\Lambda^\T B_u^\T & \Pi C_z^\T+\Lambda^\T D_{zu}^\T\\
0 & -\eta I_{m_w} & B_w^\T & D_{zw}^\T\\
A\Pi+B_u\Lambda & B_w & -\Pi & 0\\
C_z\Pi+D_{zu}\Lambda & D_{zw} & 0 & -I_{p_z}
\end{array} \right] \prec 0,
\end{equation}
\begin{equation}\label{eq:pos def vars sf}
\eta>\gamma^2,\quad \Psi\succ0,\quad \Pi\succ0
\end{equation}
is feasible with respect to the scalar variable $\eta$, real
$(m_w\times m_w)$-matrix $\Psi$, real $(n_x\times n_x)$-matrix
$\Pi$, and real $(m_u\times n_x)$-matrix~$\Lambda$. If the
problem~(\ref{eq:det Psi inequality sf})--(\ref{eq:pos def vars
sf}) is feasible and the unknown variables have been found, then
the state-feedback controller gain matrix is determined by $K =
\Lambda\Pi^{-1}$.
\end{theorem}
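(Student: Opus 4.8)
The plan is to specialize Lemma~\ref{lemma:SANBRL inverse matrices} to the state-feedback closed-loop realization~(\ref{eq:CL realization sf}) and then perform the standard change of variables that linearizes the resulting matrix inequalities in the controller gain. First I would substitute $\euA = A+B_uK$, $\euB = B_w$, $\euC = C_z+D_{zu}K$, $\euD = D_{zw}$ into~(\ref{eq:LMI 3x3 ANBRL inverse matrices}) and~(\ref{eq:LMI 4x4 ANBRL inverse matrices}). Observe that in this case $\euB$ and $\euD$ do not depend on $K$, so~(\ref{eq:LMI 3x3 ANBRL inverse matrices}) becomes immediately~(\ref{eq:Psi LMI 3x3 sf}) once we write $\Pi := \Phi^{-1}$. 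The determinant constraint~(\ref{eq:det inequality ANBRL inverse matrices}) is unchanged and yields~(\ref{eq:det Psi inequality sf}).

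The substantive step concerns~(\ref{eq:LMI 4x4 ANBRL inverse matrices}). The blocks containing $K$ are $\euA = A+B_uK$ and $\euC = C_z+D_{zu}K$; the diagonal $(3,3)$ block is $-\Phi^{-1} = -\Pi$. I would conjugate the whole $4\times4$ inequality by $\diag(\Pi, I_{m_w}, I_n, I_{p_z})$, which preserves negative definiteness since $\Pi\succ0$. This transforms the $(1,1)$ block $-\Phi$ into $-\Pi\Phi\Pi = -\Pi$, the $(1,3)$ block $\euA^\T$ into $\Pi\euA^\T = \Pi A^\T + \Pi K^\T B_u^\T$, and the $(1,4)$ block $\euC^\T$ into $\Pi\euC^\T = \Pi C_z^\T + \Pi K^\T D_{zu}^\T$; the blocks not touching the first row or column are unchanged. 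Introducing the new variable $\Lambda := K\Pi$ (equivalently $\Lambda^\T = \Pi K^\T$) converts the $K$-dependent entries into the affine expressions $A\Pi+B_u\Lambda$, $C_z\Pi+D_{zu}\Lambda$ and their transposes, giving exactly~(\ref{eq:aninorm LMI 4x4 sf}). The positivity requirements $\eta>\gamma^2$, $\Psi\succ0$, $\Pi\succ0$ in~(\ref{eq:pos def vars sf}) are inherited directly from Lemma~\ref{lemma:SANBRL inverse matrices} together with $\eta>\gamma^2$. Since $\Pi\succ0$ is invertible whenever the LMIs are feasible, the recovery formula $K=\Lambda\Pi^{-1}$ is well defined, and running the change of variables backward shows that this $K$ satisfies the hypotheses of the Lemma, hence $\sn T_{ZW}\sn_a<\gamma$; internal stability $\rho(A+B_uK)<1$ follows because~(\ref{eq:aninorm LMI 4x4 sf}) in particular forces the $(1,1)$/$(3,3)$ principal submatrix $\bigl[\begin{smallmatrix}-\Pi & \Pi\euA^\T\\ \euA\Pi & -\Pi\end{smallmatrix}\bigr]\prec0$, a Lyapunov inequality certifying $\rho(\euA)<1$.

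Finally I would note convexity: the determinant inequality~(\ref{eq:det Psi inequality sf}) is convex in $(\eta,\Psi)$ by the same $-(\det(\cdot))^{1/m_w}$ concavity argument already invoked after~(\ref{eq:aninorm LMI 2x2 in Phi eta}), and~(\ref{eq:Psi LMI 3x3 sf}), (\ref{eq:aninorm LMI 4x4 sf}) are genuine LMIs (affine) in $(\eta,\Psi,\Pi,\Lambda)$—the reciprocal-matrix coupling $\Phi\Pi=I_n$ that plagued the general problem has disappeared because only $\Pi=\Phi^{-1}$ enters. The main obstacle, and the only point requiring care, is the direction of the argument: the congruence transformation and the bijective substitution $\Lambda=K\Pi$ establish that feasibility of~(\ref{eq:det Psi inequality sf})--(\ref{eq:pos def vars sf}) is \emph{equivalent} to the existence of $K$ meeting the conditions of Lemma~\ref{lemma:SANBRL inverse matrices}; one should verify that no degeneracy (e.g. loss of invertibility of $\Pi$) is introduced, which is immediate from $\Pi\succ0$. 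Everything else is the bookkeeping of Schur complements already carried out in Subsection~\ref{subsect:BRL}.
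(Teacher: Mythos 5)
Your proposal is correct and follows essentially the same route as the paper's proof: the congruence transformation by $\blockdiag(\Pi,I_{m_w},I_{n_x},I_{p_z})$ (the paper applies its inverse, $\blockdiag(\Phi,I_{m_w},I_{n_x},I_{p_z})$, in the opposite direction) together with the linearizing substitution $\Lambda=K\Pi$, followed by an appeal to Lemma~\ref{lemma:SANBRL inverse matrices}. Your explicit verification that the principal submatrix of~(\ref{eq:aninorm LMI 4x4 sf}) yields the Lyapunov inequality certifying $\rho(A+B_uK)<1$ is a small but welcome addition, since the Lemma assumes stability as a hypothesis and the paper leaves this step implicit.
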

\begin{proof}
Let a solution to the problem~(\ref{eq:det Psi inequality
sf})--(\ref{eq:pos def vars sf}) exist. Define $\Phi := \Pi^{-1}$.
By definition of $K = \Lambda\Pi^{-1}$, the LMIs~(\ref{eq:Psi LMI
3x3 sf}), (\ref{eq:aninorm LMI 4x4 sf}) can be rewritten as
\begin{equation}\label{eq:LMI 3x3 sf}
\left[
\begin{array}{ccc}
\Psi-\eta I_{m_w} & B_w^\T & D_{zw}^\T\\
B_w & -\Phi^{-1} & 0\\
D_{zw} & 0 & -I_{p_z}
\end{array}
\right]\prec 0,
\end{equation}
$$
\left[
\begin{array}{cccc}
-\Phi^{-1} & 0 & \Phi^{-1}A^\T+\Phi^{-1}K^\T B_u^\T & \Phi^{-1}C_z^\T+\Phi^{-1}K^\T D_{zu}^\T\\
0 & -\eta I_{m_w} & B_w^\T & D_{zw}^\T\\
A\Phi^{-1}+B_uK\Phi^{-1} & B_w & -\Phi^{-1} & 0\\
C_z\Phi^{-1}+D_{zu}K\Phi^{-1} & D_{zw} & 0 & -I_{p_z}
\end{array} \right] \prec 0.
$$
Pre- and post-multiplying the last inequality by
$\blockdiag(\Phi,I_{m_w},I_{n_x},I_{p_z})\succ0$ yields
\begin{equation}\label{eq:LMI 4x4 sf}
\left[
\begin{array}{cccc}
-\Phi & 0 & A^\T+K^\T B_u^\T & C_z^\T+K^\T D_{zu}^\T\\
0 & -\eta I_{m_w} & B_w^\T & D_{zw}^\T\\
A+B_uK & B_w & -\Phi^{-1} & 0\\
C_z+D_{zu}K & D_{zw} & 0 & -I_{p_z}
\end{array} \right] \prec 0.
\end{equation}
Then, by Lemma~\ref{lemma:SANBRL inverse matrices},
from~(\ref{eq:det Psi inequality sf}), (\ref{eq:LMI 3x3 sf}),
(\ref{eq:LMI 4x4 sf}), (\ref{eq:pos def vars sf}) it follows that
the controller gain matrix $K$ is the solution to
Problem~\ref{problem:aniso sub sf} for the closed-loop
realization~(\ref{eq:CL realization sf}), which completes the
proof.
\end{proof}

\begin{remark}
Although it is not hard to prove that the synthesis
inequalities~(\ref{eq:det Psi inequality sf})--(\ref{eq:pos def
vars sf}) and the conditions~(\ref{eq:det inequality ANBRL inverse
matrices})--(\ref{eq:LMI 4x4 ANBRL inverse matrices}) of
Lemma~\ref{lemma:SANBRL inverse matrices} are equivalent, we can
only establish and prove sufficient existence conditions for the
controller~(\ref{eq:static sf controller}) since the conditions of
Lemma~\ref{lemma:SANBRL inverse matrices} are only sufficient.
This also concerns two further synthesis theorems.
\end{remark}

\begin{corollary}
The inequalities~(\ref{eq:det Psi inequality sf})--(\ref{eq:pos
def vars sf}) are not only convex in $\Psi$ and affine with
respect to $\Pi$ and $\Lambda$, but also linear in $\gamma^2.$
Obviously, minimizing $\gamma^2$ under the convex
constraints~(\ref{eq:det Psi inequality sf})--(\ref{eq:pos def
vars sf}), we minimize $\gamma$ under the same constraints. With
the notation $\wh{\gamma} := \gamma^2$, the conditions of
Theorem~\ref{theorem:state-feedback problem solution} allow to
compute the minimal $\gamma$ via solving the convex optimization
problem
\begin{equation}\label{eq:minimum gamma2 ssf}
\begin{array}{c}
\mathrm{minimize}\quad\wh{\gamma}\\
\mathrm{over}\quad \Psi, \Pi, \Lambda, \eta, \wh{\gamma}\quad
\mathrm{satisfying}\quad \mbox{(\ref{eq:det Psi inequality
sf})--(\ref{eq:pos def vars sf})}.
\end{array}
\end{equation}
If the convex problem~(\ref{eq:minimum gamma2 ssf}) is solvable,
the state-feedback controller gain matrix is constructed just as
in Theorem~\ref{theorem:state-feedback problem solution}.
\end{corollary}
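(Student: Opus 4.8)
The plan is to verify the corollary's three assertions in order: joint convexity of the constraint set (\ref{eq:det Psi inequality sf})--(\ref{eq:pos def vars sf}) in all decision variables once $\gamma^2$ is promoted to an independent variable, the equivalence between minimizing $\gamma^2$ and minimizing $\gamma$, and the reduction of the design to the convex program (\ref{eq:minimum gamma2 ssf}) with the gain recovered exactly as in Theorem~\ref{theorem:state-feedback problem solution}.

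First I would examine how $\gamma^2$ enters the constraints. Writing $\wh{\gamma}:=\gamma^2$ and treating it as an independent decision variable, I would observe that $\wh{\gamma}$ appears in only two places: as the right-hand side of the determinant inequality (\ref{eq:det Psi inequality sf}), which rearranges to $\eta-(\det(\e^{-2a/m_w}\Psi))^{1/m_w}-\wh{\gamma}<0$, and in the scalar constraint $\eta>\gamma^2$ of (\ref{eq:pos def vars sf}), which becomes $\eta-\wh{\gamma}>0$. In both cases the dependence on $\wh{\gamma}$ is affine, which is precisely the asserted linearity in $\gamma^2$. The two matrix inequalities (\ref{eq:Psi LMI 3x3 sf}) and (\ref{eq:aninorm LMI 4x4 sf}) do not contain $\gamma^2$ at all, so they are trivially linear in it as well.

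Next I would establish joint convexity. The left-hand side of the rewritten (\ref{eq:det Psi inequality sf}) is the sum of the linear term $\eta$, the convex function $-(\det(\e^{-2a/m_w}\Psi))^{1/m_w}$ of the positive-definite matrix $\Psi$ (convex by \cite{NN_1994,BTN_2000}, as recalled after (\ref{eq:determinant inequality gamma2 linear})), and the linear term $-\wh{\gamma}$; hence it is jointly convex in $(\eta,\Psi,\wh{\gamma})$ and defines a convex sublevel set. The inequalities (\ref{eq:Psi LMI 3x3 sf}) and (\ref{eq:aninorm LMI 4x4 sf}) are affine in the matrix variables $(\Psi,\Pi,\Lambda,\eta)$ — here I would stress that $\Lambda$ enters (\ref{eq:aninorm LMI 4x4 sf}) only through the affine blocks $A\Pi+B_u\Lambda$ and $C_z\Pi+D_{zu}\Lambda$ and their transposes, so that the inequalities are convex in $\Psi$ and affine in $\Pi$ and $\Lambda$ as claimed — and therefore cut out convex sets; the positivity constraints $\Psi\succ0$, $\Pi\succ0$ together with $\eta-\wh{\gamma}>0$ are likewise convex. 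The feasible region, being the intersection of these convex sets, is convex, so minimizing the linear objective $\wh{\gamma}$ over it is a genuine convex program, namely (\ref{eq:minimum gamma2 ssf}).

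Finally I would settle the equivalence of the two objectives and the controller recovery. Because $\gamma\mapsto\gamma^2$ is a strictly increasing bijection of $(0,+\infty)$ onto itself, the substitution $\wh{\gamma}\mapsto\sqrt{\wh{\gamma}}=\gamma$ is strictly monotone on the feasible range; hence a minimizer of $\wh{\gamma}$ is simultaneously a minimizer of $\gamma$, and the optimal threshold is $\gamma^\star=\sqrt{\wh{\gamma}^\star}$. Granting solvability of (\ref{eq:minimum gamma2 ssf}), at the optimum the constraints (\ref{eq:det Psi inequality sf})--(\ref{eq:pos def vars sf}) hold for $\gamma=\gamma^\star$ and some $(\Psi,\Pi,\Lambda,\eta)$ with $\Pi\succ0$; since $\Pi$ is then invertible, Theorem~\ref{theorem:state-feedback problem solution} applies verbatim and yields the stabilizing gain $K=\Lambda\Pi^{-1}$ with $\sn T_{ZW}\sn_a<\gamma^\star$. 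The only delicate point, which I expect to be the main obstacle, is the strictness of the inequalities: the infimum of $\wh{\gamma}$ need not be attained over an open feasible set, which is exactly why the statement is phrased conditionally; I would simply invoke the solvability hypothesis rather than confront attainment directly.
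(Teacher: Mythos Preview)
Your proposal is correct and, in fact, more detailed than what the paper offers: the paper states this corollary without any proof, treating the convexity, the linearity in $\gamma^2$, and the monotone equivalence of minimizing $\gamma^2$ versus $\gamma$ as self-evident from the structure of (\ref{eq:det Psi inequality sf})--(\ref{eq:pos def vars sf}) and the already-established Theorem~\ref{theorem:state-feedback problem solution}. Your explicit check of each ingredient---affine dependence of (\ref{eq:Psi LMI 3x3 sf}) and (\ref{eq:aninorm LMI 4x4 sf}) on $\Pi,\Lambda$, convexity of $-(\det(\cdot))^{1/m_w}$ on the positive-definite cone, strict monotonicity of $\gamma\mapsto\gamma^2$ on $(0,+\infty)$, and the conditional phrasing to sidestep attainment over an open feasible set---is sound and matches the spirit in which the paper leaves the result unproved.
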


All anisotropic controllers obtained from solutions to
optimization problems like~(\ref{eq:minimum gamma2 ssf}) will be
referred to as anisotropic $\gamma$-\emph{optimal} controllers.

\subsection{Fixed-order output-feedback controller design:\newline  convex constraints on reciprocal matrices}\label{subsect:output feedback}

Direct application of the sufficient conditions~(\ref{eq:det
inequality ANBRL inverse matrices})--(\ref{eq:LMI 4x4 ANBRL
inverse matrices}) of Lemma~\ref{lemma:SANBRL inverse matrices} to
the closed-loop realization
\begin{equation}\label{eq:CL realization fo of}
\left[
\begin{array}{c|c}
\EuScript{A} & \EuScript{B}\\\hline \EuScript{C} & \EuScript{D}
\end{array}
\right] = \left[
\begin{array}{cc|c}
A+B_u{D_\c} C_y & B_u{C_\c} & B_w+B_u{D_\c} D_{yw}\\
{B_\c} C_y & {A_\c} & {B_\c} D_{yw}
\\\hline
C_z+D_{zu}{D_\c} C_y & D_{zu}{C_\c} & D_{zw}+D_{zu}{D_\c} D_{yw}
\end{array}
\right]
\end{equation}
 yields the following corollary on the straightforward solution to
 general
Problem~\ref{problem:anisotropic suboptimal design}.

\begin{corollary}\label{corollay:fo of problem solution}
Given $a\geqslant 0$, $\gamma>0$, a dynamic output-feedback
controller $K$ of order $n_\xi$ defined by~(\ref{eq:dynamic
controller}) solving Problem~\ref{problem:anisotropic suboptimal
design} exists if the inequalities
\begin{equation}\label{eq:det inequality fo of}
\eta-(\det{(\e^{-2a/m_w}{\Psi})})^{1/m_w} < \gamma^2,
\end{equation}
\begin{equation}\label{eq:LMI 4x4 fo of}
\left[
\begin{array}{cccc}
\Psi-\eta I_{m_w} & \ast & \ast & \ast\\
B_w+B_uD_\c D_{yw} & -\Pi_{11} & \ast & \ast\\
B_\c D_{yw} & -\Pi_{12}^\T & -\Pi_{22} & \ast\\
D_{zw}+D_{zu}D_\c D_{yw} & 0 & 0 & -I_{p_z}
\end{array} \right] \prec 0,
\end{equation}
\begin{equation}\label{eq:LMI 6x6 fo of} \left[
\begin{array}{cccccc}
-{\Phi_{11}} & \ast & \ast & \ast & \ast & \ast\\
-{\Phi_{12}^\T} & -{\Phi_{22}} & \ast & \ast & \ast & \ast\\
0 & 0 & -\eta I_{m_w} & \ast & \ast & \ast\\
A+B_uD_\c C_y & B_uC_\c & B_w+B_uD_\c D_{yw} & -{\Pi_{11}} & \ast & \ast\\
B_\c C_y & A_\c & B_\c D_{yw} & -{\Pi_{12}^\T} & -{\Pi_{22}} & \ast\\
C_z+D_{zu}D_\c C_y & D_{zu}C_\c & D_{zw}+D_{zu}D_\c D_{yw} & 0 & 0
& -I_{p_z}
\end{array} \right] \prec 0,
\end{equation}
\begin{equation}\label{eq:pos def vars fo of}
\eta>\gamma^2,\quad \Psi\succ0,\quad\Phi := \left[
\begin{array}{cc}
\Phi_{11} & \Phi_{12}\\
\Phi_{12}^\T & \Phi_{22}
\end{array}
\right]\succ 0,\quad \Pi := \left[
\begin{array}{cc}
\Pi_{11} & \Pi_{12}\\
\Pi_{12}^\T & \Pi_{22}
\end{array}
\right]\succ 0
\end{equation}
are feasible with respect to the scalar variable $\eta$, real
$(m_w\times m_w)$-matrix $\Psi$, matrices $A_\c\in\R^{n_\xi\times
n_\xi}$, $B_\c\in\R^{n_\xi\times p_y}$, $C_\c\in\R^{m_u\times
n_\xi}$, $D_\c\in\R^{m_u\times p_y}$ and two reciprocal $(n\times
n)$-matrices $\Phi$, $\Pi$ such that
\begin{equation}\label{eq:block inverse matrices fo of}
\Phi\Pi = I_n
\end{equation}
where $n = n_x+n_\xi$ is the closed-loop system order.
\end{corollary}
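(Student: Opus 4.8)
The plan is to apply Lemma~\ref{lemma:SANBRL inverse matrices} directly to the closed-loop realization~(\ref{eq:CL realization fo of}). The hypotheses of the lemma require a positive scalar $\eta>\gamma^2$, positive definite matrices $\Psi$ and $\Phi$, with $\Pi := \Phi^{-1}$, satisfying the determinant inequality~(\ref{eq:det inequality ANBRL inverse matrices}) and the two LMIs~(\ref{eq:LMI 3x3 ANBRL inverse matrices}), (\ref{eq:LMI 4x4 ANBRL inverse matrices}). The corollary's conditions~(\ref{eq:det inequality fo of})--(\ref{eq:block inverse matrices fo of}) are precisely these objects written out blockwise for the closed-loop order $n = n_x+n_\xi$, with $\Phi$ and $\Pi$ partitioned conformally with the plant state $x_k$ and the controller state $\xi_k$.

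The key steps are as follows. First, I would substitute the blocks of the closed-loop realization~(\ref{eq:CL realization fo of}) into the $3\times 3$ LMI~(\ref{eq:LMI 3x3 ANBRL inverse matrices}) of Lemma~\ref{lemma:SANBRL inverse matrices}: the $(1,1)$ block is $\Psi-\eta I_{m_w}$, the $(2,1)$ block is $\euB = \left[\begin{array}{c} B_w+B_uD_\c D_{yw} \\ B_\c D_{yw}\end{array}\right]$, the $(2,2)$ block is $-\Pi = -\left[\begin{array}{cc}\Pi_{11} & \Pi_{12}\\ \Pi_{12}^\T & \Pi_{22}\end{array}\right]$, the $(3,1)$ block is $\euD = D_{zw}+D_{zu}D_\c D_{yw}$, and the $(3,2)$ block is zero. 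Reading off the blocks gives exactly~(\ref{eq:LMI 4x4 fo of}). Second, I would do the same substitution in the $4\times 4$ LMI~(\ref{eq:LMI 4x4 ANBRL inverse matrices}): here $\euA$ becomes the $2\times 2$ block matrix $\left[\begin{array}{cc} A+B_uD_\c C_y & B_uC_\c\\ B_\c C_y & A_\c\end{array}\right]$, $\euC$ becomes the block row $\left[\begin{array}{cc}C_z+D_{zu}D_\c C_y & D_{zu}C_\c\end{array}\right]$, and $-\Phi$ and $-\Pi$ are the block forms from~(\ref{eq:pos def vars fo of}). Expanding the $(1,1)$ block $-\Phi$ and $(3,3)$ block $-\Pi$ into their $2\times 2$ partitions and writing everything componentwise yields~(\ref{eq:LMI 6x6 fo of}). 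Third, the determinant constraint~(\ref{eq:det inequality fo of}) is identical to~(\ref{eq:det inequality ANBRL inverse matrices}), and the positivity and reciprocity conditions~(\ref{eq:pos def vars fo of}), (\ref{eq:block inverse matrices fo of}) are just $\eta>\gamma^2$, $\Psi\succ 0$, $\Phi\succ 0$, $\Pi\succ 0$, $\Phi\Pi = I_n$ rewritten for the partitioned matrices.

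Having verified that the feasibility of~(\ref{eq:det inequality fo of})--(\ref{eq:block inverse matrices fo of}) is equivalent to the existence of $\eta$, $\Psi$, $\Phi$ satisfying~(\ref{eq:det inequality ANBRL inverse matrices})--(\ref{eq:LMI 4x4 ANBRL inverse matrices}) for the realization~(\ref{eq:CL realization fo of}), I would invoke Lemma~\ref{lemma:SANBRL inverse matrices} to conclude that $\sn T_{ZW}\sn_a<\gamma$, that is, the anisotropic suboptimality condition~(\ref{eq:anisotropic suboptimality condition}) holds, for the controller $K$ with realization $(A_\c,B_\c,C_\c,D_\c)$. Internal stability $\rho(\euA)<1$ is part of the hypothesis of Lemma~\ref{lemma:SANBRL inverse matrices}; here it follows because the $4\times 4$ LMI~(\ref{eq:LMI 6x6 fo of}), upon taking a Schur complement in the $-\Pi$ block and recalling $\Pi = \Phi^{-1}$, forces $\euA^\T\Phi\euA-\Phi\prec 0$, which with $\Phi\succ 0$ is a discrete Lyapunov inequality certifying $\rho(\euA)<1$. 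This completes the chain, and the controller parameters are read off directly since $(A_\c,B_\c,C_\c,D_\c)$ are themselves the decision variables — no reconstruction step is needed, in contrast with the change-of-variable schemes used in the full-order case. The main obstacle is purely bookkeeping: one must make sure the block partitions of $\Phi$ and $\Pi$ in~(\ref{eq:pos def vars fo of}) are conformal with the state partition $(x_k,\xi_k)$ and that the off-diagonal $\Phi_{12}$, $\Pi_{12}$ blocks land in the correct positions of the expanded $6\times 6$ inequality; there is no genuine analytical difficulty, which is exactly why the constraint~(\ref{eq:block inverse matrices fo of}) — the reciprocity $\Phi\Pi = I_n$ coupling the two otherwise-LMI conditions — is what makes the problem nonconvex and motivates the convexification specializations developed in the subsequent subsections.
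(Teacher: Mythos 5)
Your proposal is correct and is exactly the argument the paper intends: the corollary is obtained by "direct application" of Lemma~\ref{lemma:SANBRL inverse matrices} to the closed-loop realization~(\ref{eq:CL realization fo of}), with $\Pi=\Phi^{-1}$ and the blocks of $\euA$, $\euB$, $\euC$, $\euD$, $\Phi$, $\Pi$ written out conformally with the state partition $(x_k,\xi_k)$, which is precisely your bookkeeping. Your additional observation that internal stability $\rho(\euA)<1$ is itself forced by the $(1,1)$--$(3,3)$ Schur complement of the large LMI (yielding the Lyapunov inequality $\euA^\T\Phi\euA-\Phi\prec 0$) is a correct point that the paper leaves implicit.
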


Thus, the problem of finding the realization matrices
$(A_\c,B_\c,C_\c,D_\c)$ of the fixed-order output-feedback dynamic
controller~(\ref{eq:dynamic controller}) solving
Problem~\ref{problem:anisotropic suboptimal design} leads to
solving the problem~(\ref{eq:det inequality fo
of})--(\ref{eq:block inverse matrices fo of}) or making sure of
its insolvability. The problem~(\ref{eq:det inequality fo
of})--(\ref{eq:block inverse matrices fo of}) is nonconvex because
of the condition~(\ref{eq:block inverse matrices fo of}). Although
application of the known algorithms
of~\cite{IS_1995_XY_centering}--\cite{Polyak_Gryazina_2010}
 can leads to a
successful  solution of the problem~(\ref{eq:det inequality fo
of})--(\ref{eq:block inverse matrices fo of}), it should be kept
in mind that any of them can converge to local minima.
Nevertheless, the full-order controller synthesis allows for a
quite standard convexification procedure which is considered below
to be applied.

\subsection{Full-order output-feedback controller}\label{subsect:output-feedback full-order}

For full-order design ($n_x = n_\xi$) one can effectively apply
the well-known linearizing change of variables presented
in~\cite{Gahinet_1996} and used in~\cite{SGC_1997} in the
multi-objective control framework. From the block partitioning
in~(\ref{eq:pos def vars fo of}) and the condition~(\ref{eq:block
inverse matrices fo of}) it follows that
\begin{equation}\label{eq:block inverse condition}
\Phi\left[
\begin{array}{c}
\Pi_{11}\\
\Pi_{12}^\T
\end{array}
\right] = \left[
\begin{array}{c}
I_{n_x}\\
0
\end{array}
\right]
\end{equation}
which leads to
$$
\Phi\Pi_1 = \Phi_1,\qquad \Pi\Phi_1 = \Pi_1
$$
with the notation
\begin{equation}\label{eq:Phi1 Pi1 def}
\Phi_1 := \left[
\begin{array}{cc}
I_{n_x} & \Phi_{11}\\
0 & \Phi_{12}^\T
\end{array}
\right],\qquad \Pi_1 := \left[
\begin{array}{cc}
\Pi_{11} & I_{n_x}\\
\Pi_{12}^\T & 0
\end{array}
\right].
\end{equation}
It can be easily shown by direct calculation that
\begin{equation}\label{eq:congruence transformation identity}
\Pi_1^\T\Phi\Pi_1 = \Phi_1^\T\Pi_1 = \Phi_1^\T\Pi\Phi_1 =
\Pi_1^\T\Phi_1 = \left[\begin{array}{cc} \Pi_{11} & I_{n_x}\\
I_{n_x} & \Phi_{11}
\end{array}
\right].
\end{equation}
The key linearizing change of the controller variables is defined
as follows~\cite{Gahinet_1996}
\begin{eqnarray}\label{eq:euAc}
\euAc & := & \Phi_{12}A_\c\Pi_{12}^\T+\Phi_{12}B_\c
C_y\Pi_{11}+\Phi_{11}B_uC_\c\Pi_{12}^\T+\Phi_{11}(A+B_uD_\c
C_y)\Pi_{11},\\\label{eq:euBc} \euBc & := &
\Phi_{12}B_\c+\Phi_{11}B_uD_\c,\\\label{eq:euCc} \euCc & := &
C_\c\Pi_{12}^\T+D_\c C_y\Pi_{11},\\\label{eq:euDc} \euDc & := &
D_\c.
\end{eqnarray}
The new variables $\euAc$, $\euBc$, $\euCc$, $\euDc$ have
dimensions $n_x\times n_x$, $n_x\times p_y$, $m_u\times n_x$, and
$m_u\times p_y$, respectively, even if $n_x \neq n_\xi$. It is
noted in~\cite{SGC_1997} that if $\Phi_{12}$ and $\Pi_{12}$ have
full row rank and if $\euAc$, $\euBc$, $\euCc$, $\euDc$,
$\Pi_{11}$, and $\Phi_{11}$ are known, one can always find the
controller matrices $A_\c$, $B_\c$, $C_\c$, $D_\c$
satisfying~(\ref{eq:euAc})--(\ref{eq:euDc}). If the matrices
$\Phi_{12}$ and $\Pi_{12}$ are square ($n_x = n_\xi$) and
invertible, then $A_\c$, $B_\c$, $C_\c$, and $D_\c$ are unique,
i.e. for full-order design, when one can always assume that
$\Phi_{12}$ and $\Pi_{12}$ have full row rank, the mapping defined
by~(\ref{eq:euAc})--(\ref{eq:euDc}) is bijective. More details can
be found in~\cite{Gahinet_1996}, \cite{SGC_1997}.

The solution to Problem~\ref{problem:anisotropic suboptimal
design} in the full-order design case is given by
\begin{theorem}\label{theorem:full-order output-feedback problem solution}
Given $a\geqslant 0$, $\gamma>0$, a dynamic output-feedback
controller $K$ of full order $n_\xi=n_x$ defined
by~(\ref{eq:dynamic controller}) solving
Problem~\ref{problem:anisotropic suboptimal design} exists if the
convex problem
\begin{equation}\label{eq:det inequality of fullord}
\eta-(\det{(\e^{-2a/m_w}{\Psi})})^{1/m_w} < \gamma^2,
\end{equation}
\begin{equation}\label{eq:LMI 4x4 of fullord}
\left[
\begin{array}{cccc}
\Psi-\eta I_{m_w} & \ast & \ast & \ast\\
B_w+B_u\euDc D_{yw} & -\Pi_{11} & \ast & \ast\\
\Phi_{11}B_w+\euBc D_{yw} & -I_{n_x} & -\Phi_{11} & \ast\\
D_{zw}+D_{zu}\euDc D_{yw} & 0 & 0 & -I_{p_z}
\end{array} \right] \prec 0,
\end{equation}
\begin{equation}\label{eq:LMI 6x6 of fullord}
\left[
\begin{array}{cccccc}
-\Pi_{11} & \ast  & \ast & \ast & \ast & \ast\\
-I_{n_x} & -\Phi_{11} & \ast & \ast & \ast & \ast\\
0 & 0 & -\eta I_{m_w} & \ast & \ast & \ast\\
A\Pi_{11}+B_u\euCc & A+B_u\euDc C_y & B_w+B_u\euDc D_{yw} &
-\Pi_{11} & \ast & \ast\\
\euAc & \Phi_{11}A+\euBc C_y & \Phi_{11}B_w+\euBc D_{yw} &
-I_{n_x} & -\Phi_{11} & \ast\\
C_z\Pi_{11}+D_{zu}\euCc & C_z+D_{zu}\euDc C_y & D_{zw}+D_{zu}\euDc
D_{yw} & 0 & 0 & -I_{p_z}
\end{array} \right] \prec 0,
\end{equation}
\begin{equation}\label{eq:fullord arg posdef condition}
\eta>\gamma^2,\quad \Pi_{11}\succ0,\quad\Phi_{11} \succ
0,\quad\left[
\begin{array}{cc}
\Pi_{11} & I_{n_x}\\
I_{n_x} & \Phi_{11}
\end{array}
\right]\succ 0
\end{equation}
is feasible with respect to the scalar variable $\eta$, real
$(m_w\times m_w)$-matrix $\Psi$, matrices $\euAc\in\R^{n_x\times
n_x}$, $\euBc\in\R^{n_x\times p_y}$, $\euCc\in\R^{m_u\times n_x}$,
$\euDc\in\R^{m_u\times p_y}$ and two real $(n_x\times
n_x)$-matrices $\Pi_{11}$, $\Phi_{11}$. If the
problem~(\ref{eq:det inequality of fullord})--(\ref{eq:fullord arg
posdef condition}) is feasible and the unknown variables have been
found, then the controller matrices $A_\c\in\R^{n_x\times n_x}$,
$B_\c\in\R^{n_x\times p_y}$, $C_\c\in\R^{m_u\times n_x}$,
$D_\c\in\R^{m_u\times p_y}$ are uniquely defined by
\begin{eqnarray}\label{eq:Dc backtrans}
D_\c & := & \euDc,\\\label{eq:Cc backtrans} C_\c & := &
(\euCc-D_\c C_y\Pi_{11})\Pi_{12}^{-\T},\\\label{eq:Bc backtrans}
B_\c & := & \Phi_{12}^{-1}(\euBc-\Phi_{11}B_uD_\c),\\\label{eq:Ac
backtrans} A_\c & := & \Phi_{12}^{-1}(\euAc-\Phi_{12}B_\c
C_y\Pi_{11}-\Phi_{11}B_uC_\c\Pi_{12}^\T-\Phi_{11}(A+B_uD_\c
C_y)\Pi_{11})\Pi_{12}^{-\T}
\end{eqnarray}
and determined from finding two nonsingular $(n_x\times
n_x)$-matrices $\Pi_{12}$, $\Phi_{12}$ that satisfy
\begin{equation}\label{eq:block inverse condition 1}
\Pi_{12}\Phi_{12}^\T = I_{n_x}-\Pi_{11}\Phi_{11}.
\end{equation}
\end{theorem}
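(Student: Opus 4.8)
The plan is to reduce Theorem~\ref{theorem:full-order output-feedback problem solution} to Corollary~\ref{corollay:fo of problem solution} by the standard congruence-transformation argument of~\cite{Gahinet_1996,SGC_1997}, specialized to the anisotropic setting. The starting point is the nonconvex system~(\ref{eq:det inequality fo of})--(\ref{eq:block inverse matrices fo of}): its only nonconvexity is the coupling $\Phi\Pi=I_n$. For full order $n_\xi=n_x$ one may assume without loss of generality that $\Phi_{12}$ and $\Pi_{12}$ are square and invertible, so the change of variables~(\ref{eq:euAc})--(\ref{eq:euDc}) is bijective, and conversely $\Pi_{12}\Phi_{12}^\T=I_{n_x}-\Pi_{11}\Phi_{11}$ can always be satisfied once $\Pi_{11},\Phi_{11}$ are fixed with $I_{n_x}-\Pi_{11}\Phi_{11}$ nonsingular (which is guaranteed by the positivity of the $2\times2$ block in~(\ref{eq:fullord arg posdef condition})).

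First I would record the congruence identities~(\ref{eq:block inverse condition})--(\ref{eq:congruence transformation identity}), which are already established in the excerpt by direct calculation. The crucial observation is that the outer transformation matrices appearing in~(\ref{eq:LMI 4x4 fo of}) and~(\ref{eq:LMI 6x6 fo of}) are block-diagonal combinations of $\Pi_1$ (and identity blocks on the $W$- and $Z$-channels). Concretely, pre- and post-multiply~(\ref{eq:LMI 4x4 fo of}) by $\blockdiag(I_{m_w},\Pi_1^\T,\ I_{p_z})$ and its transpose, and~(\ref{eq:LMI 6x6 fo of}) by $\blockdiag(\Pi_1^\T,\ I_{m_w},\ \Pi_1^\T,\ I_{p_z})$ and its transpose. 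Using $\Pi_1^\T\Phi\Pi_1=\Pi_1^\T\Phi_1=\begin{bmatrix}\Pi_{11}&I_{n_x}\\ I_{n_x}&\Phi_{11}\end{bmatrix}$ for the $(-\Phi)$ and $(-\Pi)^{-1}$ blocks, and expanding the products $\Phi\,\euA\,\Pi_1$, $\Phi\,\euB$, $\euC\,\Pi_1$ with $\euA,\euB,\euC,\euD$ taken from the closed-loop realization~(\ref{eq:CL realization fo of}), each entry collapses — by the very definitions~(\ref{eq:euAc})--(\ref{eq:euDc}) — into an affine expression in $\euAc,\euBc,\euCc,\euDc,\Pi_{11},\Phi_{11}$. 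This is exactly the content of the Gahinet--Scherer linearization: the congruence turns the bilinear controller-times-Lyapunov cross terms into the new linear variables, yielding precisely~(\ref{eq:LMI 4x4 of fullord}) and~(\ref{eq:LMI 6x6 of fullord}). The determinant constraint~(\ref{eq:det inequality fo of}) and the scalar $\eta$ are untouched, giving~(\ref{eq:det inequality of fullord}); the positivity constraints~(\ref{eq:pos def vars fo of}) become~(\ref{eq:fullord arg posdef condition}) after the same congruence is applied to $\Phi\succ0$, and $\Psi\succ0$ is carried over verbatim.

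For the converse direction — recovering a genuine controller from a feasible point of~(\ref{eq:det inequality of fullord})--(\ref{eq:fullord arg posdef condition}) — I would argue that $\begin{bmatrix}\Pi_{11}&I_{n_x}\\ I_{n_x}&\Phi_{11}\end{bmatrix}\succ0$ forces $I_{n_x}-\Pi_{11}\Phi_{11}$ to be nonsingular (Schur complement), so~(\ref{eq:block inverse condition 1}) admits a full-rank factorization $\Pi_{12}\Phi_{12}^\T=I_{n_x}-\Pi_{11}\Phi_{11}$ with $\Pi_{12},\Phi_{12}$ square and invertible; the back-substitution formulas~(\ref{eq:Dc backtrans})--(\ref{eq:Ac backtrans}) then invert~(\ref{eq:euAc})--(\ref{eq:euDc}) uniquely. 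Assembling $\Phi$ from $\Phi_{11},\Phi_{12},\Phi_{22}$ (where $\Phi_{22}$ is fixed by $\Phi\Pi=I_n$) and running the congruences backwards reproduces~(\ref{eq:LMI 4x4 fo of}),~(\ref{eq:LMI 6x6 fo of}),~(\ref{eq:pos def vars fo of}) together with $\Phi\Pi=I_n$, so Corollary~\ref{corollay:fo of problem solution} applies and the constructed $K$ stabilizes the closed loop and enforces~(\ref{eq:anisotropic suboptimality condition}). Finally I would check that the transformed LMI~(\ref{eq:LMI 6x6 of fullord}) still certifies $\rho(\EuScript A)<1$: its $(-\Pi,\ \EuScript A;\ \EuScript A^\T,\ -\Pi^{-1})$-type sub-block, after the congruence, is a Lyapunov-type inequality in the partitioned coordinates, which is exactly the stability statement Lemma~\ref{lemma:SANBRL inverse matrices} requires.

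The main obstacle will be the bookkeeping in the second paragraph: verifying that every block of~(\ref{eq:LMI 4x4 of fullord})--(\ref{eq:LMI 6x6 of fullord}) results from the congruence applied to~(\ref{eq:LMI 4x4 fo of})--(\ref{eq:LMI 6x6 fo of}) requires carefully expanding products such as $\Pi_1^\T\Phi(A+B_uD_\c C_y\ \ B_uC_\c;\ \ B_\c C_y\ \ A_\c)\Pi_1$ and matching each scalar entry against the definitions~(\ref{eq:euAc})--(\ref{eq:euDc}) — routine but error-prone. A secondary subtlety is the phrase "without loss of generality $\Phi_{12},\Pi_{12}$ invertible": one must note that if a feasible $\Phi,\Pi$ with $\Phi\Pi=I_n$ has rank-deficient off-diagonal blocks, a small perturbation restores invertibility while preserving the strict inequalities, so no feasibility is lost. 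Everything else (convexity of~(\ref{eq:det inequality of fullord}), the limiting-case interpretation) is inherited directly from the already-established material and needs no new argument.
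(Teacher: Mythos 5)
Your overall strategy is the paper's: reduce the theorem to Corollary~\ref{corollay:fo of problem solution} via the Gahinet--Scherer linearizing change of variables~(\ref{eq:euAc})--(\ref{eq:euDc}) and a congruence transformation built from $\Pi_1$, $\Phi_1$, then reconstruct the controller using the nonsingularity of $I_{n_x}-\Pi_{11}\Phi_{11}$ guaranteed by~(\ref{eq:fullord arg posdef condition}); your discussion of the reconstruction direction is sound. However, the congruence matrices you specify are wrong, and with them the central step fails rather than being merely ``error-prone bookkeeping.'' The inequalities~(\ref{eq:LMI 4x4 fo of}), (\ref{eq:LMI 6x6 fo of}) carry \emph{two different} Lyapunov-type diagonal blocks, $-\Phi$ and $-\Pi=-\Phi^{-1}$, and the entire mechanism of the linearization is that the $-\Phi$ block must be congruenced with $\Pi_1$ while the $-\Pi$ block must be congruenced with $\Phi_1$: by~(\ref{eq:congruence transformation identity}) one has
$$
\Pi_1^\T\Phi\Pi_1=\Phi_1^\T\Pi\Phi_1=\left[\begin{array}{cc}\Pi_{11}&I_{n_x}\\ I_{n_x}&\Phi_{11}\end{array}\right],
$$
whereas $\Pi_1^\T\Pi\Pi_1$ is \emph{not} equal to that matrix. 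Likewise the off-diagonal blocks must become $\Phi_1^\T\euA\Pi_1$ and $\Phi_1^\T\euB$ (using $\Phi\Pi_1=\Phi_1$, so $\Phi_1^\T\euA\Pi_1=\Pi_1^\T\Phi\euA\Pi_1$), which are affine in $\euAc,\euBc,\euCc,\euDc,\Pi_{11},\Phi_{11}$ by~(\ref{eq:euAc})--(\ref{eq:euDc}); the blocks $\Pi_1^\T\euA\Pi_1$ and $\Pi_1^\T\euB$ produced by your choice are not, so the bilinear controller--Lyapunov cross terms do not linearize and you do not arrive at~(\ref{eq:LMI 4x4 of fullord}), (\ref{eq:LMI 6x6 of fullord}).

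The fix is to replace your transformations by $\blockdiag(I_{m_w},\Phi_1^\T,I_{p_z})\,(\cdot)\,\blockdiag(I_{m_w},\Phi_1,I_{p_z})$ for~(\ref{eq:LMI 4x4 fo of}) and $\blockdiag(\Pi_1^\T,I_{m_w},\Phi_1^\T,I_{p_z})\,(\cdot)\,\blockdiag(\Pi_1,I_{m_w},\Phi_1,I_{p_z})$ for~(\ref{eq:LMI 6x6 fo of}) (the paper runs the same computation in the opposite direction, congruencing~(\ref{eq:LMI 4x4 of fullord}), (\ref{eq:LMI 6x6 of fullord}) with $\blockdiag(I_{m_w},\Phi_1^{-\T},I_{p_z})$ and $\blockdiag(\Pi_1^{-\T},I_{m_w},\Phi_1^{-\T},I_{p_z})$ to recover the form required by Lemma~\ref{lemma:SANBRL inverse matrices}). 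You appear to have conflated this mixed $\Pi_1/\Phi_1$ recipe with the single-matrix $P_1$-congruence used in the static output-feedback Theorem~\ref{theorem:sof problem solution convex}, where all diagonal blocks are $-\Phi$ and one transformation suffices. Once the congruence matrices are corrected, the rest of your argument goes through and coincides with the paper's proof.
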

\begin{proof}
Let a solution to~(\ref{eq:det inequality of
fullord})--(\ref{eq:fullord arg posdef condition}) exist.
From~(\ref{eq:Phi1 Pi1 def})--(\ref{eq:euDc}) and (\ref{eq:CL
realization fo of}) it follows that
$$
\left[
\begin{array}{cc}
A\Pi_{11}+B_u\euCc & A+B_u\euDc C_y\\
\euAc & \Phi_{11}A+\euBc C_y
\end{array}
\right] = \Phi_1^\T\euA\Pi_1,\quad  \left[
\begin{array}{c}
B_w+B_u\euDc D_{yw}\\
\Phi_{11}B_w+\euBc D_{yw}
\end{array}
\right] = \Phi_1^\T\euB,
$$
$$
\left[
\begin{array}{cc}
C_z\Pi_{11}+D_{zu}\euCc & C_z+D_{zu}\euDc C_y
\end{array}
\right] = \euC\Pi_1,\quad \left[\begin{array}{cc} \Pi_{11} & I_{n_x}\\
I_{n_x} & \Phi_{11}
\end{array}
\right] = \Pi_1^\T\Phi\Pi_1 = \Phi_1^\T\Pi\Phi_1,
$$
where $\Phi$ and $\Pi$ are defined by~(\ref{eq:pos def vars fo
of}) and satisfy~(\ref{eq:block inverse matrices fo of}) with
$n_\xi=n_x$. Substitution of the above identities to the
inequalities~(\ref{eq:LMI 4x4 of fullord}), (\ref{eq:LMI 6x6 of
fullord}) yields
\begin{equation}\label{eq:ANBRL fullord LMIs}
\left[
\begin{array}{ccc}
\Psi-\eta I_{m_w} & \EuScript{B}^\T\Phi_1 & \euD^\T
\\\Phi_1^\T\EuScript{B} & -\Phi_1^\T\Pi\Phi_1 & 0\\
\euD & 0 & -I_{p_z}
\end{array}
\right]\prec 0,\quad \left[
\begin{array}{cccc}
-\Pi_1^\T{\Phi}\Pi_1 & 0 & \Pi_1^\T\EuScript{A}^\T\Phi_1 & \Pi_1^\T\EuScript{C}^\T\\
0 & -\eta I_{m_w} & \EuScript{B}^\T\Phi_1 & \EuScript{D}^\T\\
\Phi_1^\T\EuScript{A}\Pi_1 & \Phi_1^\T\EuScript{B} & -\Phi_1^\T\Pi\Phi_1 & 0\\
\EuScript{C}\Pi_1 & \EuScript{D} & 0 & -I_{p_z}
\end{array} \right] \prec 0.
\end{equation}
Performing a congruence transformation with
$$
\blockdiag(I_{m_w},\Phi_1^{-\T},I_{p_z}),\quad
\blockdiag(\Pi_1^{-\T},I_{m_w},\Phi_1^{-\T}, I_{p_z})
$$
on the inequalities~(\ref{eq:ANBRL fullord LMIs}), respectively,
leads to
\begin{equation}\label{eq:ANBRL fullord LMIs implicit}
\left[
\begin{array}{ccc}
\Psi-\eta I_{m_w} & \EuScript{B}^\T & \euD^\T
\\\EuScript{B} & -\Pi & 0\\
\euD & 0 & -I_{p_z}
\end{array}
\right]\prec 0,\quad \left[
\begin{array}{cccc}
-{\Phi} & 0 & \EuScript{A}^\T & \EuScript{C}^\T\\
0 & -\eta I_{m_w} & \EuScript{B}^\T & \EuScript{D}^\T\\
\EuScript{A} & \EuScript{B} & -\Pi & 0\\
\EuScript{C} & \EuScript{D} & 0 & -I_{p_z}
\end{array} \right] \prec 0.
\end{equation}
Then, by Lemma~\ref{lemma:SANBRL inverse matrices},
from~(\ref{eq:det inequality of fullord}), (\ref{eq:ANBRL fullord
LMIs implicit}), (\ref{eq:pos def vars fo of}), (\ref{eq:block
inverse matrices fo of}) it follows that the closed-loop
system~(\ref{eq:CL realization fo of}) is internally stable and
its $a$-anisotropic norm does not exceed the designed threshold
$\gamma$, i.e. the inequality~(\ref{eq:anisotropic suboptimality
condition}) holds. The procedure of reconstruction of the
controller realization $(A_\c,B_\c,C_\c,D_\c)$ from the solution
variables $(\euA_\c,\euB_\c,\euC_\c,\euD_\c)$ by~(\ref{eq:block
inverse condition 1}), (\ref{eq:Dc backtrans})--(\ref{eq:Ac
backtrans}) is quite standard~\cite{Gahinet_1996},
\cite{SGC_1997}.
\end{proof}
\begin{corollary}
As the inequalities~(\ref{eq:det inequality of
fullord})--(\ref{eq:fullord arg posdef condition}) are also linear
in $\wh{\gamma} := \gamma^2$, the conditions of
Theorem~\ref{theorem:full-order output-feedback problem solution}
allow to compute the minimal $\gamma$ via solving the convex
optimization problem
\begin{equation}\label{eq:minimum gamma2 fo of}
\begin{array}{c}
\mathrm{minimize}\quad\wh{\gamma}\\
\mathrm{over}\quad \Psi, \Phi_{11}, \Pi_{11}, \euAc, \euBc, \euCc,
\euDc, \eta, \wh{\gamma}\quad \mathrm{satisfying}\quad
\mbox{(\ref{eq:det inequality of fullord})--(\ref{eq:fullord arg
posdef condition})}.
\end{array}
\end{equation}
If the convex problem~(\ref{eq:minimum gamma2 fo of}) is solvable,
the controller matrices are constructed just as in
Theorem~\ref{theorem:full-order output-feedback problem solution}.
\end{corollary}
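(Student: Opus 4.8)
The plan is to show that the only genuinely new feature relative to Theorem~\ref{theorem:full-order output-feedback problem solution}, namely promoting $\wh{\gamma} := \gamma^2$ from a fixed parameter to a decision variable, preserves convexity, so that (\ref{eq:minimum gamma2 fo of}) is a bona fide convex program; and then to read off the minimal $\gamma$ and the controller from an optimal solution by invoking the theorem verbatim. First I would catalogue the dependence of each of (\ref{eq:det inequality of fullord})--(\ref{eq:fullord arg posdef condition}) on $\wh{\gamma}$, then confirm joint convexity in the full variable list, and finally assemble the two observations into the claimed convex optimization.

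Inspecting the constraints: the matrix inequalities (\ref{eq:LMI 4x4 of fullord}) and (\ref{eq:LMI 6x6 of fullord}) contain only $\eta$ and the linearized variables $\Psi,\Phi_{11},\Pi_{11},\euAc,\euBc,\euCc,\euDc$ and do not involve $\wh{\gamma}$ at all, so each is (trivially) affine in $\wh{\gamma}$ and, through the change of variables (\ref{eq:euAc})--(\ref{eq:euDc}) used in the theorem, affine in all remaining variables, hence defines a convex set. The positive-definiteness requirements in (\ref{eq:fullord arg posdef condition}) beyond $\eta>\wh{\gamma}$ are likewise $\wh{\gamma}$-independent and convex. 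For the determinant constraint I would rewrite (\ref{eq:det inequality of fullord}) as
\[
\eta-\bigl(\det(\e^{-2a/m_w}\Psi)\bigr)^{1/m_w}-\wh{\gamma}<0 ,
\]
whose left-hand side is the sum of the linear term $\eta$, the term $-(\det(\e^{-2a/m_w}\Psi))^{1/m_w}$ (a convex function of the positive-definite matrix $\Psi$, as recalled after (\ref{eq:determinant inequality gamma2 linear})), and the linear term $-\wh{\gamma}$; it is therefore a convex constraint that is in particular affine, indeed linear, in $\wh{\gamma}$. Finally $\eta>\wh{\gamma}$ reads $\wh{\gamma}-\eta<0$, again linear in $(\eta,\wh{\gamma})$. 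Collecting these facts, every constraint is convex jointly in $\Psi,\Phi_{11},\Pi_{11},\euAc,\euBc,\euCc,\euDc,\eta,\wh{\gamma}$ and linear in $\wh{\gamma}$, so the feasible set of (\ref{eq:minimum gamma2 fo of}) is convex.

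Since the objective $\wh{\gamma}$ is linear, minimizing it over a convex set is a convex optimization problem, which is exactly the assertion that (\ref{eq:minimum gamma2 fo of}) is convex. Because $\gamma>0$ and $\gamma\mapsto\gamma^2=\wh{\gamma}$ is strictly increasing on $(0,+\infty)$, a minimizer of $\wh{\gamma}$ over the feasible set also minimizes $\gamma$ over the same set, so the least admissible threshold equals $\sqrt{\wh{\gamma}^\ast}$ with $\wh{\gamma}^\ast$ the optimal value; any optimal tuple is then feasible for the conditions of Theorem~\ref{theorem:full-order output-feedback problem solution} at $\gamma=\sqrt{\wh{\gamma}^\ast}$, and the controller $(A_\c,B_\c,C_\c,D_\c)$ is recovered exactly as there via (\ref{eq:block inverse condition 1}) and (\ref{eq:Dc backtrans})--(\ref{eq:Ac backtrans}). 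The one point I expect to require care, rather than any real obstacle, is the joint convexity of the determinant constraint once $\wh{\gamma}$ is a variable; this is immediate here because $\wh{\gamma}$ enters only through the affine summand $-\wh{\gamma}$ while the convexity in $\Psi$ is already established, everything else being inherited directly from the theorem.
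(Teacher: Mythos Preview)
Your proposal is correct and matches the paper's approach: the paper gives no separate proof for this corollary, the justification being entirely contained in the opening clause ``As the inequalities~(\ref{eq:det inequality of fullord})--(\ref{eq:fullord arg posdef condition}) are also linear in $\wh{\gamma} := \gamma^2$''; you have simply unpacked that sentence by checking constraint-by-constraint that $\wh{\gamma}$ enters affinely and that joint convexity in all variables is inherited from Theorem~\ref{theorem:full-order output-feedback problem solution}, which is exactly the intended reading.
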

It is stressed in~\cite{SGC_1997} that the applied synthesis
procedure does not introduce any conservatism, if the analysis
result does not involve any.

The results of Theorem~\ref{theorem:full-order output-feedback
problem solution} make possible application of the anisotropic
norm as a closed-loop performance specification or objective for
specific closed-loop channels in the multi-objective control
problems based on a common Lyapunov functions~\cite{SGC_1997}
together with other performance specifications and objectives that
can be captured in the LMI framework.

\subsection{Static output-feedback controller}\label{subsect:static ouput-feedback}

Let us now consider the special and very important case of static
output-feedback controller
\begin{equation}\label{eq:static of controller}
u_k = Ky_k.
\end{equation}
\begin{problem}\label{problem:aniso sub sof}
Given LDTI plant $P$ described by~(\ref{eq:standard plant}), a
mean anisotropy level $a\geqslant 0$ of the external disturbance
$W$, and some designed threshold value $\gamma>0$, find the static
output-feedback controller~(\ref{eq:static of controller}) which
internally stabilizes the closed-loop system $T_{ZW}(z)$ with the
state-space realization
\begin{equation}\label{eq:CL realization sof}
\left[
\begin{array}{c|c}
\EuScript{A} & \EuScript{B}\\\hline \EuScript{C} & \EuScript{D}
\end{array}
\right] = \left[
\begin{array}{c|c}
A+B_uK C_y & B_w+B_uK D_{yw}\\\hline C_z+D_{zu}K C_y &
D_{zw}+D_{zu}K D_{yw}
\end{array}
\right]
\end{equation}
 and ensures its $a$-anisotropic norm
does not exceed the threshold $\gamma$, i.e.
\begin{equation}\label{eq:aninorm sub condition sof}
\sn T_{ZW}\sn_a < \gamma.
\end{equation}
\end{problem}

Direct application of the sufficient conditions~(\ref{eq:det
inequality ANBRL inverse matrices})--(\ref{eq:LMI 4x4 ANBRL
inverse matrices}) of Lemma~\ref{lemma:SANBRL inverse matrices} to
the closed-loop realization~(\ref{eq:CL realization sof})
 yields the following corollary on the straightforward solution to
Problem~\ref{problem:aniso sub sof}.

\begin{corollary}\label{corollary:sof problem solution nonconv}
Given $a\geqslant 0$, $\gamma>0$, the static output-feedback
controller~(\ref{eq:static of controller}) solving
Problem~\ref{problem:aniso sub sof} exists if the inequalities
\begin{equation}\label{eq:det inequality sof nonconv}
\eta-(\det{(\e^{-2a/m_w}{\Psi})})^{1/m_w} < \gamma^2,
\end{equation}
\begin{equation}\label{eq:LMI 3x3 sof nonconv}
\left[
\begin{array}{ccc}
\Psi-\eta I_{m_w} & \ast & \ast\\
B_w+B_uK D_{yw} & -\Pi & \ast\\
D_{zw}+D_{zu}K D_{yw} & 0 & -I_{p_z}
\end{array}
\right] \prec 0,
\end{equation}
\begin{equation}\label{eq:LMI 4x4 sof nonconv}
\left[
\begin{array}{cccc}
-\Phi & \ast & \ast & \ast\\
0 & -\eta I_{m_w} & \ast & \ast\\
A+B_u{K} C_y & B_w+B_u{K} D_{yw} & -\Pi & \ast\\
C_z+D_{zu}{K} C_y & D_{zw}+D_{zu}{K}D_{yw} & 0 & -I_{p_z}
\end{array} \right] \prec 0,
\end{equation}
\begin{equation}\label{eq:pos def vars sof nonconv}
\eta>\gamma^2,\quad \Psi\succ0,\quad \Phi\succ0,\quad \Pi\succ0
\end{equation}
are feasible with respect to the scalar variable $\eta$, real
$(m_w\times m_w)$-matrix $\Psi$, real $(m_u\times p_y)$-matrix
$K$, and two reciprocal real $(n_x\times n_x)$-matrices $\Phi$,
$\Pi$ such that
\begin{equation}\label{eq:inverse matrices sof nonconv}
\Phi\Pi = I_{n_x}.
\end{equation}
\end{corollary}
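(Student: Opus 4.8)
The plan is to obtain the corollary as a direct specialization of Lemma~\ref{lemma:SANBRL inverse matrices} to the static output-feedback closed loop, with no convexification involved. First I would assume the inequalities~(\ref{eq:det inequality sof nonconv})--(\ref{eq:inverse matrices sof nonconv}) are feasible and put $\Phi := \Pi^{-1}$, which is legitimate since $\Pi \succ 0$ and $\Phi\Pi = I_{n_x}$. Substituting the closed-loop realization~(\ref{eq:CL realization sof}), namely $\euA = A+B_uKC_y$, $\euB = B_w+B_uKD_{yw}$, $\euC = C_z+D_{zu}KC_y$, $\euD = D_{zw}+D_{zu}KD_{yw}$, and using $-\Pi = -\Phi^{-1}$, one sees that~(\ref{eq:det inequality sof nonconv}), (\ref{eq:LMI 3x3 sof nonconv}), (\ref{eq:LMI 4x4 sof nonconv}) coincide termwise with~(\ref{eq:det inequality ANBRL inverse matrices}), (\ref{eq:LMI 3x3 ANBRL inverse matrices}), (\ref{eq:LMI 4x4 ANBRL inverse matrices}), while~(\ref{eq:pos def vars sof nonconv}) furnishes the needed positivity $\eta > \gamma^2$, $\Psi \succ 0$, $\Phi \succ 0$.

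Next I would verify the standing hypothesis $\rho(\euA) < 1$ of Lemma~\ref{lemma:SANBRL inverse matrices}, which also supplies internal stability of~(\ref{eq:CL realization sof}). This follows from~(\ref{eq:LMI 4x4 sof nonconv}): the principal submatrix obtained from its first and third block rows and columns,
\[
\left[\begin{array}{cc} -\Phi & \euA^\T\\ \euA & -\Pi\end{array}\right]\prec 0,
\]
is again negative definite, and its Schur complement with respect to the negative definite block $-\Pi = -\Phi^{-1}$ gives the discrete Lyapunov inequality $\euA^\T\Phi\euA - \Phi \prec 0$; together with $\Phi \succ 0$ this yields $\rho(\euA) < 1$, hence $T_{ZW} \in \mathcal{H}_\infty^{p_z\times m_w}$ with the realization~(\ref{eq:CL realization sof}).

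With $\rho(\euA) < 1$ in hand and~(\ref{eq:det inequality ANBRL inverse matrices})--(\ref{eq:LMI 4x4 ANBRL inverse matrices}) satisfied by $(\eta,\Psi,\Phi,\Pi)$, Lemma~\ref{lemma:SANBRL inverse matrices} gives $\sn T_{ZW}\sn_a < \gamma$, i.e.~(\ref{eq:aninorm sub condition sof}) holds; so the gain $K$ read off from the feasible point solves Problem~\ref{problem:aniso sub sof}. There is no genuine analytical obstacle: the whole argument is this substitution plus the stability extraction above. The only point worth emphasizing is that, unlike the state-feedback and full-order settings, here no linearizing change of variables is performed, so~(\ref{eq:det inequality sof nonconv})--(\ref{eq:inverse matrices sof nonconv}) remains nonconvex --- through the coupling $\Phi\Pi = I_{n_x}$ and the bilinear products $B_uK$, $D_{zu}K$ --- so the corollary merely records a sufficient existence condition, to be treated afterwards either by convexification under structural hypotheses on the plant or by the reciprocal-matrix algorithms cited after Lemma~\ref{lemma:SANBRL inverse matrices}.
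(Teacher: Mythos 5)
Your proposal is correct and follows exactly the route the paper intends: the corollary is stated there as a direct application of Lemma~\ref{lemma:SANBRL inverse matrices} to the closed-loop realization~(\ref{eq:CL realization sof}) with $\Pi=\Phi^{-1}$, and no separate proof is given. Your additional extraction of $\rho(\euA)<1$ from the $(1,3)$ principal block of~(\ref{eq:LMI 4x4 sof nonconv}) via a Schur complement is a correct and worthwhile filling-in of the stability hypothesis that the paper leaves implicit.
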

So, the problem of finding the output-feedback gain matrix $K$
solving Problem~\ref{problem:aniso sub sof} leads to solving the
problem~(\ref{eq:det inequality sof nonconv})--(\ref{eq:inverse
matrices sof nonconv}) or making sure of its insolvability. The
inequalities~(\ref{eq:det inequality sof
nonconv})--(\ref{eq:inverse matrices sof nonconv}) derived from
the straightforward application of Lemma~\ref{lemma:SANBRL inverse
matrices} are not convex because of the condition~(\ref{eq:inverse
matrices sof nonconv}). One can try to solve this general problem
by the algorithms
of~\cite{IS_1995_XY_centering}--\cite{Polyak_Gryazina_2010}
suitable for finding reciprocal matrices under convex constraints.

However, the specific linearizing change of variables presented
in~\cite{Scherer_2000} can make the resulting optimization problem
convex for a specific class of plants defined by a certain
structural property. Namely, suppose that the transfer function of
the plant~(\ref{eq:standard plant}) from the control input to
measured output vanishes, i.e.~\cite{Scherer_2000}
\begin{equation}\label{eq:Tyu vanishes}
T_{yu}(z) := C_y(zI-A)^{-1}B_u = 0.
\end{equation}
For the stabilizable and detectable plant~(\ref{eq:standard
plant}), if~(\ref{eq:Tyu vanishes}) holds, then there exists a
similarity transformation $T$ such that
\begin{equation}\label{eq:Kalman canonical decomposition sof}
\left[
\begin{array}{c|cc}
TA T^{-1} & TB_w & TB_u\\
\hline C_zT^{-1} & D_{zw} & D_{zu}\\
C_yT^{-1} & D_{yw} & 0
\end{array}
\right] = \left[
\begin{array}{cc|cc}
A_{11} & A_{12} & B_{w_1} & B_{u_1}\\
0 & A_{22} & B_{w_2} & 0\\\hline C_{z_1} & C_{z_2} & D_{zw} &
D_{zu}\\
0 & C_{y_2} & D_{yw} & 0
\end{array}
\right]
\end{equation}
where $(A_{11},B_{u_1})$ is controllable, $(A_{11},C_{y_2})$ is
observable, and the matrix $A_{22}$ is stable~\cite{Scherer_2000};
see also~\cite{Poznyak_2007}. The representation~(\ref{eq:Kalman
canonical decomposition sof}) implies that the closed-loop system
realization after static output feedback becomes
\begin{equation}\label{eq:CL realization Kalman decomp sof}
\left[
\begin{array}{c|c}
\euA & \euB\\\hline \euC & \euD
\end{array}
\right] = \left[
\begin{array}{cc|c}
A_{11} & A_{12}+B_{u_1}K C_{y_2} & B_{w_1}+B_{u_1}K D_{yw}\\
0 & A_{22} & B_{w_2}\\\hline C_{z_1} & C_{z_2}+D_{zu}K C_{y_2} &
D_{zw}+D_{zu}K D_{yw}
\end{array}
\right].
\end{equation}

The Lyapunov matrix $\Phi$ in the inequalities~(\ref{eq:LMI 3x3
ANBRL inverse matrices}), (\ref{eq:LMI 4x4 ANBRL inverse
matrices}) of Lemma~\ref{lemma:SANBRL inverse matrices} is
partitioned according to the representation of $\euA$
in~(\ref{eq:CL realization Kalman decomp sof})
as~\cite{Scherer_2000}
\begin{equation}\label{eq:Phi partition Scherer sof}
\Phi = \left[
\begin{array}{cc}
\Phi_{11} & \Phi_{12}\\
\Phi_{12}^\T & \Phi_{22}
\end{array}
\right]\succ0.
\end{equation}
The key linearizing change of variables is defined
in~\cite{Scherer_2000} as
\begin{equation}\label{eq:Scherer linearizing change sof}
P := \left[
\begin{array}{cc}
Q & S\\
S^\T & R
\end{array}
\right] = \left[
\begin{array}{cc}
\Phi_{11}^{-1} & -\Phi_{11}^{-1}\Phi_{12}\\
-\Phi_{12}^\T\Phi_{11}^{-1} &
\Phi_{22}-\Phi_{12}^\T\Phi_{11}^{-1}\Phi_{12}
\end{array}
\right].
\end{equation}
It is noted in~\cite{Scherer_2000} that the
transformation~(\ref{eq:Scherer linearizing change sof}) maps the
set of all positive definite matrices into the set of all matrices
with positive definite diagonal blocks; this map is bijective; its
inverse is given by
\begin{equation}\label{eq:Scherer back linearizing change sof}
\left[
\begin{array}{cc}
\Phi_{11} & \Phi_{12}\\
\Phi_{12}^\T & \Phi_{22}
\end{array}
\right] = \left[
\begin{array}{cc}
Q^{-1} & -Q^{-1}S\\
-S^\T Q^{-1} & R-S^\T Q^{-1}S
\end{array}
\right].
\end{equation}
The transformation~(\ref{eq:Scherer linearizing change sof}) is
motivated by the factorization~\cite{Scherer_2000}
\begin{equation}\label{eq:Scherer trans factorization}
P_1\Phi = P_2
\end{equation}
with
\begin{equation}\label{eq:Scherer congr trans fact}
P_1 := \left[
\begin{array}{cc}
Q & 0\\
S^\T & I
\end{array}
\right],\qquad P_2 := \left[
\begin{array}{cc}
I & -S\\
0 & R
\end{array}
\right].
\end{equation}

\begin{theorem}\label{theorem:sof problem solution convex}
Suppose that the plant $P$ described by~(\ref{eq:standard plant})
is such that $T_{yu}(z) = 0$, i.e.~(\ref{eq:Tyu vanishes}) holds.
Given $a\geqslant 0$, $\gamma>0$, a static output-feedback
controller defined by~(\ref{eq:static of controller}) solving
Problem~\ref{problem:aniso sub sof} exists if the convex
problem
\begin{equation}\label{eq:det inequality sof conv}
\eta-(\det{(\e^{-2a/m_w}{\Psi})})^{1/m_w} < \gamma^2,
\end{equation}
\begin{equation}\label{eq:LMI 3x3 sof conv}
\left[
\begin{array}{cccc}
\Psi-\eta I_{m_w} & \ast & \ast & \ast\\
B_{w_1}+B_{u_1}K D_{yw}-SB_{w_2} & -Q & \ast & \ast\\
RB_{w_2} & 0 & -R & \ast\\
D_{zw}+D_{zu}K D_{yw} & 0 & 0 & -I_{p_z}
\end{array} \right] \prec 0,
\end{equation}
\begin{equation}\label{eq:LMI 6x6 sof conv}
\left[
\begin{array}{cccccc}
-Q & \ast & \ast & \ast & \ast & \ast\\
0 & -R & \ast & \ast & \ast & \ast\\
0 & 0 & -\eta I_{m_w} & \ast & \ast & \ast\\
A_{11}Q & A_{11}S-SA_{22}+A_{12}+B_{u_1}K C_{y_2} & B_{w_1}+B_{u_1}K D_{yw}-SB_{w_2} & -Q & \ast & \ast\\
0 & RA_{22} & RB_{w_2} & 0 & -R & \ast\\
C_{z_1}Q & C_{z_1}S+C_{z_2}+D_{zu}K C_{y_2} & D_{zw}+D_{zu}K
D_{yw} & 0 & 0 & -I_{p_z}
\end{array} \right] \prec 0,
\end{equation}
\begin{equation}\label{eq:pos def vars sof conv}
\eta>\gamma^2,\quad \Psi\succ0,\quad Q\succ0,\quad R\succ0
\end{equation}
is feasible with respect to the scalar variable $\eta$, real
$(m_w\times m_w)$-matrix $\Psi$, controller gain matrix $K$ and
real matrices $Q$, $R$, and $S$.
\end{theorem}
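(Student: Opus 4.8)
The plan is to follow the scheme of the proof of Theorem~\ref{theorem:full-order output-feedback problem solution}, with the change of variables of the full‑order case replaced by Scherer's transformation~(\ref{eq:Scherer linearizing change sof}) adapted to the Kalman canonical form~(\ref{eq:Kalman canonical decomposition sof}). Suppose the convex problem~(\ref{eq:det inequality sof conv})--(\ref{eq:pos def vars sof conv}) is feasible, with data $(\eta,\Psi,K,Q,R,S)$. Define the Lyapunov matrix $\Phi=\Phi^\T$ by the inverse map~(\ref{eq:Scherer back linearizing change sof}); since~(\ref{eq:Scherer linearizing change sof}) carries the positive definite cone bijectively onto the set of matrices with positive definite diagonal blocks, the constraints $Q\succ0$, $R\succ0$ force $\Phi\succ0$. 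Put $\Pi:=\Phi^{-1}$, partitioned as in~(\ref{eq:Phi partition Scherer sof}) in accordance with the block structure of~(\ref{eq:CL realization Kalman decomp sof}). The goal is to recognize~(\ref{eq:LMI 3x3 sof conv}) and~(\ref{eq:LMI 6x6 sof conv}) as congruence transforms of the reciprocal‑matrix inequalities~(\ref{eq:LMI 3x3 ANBRL inverse matrices}), (\ref{eq:LMI 4x4 ANBRL inverse matrices}) of Lemma~\ref{lemma:SANBRL inverse matrices} written for the realization~(\ref{eq:CL realization Kalman decomp sof}).

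The computation rests on the factorization~(\ref{eq:Scherer trans factorization}), $P_1\Phi=P_2$, with $P_1,P_2$ from~(\ref{eq:Scherer congr trans fact}): since $\Pi=\Phi^{-1}=P_2^{-1}P_1$, a direct calculation gives $P_1\Phi P_1^\T=P_2P_1^\T=\blockdiag(Q,R)$ and $P_2\Pi P_2^\T=P_1P_2^\T=\blockdiag(Q,R)$. I would then apply the congruence transformations $\blockdiag(I_{m_w},P_2^{-\T},I_{p_z})$ to~(\ref{eq:LMI 3x3 sof conv}) and $\blockdiag(P_1^{-\T},I_{m_w},P_2^{-\T},I_{p_z})$ to~(\ref{eq:LMI 6x6 sof conv}). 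The two identities above turn the $(-\Phi)$‑ and $(-\Pi)$‑blocks into $-\blockdiag(Q,R)$, and a block‑by‑block check shows that the off‑diagonal blocks of the transformed inequalities are exactly $P_2\euA P_1^\T$, $P_2\euB$, $\euC P_1^\T$, $\euD$ for the realization~(\ref{eq:CL realization Kalman decomp sof}) — e.g.
$$
P_2\euA P_1^\T=\left[\begin{array}{cc} A_{11}Q & A_{11}S-SA_{22}+A_{12}+B_{u_1}KC_{y_2}\\ 0 & RA_{22}\end{array}\right],
$$
which matches the corresponding $2\times2$ block of~(\ref{eq:LMI 6x6 sof conv}). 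Hence the transformed inequalities coincide with~(\ref{eq:LMI 4x4 ANBRL inverse matrices}), (\ref{eq:LMI 3x3 ANBRL inverse matrices}) with $\Phi^{-1}=\Pi$. Since~(\ref{eq:det inequality sof conv}) is identical to~(\ref{eq:det inequality ANBRL inverse matrices}), and the sub‑block of~(\ref{eq:LMI 4x4 ANBRL inverse matrices}) obtained by deleting its second and fourth block rows and columns yields, via the Schur complement, $\euA^\T\Phi\euA-\Phi\prec0$ and hence $\rho(\euA)<1$, Lemma~\ref{lemma:SANBRL inverse matrices} applies and gives internal stability together with $\sn T_{ZW}\sn_a<\gamma$ for the realization~(\ref{eq:CL realization Kalman decomp sof}); as the similarity $T$ in~(\ref{eq:Kalman canonical decomposition sof}) changes neither the closed‑loop spectrum nor the transfer function $T_{ZW}$, the same holds for~(\ref{eq:CL realization sof}) under $u_k=Ky_k$.

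The main obstacle is the block‑by‑block verification that the controller‑dependent terms $B_{u_1}KC_{y_2}$, $B_{u_1}KD_{yw}$, $D_{zu}KC_{y_2}$ pass through the congruence with $P_1$ and $P_2$ affinely, without generating quadratic cross‑products between $K$ and the slack variable $S$. This is precisely where the hypothesis $T_{yu}(z)=0$ enters: by~(\ref{eq:Kalman canonical decomposition sof}) it makes the $(2,1)$ block of $\euA$ in~(\ref{eq:CL realization Kalman decomp sof}) vanish and confines $B_{u_1}$, hence every occurrence of $K$ in $\euA,\euB,\euC,\euD$, to the first block row, so that in $P_2\euA P_1^\T$ the $K$‑carrying block is multiplied on the left by the identity block of $P_2$ and on the right by the identity block of $P_1^\T$; all other products yield only the benign entries $A_{11}Q$, $A_{11}S-SA_{22}$, $RA_{22}$, $RB_{w_2}$, $C_{z_1}Q$, $C_{z_1}S$ already displayed in~(\ref{eq:LMI 3x3 sof conv}), (\ref{eq:LMI 6x6 sof conv}). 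Once this structural observation is in place, the remaining manipulations are routine linear algebra, entirely parallel to~\cite{Scherer_2000}.
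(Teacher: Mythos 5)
Your proposal is correct and follows essentially the same route as the paper: Scherer's linearizing change of variables~(\ref{eq:Scherer linearizing change sof}), the factorization $P_1\Phi=P_2$, and a congruence transformation reducing~(\ref{eq:LMI 3x3 sof conv}), (\ref{eq:LMI 6x6 sof conv}) to the SANBRL inequalities of Lemma~\ref{lemma:SANBRL inverse matrices}, with all block identities (e.g.\ $P_2\euA P_1^\T$) verified correctly. Your single congruence with $\blockdiag(P_1^{-1},I_{m_w},P_2^{-1},I_{p_z})$, exploiting $P_1\Phi P_1^\T=P_2\Pi P_2^\T=\blockdiag(Q,R)$, merely compresses the paper's two-step transformation (congruence with $P_1^{-1}$ followed by multiplication by $\Phi^{-1}$), and your explicit checks of $\rho(\euA)<1$ and of invariance under the similarity $T$ are details the paper leaves implicit.
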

\begin{proof}
Let a solution to~(\ref{eq:det inequality sof conv})--(\ref{eq:pos
def vars sof conv}) exist. Then from~(\ref{eq:Scherer congr trans
fact}), (\ref{eq:CL realization Kalman decomp sof}),
(\ref{eq:Scherer back linearizing change sof}) it follows that
\begin{eqnarray}\label{eq:transfromation identities Scherer sof 1}
\left[
\begin{array}{cc}
Q & 0\\
0 & R
\end{array}
\right] & = & P_1\Phi P_1^\T,\\
 \left[
\begin{array}{cc}
A_{11}Q & A_{11}S-SA_{22}+A_{12}+B_{u_1}K C_{y_2}\\
0 & RA_{22}
\end{array}
\right] & = &P_1\Phi\euA P_1^\T,\\  \left[
\begin{array}{c}
B_{w_1}+B_{u_1}K D_{yw}-SB_{w_2}\\
RB_{w_2}
\end{array}
\right] & = & P_1\Phi\euB,\\\label{eq:transfromation identities
Scherer sof 4} \left[
\begin{array}{cc}
C_{z_1}Q & C_{z_1}S+C_{z_2}+D_{zu}K C_{y_2}
\end{array}
\right] & = & \euC P_1^\T.
\end{eqnarray}
Substituting the identities~(\ref{eq:transfromation identities
Scherer sof 1})--(\ref{eq:transfromation identities Scherer sof
4}) to the LMIs~(\ref{eq:LMI 3x3 sof conv}), (\ref{eq:LMI 6x6 sof
conv}), we have
\begin{equation}\label{eq:LMIs 3x3 4x4 sof CL}
\left[
\begin{array}{ccc}
\Psi-\eta I_{m_w}& \euB^\T\Phi P_1^\T & \euD^\T\\
 P_1\Phi\euB & -P_1\Phi P_1^\T & 0\\
\euD & 0 & -I_{p_z}
\end{array}
\right]\prec 0,\quad \left[
\begin{array}{cccc}
-P_1\Phi P_1^\T & 0 & P_1\euA^\T\Phi P_1^\T & P_1\euC^\T\\
0 & -\eta I_{m_w} & \euB^\T\Phi P_1^\T & \euD^\T\\
P_1\Phi\euA P_1^\T & P_1\Phi\euB & -P_1\Phi P_1^\T & 0\\
\euC P_1^\T & \euD & 0 & -I_{p_z}
\end{array} \right] \prec 0.
\end{equation}
Performing a congruence transformation with
$$
\blockdiag(I_{m_w},P_1^{-1},I_{p_z}),\quad
\blockdiag(P_1^{-1},I_{m_w},P_1^{-1}, I_{p_z}),
$$
where $P_1$ is defined by~(\ref{eq:Scherer congr trans fact}), on
the inequalities~(\ref{eq:LMIs 3x3 4x4 sof CL}), respectively,
yields
\begin{equation}\label{eq:ANBRL sof BMIs}
\left[
\begin{array}{ccc}
\Psi-\eta I_{m_w} & \euB^\T\Phi & \euD^\T\\
\Phi\euB & -\Phi & 0\\
\euD & 0 & -I_{p_z}
\end{array}
\right]\prec 0,\qquad \left[
\begin{array}{cccc}
-\Phi & 0 & \euA^\T\Phi & \euC^\T\\
0 & -\eta I_{m_w} & \euB^\T\Phi & \euD^\T\\
\Phi\euA & \Phi\euB & -\Phi & 0\\
\euC & \euD & 0 & -I_{p_z}
\end{array}
\right]\prec 0.
\end{equation}
Pre- and post-multiplying the inequalities~(\ref{eq:ANBRL sof
BMIs}) by
$$
\blockdiag(I_{m_w},\Phi^{-1},I_{p_z})\succ0,\qquad\blockdiag(I_{n_x},I_{m_w},\Phi^{-1},I_{p_z})\succ0,
$$
respectively, we have
\begin{equation}\label{eq:ANBRL sof LMIs}
\left[
\begin{array}{ccc}
\Psi-\eta I_{m_w} & \euB^\T & \euD^\T\\
\euB & -\Phi^{-1} & 0\\
\euD & 0 & -I_{p_z}
\end{array}
\right]\prec 0,\qquad \left[
\begin{array}{cccc}
-\Phi & 0 & \euA^\T & \euC^\T\\
0 & -\eta I_{m_w} & \euB^\T & \euD^\T\\
\euA & \euB & -\Phi^{-1} & 0\\
\euC & \euD & 0 & -I_{p_z}
\end{array}
\right]\prec 0.
\end{equation}
Then, by Lemma~\ref{lemma:SANBRL inverse matrices},
from~(\ref{eq:det inequality sof conv}), (\ref{eq:ANBRL sof
LMIs}), (\ref{eq:pos def vars sof conv}), (\ref{eq:Phi partition
Scherer sof}) it follows that the controller gain matrix~$K$ is
the solution to Problem~\ref{problem:aniso sub sof} for the
plant~(\ref{eq:Kalman canonical decomposition sof}) and the
closed-loop system~(\ref{eq:CL realization Kalman decomp sof}),
which completes the proof.
\end{proof}
\begin{corollary}\label{corollary:sof problem convex opt}
The convex constraints~(\ref{eq:det inequality sof
conv})--(\ref{eq:pos def vars sof conv}) are also linear in
$\gamma^2$. With the notation $\wh{\gamma} := \gamma^2$, the
conditions of Theorem~\ref{theorem:sof problem solution convex}
allow for $\gamma$ to be minimized via solving the convex
optimization problem
\begin{equation}\label{eq:minimum gamma2 sof}
\begin{array}{c}
\mathrm{minimize}\quad\wh{\gamma}\\
\mathrm{over}\quad \Psi, Q, R, S, K, \eta, \wh{\gamma}\quad
\mathrm{satisfying}\quad \mbox{(\ref{eq:det inequality sof
conv})--(\ref{eq:pos def vars sof conv})}.
\end{array}
\end{equation}
\end{corollary}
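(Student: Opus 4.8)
The plan is to obtain this corollary as an immediate consequence of Theorem~\ref{theorem:sof problem solution convex} together with two elementary structural observations about the constraints~(\ref{eq:det inequality sof conv})--(\ref{eq:pos def vars sof conv}).

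First I would verify the claimed linearity in $\wh\gamma:=\gamma^2$ by inspection: the threshold $\gamma$ enters~(\ref{eq:det inequality sof conv})--(\ref{eq:pos def vars sof conv}) only through the term $\gamma^2$ on the right-hand side of~(\ref{eq:det inequality sof conv}) and through the scalar constraint $\eta>\gamma^2$ in~(\ref{eq:pos def vars sof conv}); the two LMIs~(\ref{eq:LMI 3x3 sof conv}),~(\ref{eq:LMI 6x6 sof conv}) and the positivity constraints $\Psi\succ0$, $Q\succ0$, $R\succ0$ do not involve $\gamma$ at all. Hence replacing $\gamma^2$ by a free scalar $\wh\gamma$ turns every constraint into one that is affine in $\wh\gamma$ and jointly convex in $(\Psi,Q,R,S,K,\eta,\wh\gamma)$: the LMIs are affine in their arguments, the cones $\Psi,Q,R\succ0$ are convex, and~(\ref{eq:det inequality sof conv}) is convex because $X\mapsto-(\det X)^{1/m_w}$ is a convex function on the positive-definite cone, as already noted after Lemma~\ref{lemma:SANBRL inverse matrices}. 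Therefore~(\ref{eq:minimum gamma2 sof}) is a genuine convex program --- minimization of a linear objective over a convex feasible set.

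Next I would argue that minimizing $\wh\gamma$ is the same as minimizing $\gamma$ and that the controller is obtained for free. Every feasible point satisfies $\wh\gamma>0$: the $(1,1)$ block of~(\ref{eq:LMI 3x3 sof conv}) gives $\Psi\prec\eta I_{m_w}$, whence $(\det(\e^{-2a/m_w}\Psi))^{1/m_w}=\e^{-2a/m_w}(\det\Psi)^{1/m_w}\le(\det\Psi)^{1/m_w}<\eta$, so~(\ref{eq:det inequality sof conv}) forces $\wh\gamma>\eta-(\det(\e^{-2a/m_w}\Psi))^{1/m_w}>0$. Since $\gamma\mapsto\gamma^2$ is a strictly increasing bijection of $(0,\infty)$ onto itself, $\gamma$ and $\wh\gamma$ attain their minima at the same point and $\gamma_{\min}=\sqrt{\wh\gamma_{\min}}$. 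Moreover, as the controller gain $K$ is itself one of the decision variables of~(\ref{eq:minimum gamma2 sof}), any feasible (in particular, any optimal) point delivers the desired static output-feedback controller directly via Theorem~\ref{theorem:sof problem solution convex}, with no reconstruction step of the type~(\ref{eq:Dc backtrans})--(\ref{eq:Ac backtrans}) needed. The only point requiring care --- and essentially the sole obstacle --- is that Theorem~\ref{theorem:sof problem solution convex} gives only a \emph{sufficient} condition (inheriting the one-sided nature of Lemma~\ref{lemma:SANBRL inverse matrices} and being further specialized to plants with $T_{yu}\equiv0$), so~(\ref{eq:minimum gamma2 sof}) returns an upper bound $\gamma_{\min}=\sqrt{\wh\gamma_{\min}}$ on the smallest anisotropic norm attainable by a static output-feedback controller for such a plant rather than the exact value; the corollary is to be read in this sense.
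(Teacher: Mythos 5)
Your proposal is correct and follows essentially the same (immediate) reasoning as the paper, which states this corollary without a separate proof as a direct consequence of Theorem~\ref{theorem:sof problem solution convex}: the constraints depend on $\gamma$ only through $\gamma^2$, so introducing $\wh{\gamma}:=\gamma^2$ makes them affine in $\wh{\gamma}$ and the minimization a convex program, with $K$ recovered directly as a decision variable. Your added observations (positivity of $\wh{\gamma}$ at any feasible point, and the caveat that only an upper bound on the optimal $\gamma$ is obtained owing to the sufficiency of the synthesis conditions) are accurate and consistent with the paper's remarks.
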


The controller gain matrix $K$ enters the synthesis
LMIs~(\ref{eq:LMI 3x3 sof conv}), (\ref{eq:LMI 6x6 sof conv})
directly. It is noted in~\cite{Scherer_2000} that this
 allows for some
structural requirements on this controller gain to be incorporated
making possible even the synthesis of decentralized controllers
(with block-diagonal $K$) via convex optimization.

The results of Theorem~\ref{theorem:sof problem solution convex}
make possible application of the anisotropic norm as a closed-loop
performance specification or objective for specific closed-loop
channels in the multi-objective control problems with LMI
specifications considered in~\cite{Scherer_2000}.

It should be also noted that in general case, when the structural
property~(\ref{eq:Tyu vanishes}) does not hold, one can follow the
way of~\cite{Scherer_2000} and make use of the Youla-Ku\v{c}era
parametrization of stabilizing
controller~\cite{Kucera_1975,YJB_1976} to parametrize affinely the
closed-loop system, enforce the said property, and bring the
closed-loop realization to the form~(\ref{eq:Kalman canonical
decomposition sof}). Then the synthesis of the anisotropic
controller can be treated as finding the Youla parameter that
enters the closed-loop system affinely by applying the results of
Theorem~\ref{theorem:sof problem solution convex} and
Corollary~\ref{corollary:sof problem convex opt}.

Besides the class of systems which satisfy the structural
property~(\ref{eq:Tyu vanishes}),  there are two particular cases
of the system's structure which allow for the static
output-feedback design problem to lead to some convex optimization
problem by applying a nonsingular state coordinate transformation
and introducing structured slack variables just as it was done for
$\Hinf$ synthesis problem in~\cite{LLK_2006}. These cases are the
so called singular control and filtering problems.

Let us first consider the singular control problem when the matrix
$D_{zu}$ of the plant~(\ref{eq:standard plant}) is zero and the
matrix $B_u$ is of full column rank. Then there exists a
nonsingular state coordinate transformation matrix $T_u$ such
that~\cite{LLK_2006}
\begin{equation}\label{eq:TuBu}
\bB_u := T_uB_u = \left[ \begin{array}{c} I_{m_u}\\0
\end{array} \right].
\end{equation}
Under this transformation, the plant realization matrices become
\begin{equation}\label{eq:realization under Tu}
\bA := T_uAT_u^{-1},\quad \bB_w := T_uB_w,\quad \bC_z :=
C_zT_u^{-1},\quad \bC_y := C_yT_u^{-1}.
\end{equation}

\begin{theorem}\label{theorem:singular control sof convex}
Suppose that the plant $P$ described by~(\ref{eq:standard plant})
is such that $D_{zu} = 0$ and $\rank{B_u} = m_u$. Given
$a\geqslant 0$, $\gamma>0$, a static output-feedback controller
defined by~(\ref{eq:static of controller}) solving
Problem~\ref{problem:aniso sub sof} for the closed-loop
realization
\begin{equation}\label{eq:CL realization sof singular control}
\left[
\begin{array}{c|c}
\EuScript{A} & \EuScript{B}\\\hline \EuScript{C} & \EuScript{D}
\end{array}
\right] = \left[
\begin{array}{c|c}
A+B_uK C_y & B_w+B_uK D_{yw}\\\hline C_z & D_{zw}
\end{array}
\right]
\end{equation}
exists if the convex problem
\begin{equation}\label{eq:det inequality sof conv sc}
\eta-(\det{(\e^{-2a/m_w}{\Psi})})^{1/m_w} < \gamma^2,
\end{equation}
\begin{equation}\label{eq:LMI 3x3 sof conv sc}
\left[
\begin{array}{ccc}
\Psi-\eta I_{m_w} & \bB_{w}^\T\bS^\T+D_{yw}^\T L^\T & D_{zw}^\T\\
\bS\bB_{w}+LD_{yw} & \bPhi-\bS-\bS^\T & 0\\
D_{zw} & 0 & -I_{p_z}
\end{array} \right] \prec 0,
\end{equation}
\begin{equation}\label{eq:LMI 4x4 sof conv sc}
\left[
\begin{array}{cccc}
-\bPhi & 0 & \bA^\T\bS^\T+\bC_y^\T L^\T & \bC_z^\T\\
0 & -\eta I_{m_w} & \bB_w^\T\bS^\T+D_{yw}^\T L^\T & D_{zw}^\T\\
\bS\bA+L\bC_y & \bS\bB_w+LD_{yw} & \bPhi-\bS-\bS^\T & 0\\
\bC_z & D_{zw} & 0 & -I_{p_z}
\end{array} \right] \prec 0,
\end{equation}
\begin{equation}\label{eq:pos def vars sof conv sc}
\eta>\gamma^2,\quad \Psi\succ0,\quad \bPhi\succ0,
\end{equation}
where $\bA$, $\bB_w$, $\bC_z$, $\bC_y$ are defined
by~(\ref{eq:realization under Tu}), is feasible with respect to
the scalar variable $\eta$, real $(m_w\times m_w)$-matrix $\Psi$,
$(n_x\times n_x)$-matrix $\bPhi$, and two structured matrix
variables
\begin{equation}\label{eq:S L def}
\bS := \left[
\begin{array}{cc}
\bS_1 & 0\\
0 & \bS_2
\end{array}
\right],\quad L := \left[
\begin{array}{c}
L_1\\
0
\end{array}
\right].
\end{equation}
If the problem~(\ref{eq:det inequality sof conv sc})--(\ref{eq:pos
def vars sof conv sc}) is feasible and the unknown variables have
been found, then the output-feedback controller gain matrix is
determined by $K = \bS_1^{-1}L_1$.
\end{theorem}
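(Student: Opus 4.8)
The plan is to mimic the proof of Theorem~\ref{theorem:sof problem solution convex} but with the slack‑variable convexification of~\cite{LLK_2006} in place of the Scherer change of variables. First I would suppose that a feasible point $(\eta,\Psi,\bPhi,\bS,L)$ of~(\ref{eq:det inequality sof conv sc})--(\ref{eq:pos def vars sof conv sc}) exists and set $K := \bS_1^{-1}L_1$. The structural assumptions $D_{zu}=0$, $\rank B_u=m_u$ guarantee the transformation $T_u$ of~(\ref{eq:TuBu}) exists, so after passing to the coordinates $(\bA,\bB_w,\bC_z,\bC_y)$ of~(\ref{eq:realization under Tu}) the closed‑loop matrices are $\euA = \bA+\bB_uK\bC_y$, $\euB = \bB_w+\bB_uK D_{yw}$, $\euC = \bC_z$, $\euD = D_{zw}$ (this is~(\ref{eq:CL realization sof singular control}) written in the transformed coordinates). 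The key algebraic observation is that, because $\bB_u = \bigl[\begin{smallmatrix}I_{m_u}\\0\end{smallmatrix}\bigr]$ and $\bS$ is block‑diagonal with the same partition, $\bS\bB_uK = \bigl[\begin{smallmatrix}\bS_1K\\0\end{smallmatrix}\bigr] = \bigl[\begin{smallmatrix}L_1\\0\end{smallmatrix}\bigr] = L$; hence $\bS\bA+L\bC_y = \bS(\bA+\bB_uK\bC_y) = \bS\euA$ and $\bS\bB_w+LD_{yw} = \bS\euB$. Substituting these identities into~(\ref{eq:LMI 3x3 sof conv sc}), (\ref{eq:LMI 4x4 sof conv sc}) rewrites them purely in terms of $\bS$, $\bPhi$ and the closed‑loop data $(\euA,\euB,\euC,\euD)$.

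Next I would eliminate the slack variable $\bS$. In the resulting inequalities the only dependence on $\bS$ is through blocks of the form $\bS\euA$, $\bS\euB$ and the diagonal term $\bPhi-\bS-\bS^\T$. Using the elementary bound $\bPhi-\bS-\bS^\T \succeq -\bS^\T\bPhi^{-1}\bS$ (equivalently $(\bS-\bPhi)^\T\bPhi^{-1}(\bS-\bPhi)\succeq0$), each feasible pair $(\bS,\bPhi)$ majorizes the block with $-\bS^\T\bPhi^{-1}\bS$; a congruence by $\blockdiag(I,\bS^{-1},I)$ (resp.\ $\blockdiag(I,I,\bS^{-1},I)$) and the substitution $\Pi := \bS^{-1}\bPhi\bS^{-\T}$, or alternatively a direct Schur‑complement manipulation, brings the two LMIs exactly to the form~(\ref{eq:LMI 3x3 ANBRL inverse matrices}), (\ref{eq:LMI 4x4 ANBRL inverse matrices}) of Lemma~\ref{lemma:SANBRL inverse matrices} with $\Phi = \bPhi$ and $\Phi^{-1}$ in the appropriate corners. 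Together with the unchanged determinant constraint~(\ref{eq:det inequality sof conv sc}) $=$ (\ref{eq:det inequality ANBRL inverse matrices}) and the positivity conditions~(\ref{eq:pos def vars sof conv sc}), Lemma~\ref{lemma:SANBRL inverse matrices} then yields internal stability $\rho(\euA)<1$ and $\sn T_{ZW}\sn_a<\gamma$, i.e.~(\ref{eq:aninorm sub condition sof}). Finally I would note that anisotropic norm, stability and the closed‑loop transfer function are invariant under the similarity transformation $T_u$, so the controller $K = \bS_1^{-1}L_1$ solves Problem~\ref{problem:aniso sub sof} for the original plant~(\ref{eq:standard plant}).

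The main obstacle is making the elimination of $\bS$ fully rigorous while preserving the block structure. One has to check that the congruence transformations respect the block‑diagonal partition of $\bS$ (so that $\bS^{-1}$ and $\Pi$ stay well defined and the reconstruction $K=\bS_1^{-1}L_1$ is legitimate — in particular $\bS_1$ must be shown invertible, which follows from $\bS_1+\bS_1^\T\succ0$ implied by the $(2,2)$ block of~(\ref{eq:LMI 4x4 sof conv sc}) together with $\bPhi\succ0$), and that the inequality $\bPhi-\bS-\bS^\T\succeq-\bS^\T\bPhi^{-1}\bS$ is applied in the correct (lower‑bounding) direction so that feasibility is genuinely transferred from the synthesis LMIs to the analysis LMIs of the Lemma, not the other way around. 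As in the earlier theorems, only a \emph{sufficient} condition is obtained, since Lemma~\ref{lemma:SANBRL inverse matrices} is itself only sufficient; I would add the analogous remark. The remaining bookkeeping — tracking which $I_{n_x}$, $I_{m_w}$, $I_{p_z}$ sits in which corner after each congruence — is routine and mirrors line‑for‑line the proof of Theorem~\ref{theorem:sof problem solution convex}.
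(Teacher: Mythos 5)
Your proposal follows essentially the same route as the paper's proof (the slack-variable technique of~\cite{LLK_2006}): the identity $\bS\bB_uK=L$ from the block structures of $\bS$, $\bB_u$, $L$, a congruence with $\blockdiag(I,\bS^{-1},I)$ (resp.\ $\blockdiag(I,I,\bS^{-1},I)$), and a completion-of-squares bound to replace the transformed diagonal block by $-\bPhi^{-1}$ so that Lemma~\ref{lemma:SANBRL inverse matrices} applies; the paper merely also undoes the coordinate change $T_u$ explicitly, whereas you invoke similarity invariance, which is equivalent. The one slip to fix is the orientation of the square: you need $(\bS-\bPhi)\bPhi^{-1}(\bS-\bPhi)^\T\succeq 0$, i.e.\ $\bPhi-\bS-\bS^\T\succeq-\bS\bPhi^{-1}\bS^\T$, since only this version conjugates under $\bS^{-1}(\cdot)\bS^{-\T}$ to give exactly $-\bPhi^{-1}$; your transposed version $-\bS^\T\bPhi^{-1}\bS$ yields $-\bS^{-1}\bS^\T\bPhi^{-1}\bS\bS^{-\T}$ after the congruence, which is not $-\bPhi^{-1}$ for nonsymmetric $\bS$. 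Your observation that $\bS+\bS^\T\succ\bPhi\succ0$ forces $\bS_1$ to be invertible is correct and is a point the paper leaves implicit.
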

The proof is similar to that of~\cite{LLK_2006} where it is
derived for the $\Hinf$ norm performance criterion.
\begin{proof}
Let a solution to the problem~(\ref{eq:det inequality sof conv
sc})--(\ref{eq:pos def vars sof conv sc}) exist. Performing a
congruence transformation with
$$
\blockdiag{(I_{m_w},T_u^\T,I_{p_z})},\quad\blockdiag{(T_u^\T,I_{m_w},T_u^\T,I_{p_z})}
$$
on the inequalities~(\ref{eq:LMI 3x3 sof conv sc}), (\ref{eq:LMI
4x4 sof conv sc}), respectively, leads to
$$
\left[
\begin{array}{ccc}
\Psi-\eta I_{m_w} & B_{w}^\T T_u^\T\bS^\T T_u+D_{yw}^\T L^\T T_u & D_{zw}^\T\\
T_u^\T\bS T_uB_{w}+T_u^\T LD_{yw} & T_u^\T(\bPhi-\bS-\bS^\T)T_u & 0\\
D_{zw} & 0 & -I_{p_z}
\end{array} \right] \prec 0,
$$
$$
\left[
\begin{array}{cccc}
-T_u^\T\bPhi T_u & 0 & A^\T T_u^\T\bS^\T T_u+C_y^\T L^\T T_u & C_z^\T\\
0 & -\eta I_{m_w} & B_w^\T T_u^\T\bS^\T T_u+D_{yw}^\T L^\T T_u & D_{zw}^\T\\
T_u^\T\bS T_u A+T_u^\T L C_y & T_u^\T\bS T_uB_w+T_u^\T LD_{yw} & T_u^\T(\bPhi-\bS-\bS^\T)T_u & 0\\
C_z & D_{zw} & 0 & -I_{p_z}
\end{array} \right] \prec 0
$$
where the plant realization matrices are derived from the backward
transformation of~(\ref{eq:realization under Tu}). Let us denote
$S:=T_u^\T\bS T_u$, $\Phi:=T_u^\T\bPhi T_u$. Then from~(\ref{eq:S
L def}) and definition of $K = \bS_1^{-1}L_1$  it follows that
$$
T_u^\T L = T_u^\T\left[
\begin{array}{c}
L_1\\0
\end{array}
\right] = T_u^\T\left[
\begin{array}{cc}
\bS_1 & 0\\
0 & \bS_2
\end{array}
\right]\left[
\begin{array}{c}
I_{m_u}\\0
\end{array}
\right]K = T_u^\T\bS\bB_u K = SB_uK,
$$
and the above LMIs can be rewritten as
$$
\left[
\begin{array}{ccc}
\Psi-\eta I_{m_w} & (B_{w}+B_uKD_{yw})^\T S^\T & D_{zw}^\T\\
S(B_{w}+B_uKD_{yw}) & \Phi-S-S^\T & 0\\
D_{zw} & 0 & -I_{p_z}
\end{array} \right] \prec 0,
$$
$$
\left[
\begin{array}{cccc}
-\Phi & 0 & (A+B_uKC_y)^\T S^\T & C_z^\T\\
0 & -\eta I_{m_w} & (B_{w}+B_uKD_{yw})^\T S^\T & D_{zw}^\T\\
S(A+B_uKC_y) & S(B_{w}+B_uKD_{yw}) & \Phi-S-S^\T & 0\\
C_z & D_{zw} & 0 & -I_{p_z}
\end{array} \right] \prec 0,
$$
or, in terms of the closed-loop realization~(\ref{eq:CL
realization sof singular control}), as
$$
\left[
\begin{array}{ccc}
\Psi-\eta I_{m_w} & \euB^\T S^\T & \euD^\T\\
S\euB & \Phi-S-S^\T & 0\\
\euD & 0 & -I_{p_z}
\end{array} \right] \prec 0,\quad
\left[
\begin{array}{cccc}
-\Phi & 0 & \euA^\T S^\T & \euC^\T\\
0 & -\eta I_{m_w} & \euB^\T S^\T & \euD^\T\\
S\euA & S\euB & \Phi-S-S^\T & 0\\
\euC & \euD & 0 & -I_{p_z}
\end{array} \right] \prec 0.
$$
Then, performing a congruence transformation with
$$
\blockdiag{(I_{m_w},S^{-1},I_{p_z})},\quad\blockdiag{(I_{n_x},I_{m_w},S^{-1},I_{p_z})}
$$
on the last inequalities, respectively, we have
\begin{equation}\label{eq:aninorm LMI 3x3 sof sc slack}
\left[
\begin{array}{ccc}
\Psi-\eta I_{m_w} & \euB^\T & \euD^\T\\
\euB & S^{-1}\Phi S^{-\T}-S^{-1}-S^{-\T} & 0\\
\euD & 0 & -I_{p_z}
\end{array} \right] \prec 0,
\end{equation}
\begin{equation}\label{eq:aninorm LMI 4x4 sof sc slack}
\left[
\begin{array}{cccc}
-\Phi & 0 & \euA^\T & \euC^\T\\
0 & -\eta I_{m_w} & \euB^\T & \euD^\T\\
\euA & \euB & S^{-1}\Phi S^{-\T}-S^{-1}-S^{-\T} & 0\\
\euC & \euD & 0 & -I_{p_z}
\end{array} \right] \prec 0.
\end{equation}
From the inequality
$$
(S^{-1}-\Phi^{-1})(-\Phi)(S^{-1}-\Phi^{-1})^\T \prec 0
$$
it is clear that
$$
-\Phi^{-1} \prec S^{-1}\Phi S^{-\T}-S^{-1}-S^{-\T}.
$$
Then, by Lemma~\ref{lemma:SANBRL inverse matrices},
from~(\ref{eq:det inequality sof conv sc}), (\ref{eq:aninorm LMI
3x3 sof sc slack}), (\ref{eq:aninorm LMI 4x4 sof sc slack}),
(\ref{eq:pos def vars sof conv sc}) it follows that the controller
gain matrix $K$ is the solution to Problem~\ref{problem:aniso sub
sof} for the closed-loop realization~(\ref{eq:CL realization sof
singular control}), which completes the proof.
\end{proof}

\begin{remark}
Unlike the proofs of Theorems~\ref{theorem:state-feedback problem
solution}--\ref{theorem:sof problem solution convex}, there is no
equivalence between the synthesis inequalities~(\ref{eq:det
inequality sof conv sc})--(\ref{eq:pos def vars sof conv sc}) and
the conditions~(\ref{eq:det inequality ANBRL inverse
matrices})--(\ref{eq:LMI 4x4 ANBRL inverse matrices}) of
Lemma~\ref{lemma:SANBRL inverse matrices}. The synthesis
LMIs~(\ref{eq:LMI 3x3 sof conv sc}), (\ref{eq:LMI 4x4 sof conv
sc}) establich only sufficient conditions for the
inequalities~(\ref{eq:LMI 3x3 ANBRL inverse matrices}),
(\ref{eq:LMI 4x4 ANBRL inverse matrices}) of
Lemma~\ref{lemma:SANBRL inverse matrices} to be solvable. This
also concerns a synthesis theorem below.
\end{remark}

\begin{corollary}\label{corollary:sof problem convex opt sc}
With the notation $\wh{\gamma} := \gamma^2$, the conditions of
Theorem~\ref{theorem:singular control sof convex} allow for
$\gamma$ to be minimized via solving the convex optimization
problem
\begin{equation}\label{eq:minimum gamma2 sof sc}
\begin{array}{c}
\mathrm{minimize}\quad\wh{\gamma}\\
\mathrm{over}\quad \Psi, \bPhi, \bS, L, \eta, \wh{\gamma}\quad
\mathrm{satisfying}\quad \mbox{(\ref{eq:det inequality sof conv
sc})--(\ref{eq:pos def vars sof conv sc})}.
\end{array}
\end{equation}
If the problem~(\ref{eq:minimum gamma2 sof sc}) is solvable, the
controller gain matrix is constructed just as in
Theorem~\ref{theorem:singular control sof convex}.
\end{corollary}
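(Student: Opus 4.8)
The plan is to follow the pattern of the short proofs attached to the corollaries of Theorems~\ref{theorem:state-feedback problem solution}, \ref{theorem:full-order output-feedback problem solution} and~\ref{theorem:sof problem solution convex}: it suffices to verify that the threshold enters the synthesis conditions of Theorem~\ref{theorem:singular control sof convex} only through $\gamma^2$, and there only in an affine way, with all the remaining constraints being $\gamma$-free linear matrix inequalities, so that~(\ref{eq:minimum gamma2 sof sc}) becomes a genuine convex program; the controller gain is then reconstructed exactly as in that theorem.

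First I would inspect~(\ref{eq:det inequality sof conv sc})--(\ref{eq:pos def vars sof conv sc}) and note that $\gamma$ occurs only in the determinant inequality~(\ref{eq:det inequality sof conv sc}) and in the scalar bound $\eta>\gamma^2$ of~(\ref{eq:pos def vars sof conv sc}); the matrix inequalities~(\ref{eq:LMI 3x3 sof conv sc}) and~(\ref{eq:LMI 4x4 sof conv sc}), together with $\Psi\succ0$ and $\bPhi\succ0$, do not involve $\gamma$ at all. Setting $\wh\gamma:=\gamma^2$ (with the standing constraint $\wh\gamma>0$ inherited from $\gamma>0$), the two $\gamma$-dependent constraints turn into $\eta-(\det(\e^{-2a/m_w}\Psi))^{1/m_w}<\wh\gamma$ and $\eta>\wh\gamma$, both \emph{linear} in the scalar decision variable $\wh\gamma$. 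Since $-(\det(\cdot))^{1/m_w}$ is a convex function of a positive definite $(m_w\times m_w)$-matrix, as already recorded after~(\ref{eq:determinant inequality gamma2 linear}), and since~(\ref{eq:LMI 3x3 sof conv sc}), (\ref{eq:LMI 4x4 sof conv sc}), $\Psi\succ0$, $\bPhi\succ0$ are convex in $(\Psi,\bPhi,\bS,L,\eta)$, the feasible set of tuples $(\Psi,\bPhi,\bS,L,\eta,\wh\gamma)$ is convex, and~(\ref{eq:minimum gamma2 sof sc}) minimizes a linear functional over it. As $t\mapsto\sqrt{t}$ is strictly increasing on $(0,\infty)$ and every admissible $\wh\gamma$ is positive, minimizing $\wh\gamma$ over this set is the same as minimizing $\gamma=\sqrt{\wh\gamma}$.

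Finally I would invoke Theorem~\ref{theorem:singular control sof convex} at any feasible tuple of~(\ref{eq:minimum gamma2 sof sc}) attaining the chosen value $\wh\gamma$: all hypotheses of that theorem then hold with $\gamma=\sqrt{\wh\gamma}$, so the gain $K=\bS_1^{-1}L_1$ --- well defined since negative definiteness of the $(2,2)$-block $\bPhi-\bS-\bS^\T$ in~(\ref{eq:LMI 3x3 sof conv sc}) forces $\bS+\bS^\T\succ\bPhi\succ0$, whence the block-diagonal factor $\bS_1$ has positive definite symmetric part and is nonsingular --- internally stabilizes the closed loop and achieves $\sn T_{ZW}\sn_a<\sqrt{\wh\gamma}$. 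I do not anticipate a genuine obstacle; the only point needing the same care as in the earlier corollaries is that, the inequality~(\ref{eq:det inequality sof conv sc}) being strict, solvability of~(\ref{eq:minimum gamma2 sof sc}) must be understood as the existence of a feasible tuple realizing the chosen value of $\wh\gamma$ rather than as attainment of the infimum --- which is immaterial, since Theorem~\ref{theorem:singular control sof convex} only requires feasibility of the synthesis conditions, and the reconstruction $K=\bS_1^{-1}L_1$ coincides with the one in its statement.
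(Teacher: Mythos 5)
Your argument is correct and follows exactly the route the paper takes (implicitly, since the corollary is stated without a separate proof): the synthesis conditions of Theorem~\ref{theorem:singular control sof convex} are convex in $(\Psi,\bPhi,\bS,L,\eta)$ and affine in $\wh{\gamma}=\gamma^2$, so minimizing $\wh{\gamma}$ is a convex program equivalent to minimizing $\gamma$, and any feasible point yields the gain $K=\bS_1^{-1}L_1$ by the theorem. Your added observation that the $(2,2)$-block of~(\ref{eq:LMI 3x3 sof conv sc}) forces $\bS+\bS^\T\succ\bPhi\succ0$ and hence guarantees invertibility of $\bS_1$ is a worthwhile detail the paper leaves tacit.
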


Now consider the singular filtering problem when the matrix
$D_{yw}$ of the plant~(\ref{eq:standard plant}) is zero and the
matrix $C_y$ is of full row rank. Then there exists a nonsingular
state coordinate transformation matrix $T_y$ such
that~\cite{LLK_2006}
\begin{equation}\label{eq:CyTy}
\bC_y := C_yT_y^{-1} = \left[ \begin{array}{cc} I_{p_y} & 0
\end{array} \right].
\end{equation}
Under this transformation, the plant realization matrices become
\begin{equation}\label{eq:realization under Ty}
\bA := T_yAT_y^{-1},\quad \bB_w := T_yB_w,\quad \bB_u :=
T_yB_u,\quad \bC_z := C_zT_y^{-1}.
\end{equation}

\begin{theorem}\label{theorem:singular filtering sof convex}
Suppose that the plant $P$ described by~(\ref{eq:standard
plant}) is such that $D_{yw} = 0$ and $\rank{C_y} = p_y$. Given
$a\geqslant 0$, $\gamma>0$, a static output-feedback controller
defined by~(\ref{eq:static of controller}) solving
Problem~\ref{problem:aniso sub sof} for the closed-loop
realization
\begin{equation}\label{eq:CL realization sof singular filtering}
\left[
\begin{array}{c|c}
\EuScript{A} & \EuScript{B}\\\hline \EuScript{C} & \EuScript{D}
\end{array}
\right] = \left[
\begin{array}{c|c}
A+B_uK C_y & B_w\\\hline C_z+D_{zu}KC_y & D_{zw}
\end{array}
\right]
\end{equation}
exists if the convex problem
\begin{equation}\label{eq:det inequality sof conv sf}
\eta-(\det{(\e^{-2a/m_w}{\Psi})})^{1/m_w} < \gamma^2,
\end{equation}
\begin{equation}\label{eq:LMI 3x3 sof conv sf}
\left[
\begin{array}{ccc}
\Psi-\eta I_{m_w} & \bB_{w}^\T & D_{zw}^\T\\
\bB_{w} & -\bPi & 0\\
D_{zw} & 0 & -I_{p_z}
\end{array} \right] \prec 0,
\end{equation}
\begin{equation}\label{eq:LMI 4x4 sof conv sf}
\left[
\begin{array}{cccc}
\bPi - \bR - \bR^\T & 0 & \bR^\T\bA^\T+M^\T\bB_u^\T & \bR^\T\bC_z^\T + M^\T D_{zu}^\T\\
0 & -\eta I_{m_w} & \bB_w^\T & D_{zw}^\T\\
\bA\bR+\bB_uM & \bB_w & -\bPi & 0\\
\bC_z\bR+D_{zu}M & D_{zw} & 0 & -I_{p_z}
\end{array} \right] \prec 0,
\end{equation}
\begin{equation}\label{eq:pos def vars sof conv sf}
\eta>\gamma^2,\quad \Psi\succ0,\quad \bPi\succ0,
\end{equation}
where $\bA$, $\bB_w$, $\bC_z$, $\bC_y$ are defined
by~(\ref{eq:realization under Ty}), is feasible with respect to
the scalar variable $\eta$, real $(m_w\times m_w)$-matrix $\Psi$,
$(n_x\times n_x)$-matrix $\bPi$, and two structured matrix
variables
\begin{equation}\label{eq:R M def}
\bR := \left[
\begin{array}{cc}
\bR_1 & 0\\
0 & \bR_2
\end{array}
\right],\quad M := \left[
\begin{array}{cc}
M_1 & 0
\end{array}
\right].
\end{equation}
If the problem~(\ref{eq:det inequality sof conv sf})--(\ref{eq:pos
def vars sof conv sf}) is feasible and the unknown variables have
been found, then the output-feedback controller gain matrix is
determined by $K = M_1\bR_1^{-1}$.
\end{theorem}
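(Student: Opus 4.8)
The plan is to mirror the proof of Theorem~\ref{theorem:singular control sof convex}, exchanging the roles of control and filtering by working with the dual (transposed) structure. First I would perform a congruence transformation on the synthesis LMIs~(\ref{eq:LMI 3x3 sof conv sf}), (\ref{eq:LMI 4x4 sof conv sf}) with $\blockdiag(I_{m_w}, T_y^{-\T}, I_{p_z})$ and $\blockdiag(T_y^{-\T}, I_{m_w}, T_y^{-\T}, I_{p_z})$ respectively, using the backward transformation of~(\ref{eq:realization under Ty}) to recover the original plant matrices $A$, $B_w$, $B_u$, $C_z$. Here the relevant identity, dual to the one used before, is $M T_y^{-\T} = (\text{something})$: from the structure $\bC_y = [\,I_{p_y}\ 0\,]$ in~(\ref{eq:CyTy}), the block structure of $\bR$ and $M$ in~(\ref{eq:R M def}), and the definition $K = M_1\bR_1^{-1}$, one gets $K \bC_y \bR = [\,M_1\ 0\,] = M$, so that after setting $R := T_y^{-1}\bR T_y^{-\T}$ and $\Pi := T_y^{-1}\bPi T_y^{-\T}$ the terms $\bA\bR + \bB_u M$ and $\bC_z\bR + D_{zu}M$ collapse into $(A + B_u K C_y) R = \euA R$ and $(C_z + D_{zu} K C_y) R = \euC R$ for the closed-loop realization~(\ref{eq:CL realization sof singular filtering}).

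Second, I would rewrite the transformed inequalities purely in terms of the closed-loop matrices $(\euA, \euB, \euC, \euD)$ from~(\ref{eq:CL realization sof singular filtering}), obtaining LMIs of the form
$$
\left[
\begin{array}{ccc}
\Psi - \eta I_{m_w} & \euB^\T & \euD^\T\\
\euB & -\Pi & 0\\
\euD & 0 & -I_{p_z}
\end{array}
\right] \prec 0,\quad
\left[
\begin{array}{cccc}
\Pi - R - R^\T & 0 & R^\T\euA^\T & R^\T\euC^\T\\
0 & -\eta I_{m_w} & \euB^\T & \euD^\T\\
\euA R & \euB & -\Pi & 0\\
\euC R & \euD & 0 & -I_{p_z}
\end{array}
\right] \prec 0.
$$
Third, I would apply the slack-variable relaxation: the congruence transformation $\blockdiag(R^{-\T}, I_{m_w}, I_{p_z}, I_{p_z})$ on the second inequality converts $\Pi - R - R^\T$ into $R^{-1}\Pi R^{-\T} - R^{-1} - R^{-\T}$, and the elementary matrix inequality $(R^{-\T} - \Pi^{-1})(-\Pi)(R^{-\T} - \Pi^{-1})^\T \prec 0$ shows $R^{-1}\Pi R^{-\T} - R^{-1} - R^{-\T} \succ -\Pi^{-1}$; hence the $(3,3)$-block may be replaced by $-\Pi^{-1}$ preserving the inequality. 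This produces exactly the LMIs~(\ref{eq:LMI 3x3 ANBRL inverse matrices}), (\ref{eq:LMI 4x4 ANBRL inverse matrices}) of Lemma~\ref{lemma:SANBRL inverse matrices} (with $\Phi = \Pi^{-1}$), whence together with the determinant constraint~(\ref{eq:det inequality sof conv sf}) the conclusion $\rho(\euA)<1$ and $\sn T_{ZW}\sn_a < \gamma$ follows.

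I expect the main obstacle to be bookkeeping the block structure in the congruence step rather than anything conceptually deep: one must verify carefully that the structured choices $\bR = \blockdiag(\bR_1,\bR_2)$ and $M = [\,M_1\ 0\,]$, combined with $\bC_y = [\,I_{p_y}\ 0\,]$, really do force the product $K \bC_y \bR$ to equal $M$ exactly (not merely up to the first block), so that no cross-terms between $\bR_2$ and $M_1$ survive — this is the filtering-side analogue of the identity $T_u^\T L = S B_u K$ in the previous proof, and it is the only place where the imposed sparsity patterns~(\ref{eq:R M def}) are used. Everything else is a transcription of the earlier argument with transposes inserted, and as noted in the remark preceding the theorem, the resulting conditions are only sufficient because the slack-variable bound $-\Pi^{-1} \prec R^{-1}\Pi R^{-\T} - R^{-1} - R^{-\T}$ is a one-way relaxation.
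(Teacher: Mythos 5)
Your proposal follows the paper's own proof essentially verbatim: undo the coordinate change $T_y$ by congruence transformations, use the structured variables~(\ref{eq:R M def}) together with $\bC_y=[\,I_{p_y}\ 0\,]$ to get the key identity $K\bC_y\bR=M$ (equivalently $MT_y^{-\T}=KC_yR$), and then apply the one-way slack bound $-\Pi^{-1}\preceq R^{-\T}\Pi R^{-1}-R^{-1}-R^{-\T}$ to land exactly in the hypotheses of Lemma~\ref{lemma:SANBRL inverse matrices} with $\Phi=\Pi^{-1}$. Apart from cosmetic slips that do not affect the argument (some transposes in the congruence matrices, an $I_{p_z}$ that should be $I_{n_x}$ in the third congruence block, and ``$(3,3)$-block'' where you mean the $(1,1)$-block containing $\Pi-R-R^\T$), the proof is correct and identical in substance to the paper's.
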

The proof is dual to that of Theorem~\ref{theorem:singular control
sof convex} and similar to that of~\cite{LLK_2006} where it is
derived for the $\Hinf$ norm performance criterion.
\begin{proof}
Let a solution to the problem~(\ref{eq:det inequality sof conv
sf})--(\ref{eq:pos def vars sof conv sf}) exist. Substitute the
realization matrices defined by~(\ref{eq:realization under Ty}) to
the LMIs~(\ref{eq:LMI 3x3 sof conv sf}), (\ref{eq:LMI 4x4 sof conv
sf}). Perform a congruence transformation with
$$
\blockdiag{(I_{m_w},T_y^{-1},I_{p_z})},\quad\blockdiag{(T_y^{-1},I_{m_w},T_y^{-1},I_{p_z})}
$$
on the LMIs~(\ref{eq:LMI 3x3 sof conv sf}), (\ref{eq:LMI 4x4 sof
conv sf}), respectively. Then define $R := T_y^{-1}\bR T_y^{-\T}$
and $\Pi := T_y^{-1}\bPi T_y^{-\T}$. From~(\ref{eq:R M def}) and
definition of $K = M_1R_1^{-1}$  it follows that $MT_y^{-\T} =
KC_yR,$ and the LMIs~(\ref{eq:LMI 3x3 sof conv sf}), (\ref{eq:LMI
4x4 sof conv sf}) can be rewritten as
$$
\left[
\begin{array}{ccc}
\Psi-\eta I_{m_w} & B_{w}^\T & D_{zw}^\T\\
B_{w} & -\Pi & 0\\
D_{zw} & 0 & -I_{p_z}
\end{array} \right] \prec 0,
$$
$$
\left[
\begin{array}{cccc}
\Pi-R-R^\T & 0 & R^\T(A+B_uKC_y)^\T & R^\T(C_z+D_{zu}KC_y)^\T\\
0 & -\eta I_{m_w} & B_w^\T & D_{zw}^\T\\
(A+B_uKC_y)R & B_w & -\Pi & 0\\
(C_z+D_{zu}KC_y)R & D_{zw} & 0 & -I_{p_z}
\end{array} \right] \prec 0,
$$
or, in terms of the closed-loop realization~(\ref{eq:CL
realization sof singular filtering}), as
\begin{equation}\label{eq:aninorm LMIs sof sf slack}
\left[
\begin{array}{ccc}
\Psi-\eta I_{m_w} & \euB^\T & \euD^\T\\
\euB & -\Pi & 0\\
\euD & 0 & -I_{p_z}
\end{array} \right] \prec 0,\quad
\left[
\begin{array}{cccc}
\Pi-R-R^\T & 0 & R^\T\euA^\T & R^\T\euC^\T\\
0 & -\eta I_{m_w} & \euB^\T & \euD^\T\\
\euA R & \euB & -\Pi & 0\\
\euC R & \euD & 0 & -I_{p_z}
\end{array} \right] \prec 0.
\end{equation}
Then, performing a congruence transformation with
$\blockdiag{(R^{-\T},I_{m_w},I_{n_x},I_{p_z})}$ on the last
inequality leads to
\begin{equation}\label{eq:aninorm LMI 4x4 sof sf slack}
\left[
\begin{array}{cccc}
R^{-\T}\Pi R^{-1}-R^{-1}-R^{-\T} & 0 & \euA^\T & \euC^\T\\
0 & -\eta I_{m_w} & \euB^\T & \euD^\T\\
\euA & \euB & -\Pi & 0\\
\euC & \euD & 0 & -I_{p_z}
\end{array} \right] \prec 0.
\end{equation}
From the inequality
$$
(R^{-1}-\Pi^{-1})^\T(-\Pi)(R^{-1}-\Pi^{-1}) \prec 0
$$
it is clear that
$$
-\Pi^{-1} \prec R^{-\T}\Pi R^{-1}-R^{-1}-R^{-\T}.
$$
Let us define $\Phi := \Pi^{-1}$. Then, by Lemma~\ref{lemma:SANBRL
inverse matrices}, from~(\ref{eq:det inequality sof conv sf}),
(\ref{eq:aninorm LMIs sof sf slack}), (\ref{eq:aninorm LMI 4x4 sof
sf slack}), (\ref{eq:pos def vars sof conv sf}) it follows that
the controller gain matrix $K$ is the solution to
Problem~\ref{problem:aniso sub sof} for the closed-loop
realization~(\ref{eq:CL realization sof singular filtering}),
which completes the proof.
\end{proof}

\begin{corollary}\label{corollary:sof problem convex opt sf}
With the notation $\wh{\gamma} := \gamma^2$, the conditions of
Theorem~\ref{theorem:singular filtering sof convex} allow for
$\gamma$ to be minimized via solving the convex optimization
problem
\begin{equation}\label{eq:minimum gamma2 sof sf}
\begin{array}{c}
\mathrm{minimize}\quad\wh{\gamma}\\
\mathrm{over}\quad \Psi, \bPi, \bR, M,\eta, \wh{\gamma}\quad
\mathrm{satisfying}\quad \mbox{(\ref{eq:det inequality sof conv
sf})--(\ref{eq:pos def vars sof conv sf})}.
\end{array}
\end{equation}
If the problem~(\ref{eq:minimum gamma2 sof sf}) is solvable, the
controller gain matrix is constructed just as in
Theorem~\ref{theorem:singular filtering sof convex}.
\end{corollary}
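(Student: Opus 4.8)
Everything the corollary asserts is already inside Theorem~\ref{theorem:singular filtering sof convex}; the proposal is simply to read off from that theorem the two extra features --- linearity in the squared threshold and convexity of the whole feasible set --- that make direct threshold minimization legitimate. The first step is to observe that in the constraint system~(\ref{eq:det inequality sof conv sf})--(\ref{eq:pos def vars sof conv sf}) the parameter $\gamma$ occurs only through $\gamma^2$, and only in two spots: affinely on the right-hand side of the determinant inequality~(\ref{eq:det inequality sof conv sf}) and in the scalar bound $\eta>\gamma^2$ in~(\ref{eq:pos def vars sof conv sf}). Introducing a fresh scalar variable $\wh\gamma$ in place of $\gamma^2$ therefore leaves the LMIs~(\ref{eq:LMI 3x3 sof conv sf}), (\ref{eq:LMI 4x4 sof conv sf}) completely untouched and renders the remaining constraints jointly convex in $(\Psi,\bPi,\bR,M,\eta,\wh\gamma)$ --- in fact affine in $\wh\gamma$. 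Hence minimizing the linear functional $\wh\gamma$ over this convex feasible set is a genuine convex program, namely~(\ref{eq:minimum gamma2 sof sf}), solvable by off-the-shelf semidefinite/interior-point software.

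The second step is the monotonicity remark: since $t\mapsto\sqrt t$ is strictly increasing on $(0,\infty)$, the decision variables attaining (or approaching) the smallest admissible $\wh\gamma$ are exactly those attaining (or approaching) the smallest admissible $\gamma=\sqrt{\wh\gamma}$, so minimizing $\wh\gamma$ and minimizing $\gamma$ have the same optimizers. The only delicate point --- and really the only thing resembling an obstacle here --- is that~(\ref{eq:det inequality sof conv sf}) is a \emph{strict} inequality, so the infimal value $\wh\gamma^\ast$ of~(\ref{eq:minimum gamma2 sof sf}) need not be attained; however, the program is feasible for every $\wh\gamma>\wh\gamma^\ast$, which already suffices to produce, for any $\gamma$ slightly above $\sqrt{\wh\gamma^\ast}$, a static output-feedback gain solving Problem~\ref{problem:aniso sub sof} at level $\gamma$. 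This is precisely the meaning of ``anisotropic $\gamma$-optimal'' adopted in the paper.

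The final step is controller recovery, which is verbatim that of Theorem~\ref{theorem:singular filtering sof convex}: given a minimizing (or near-minimizing) tuple $(\Psi,\bPi,\bR,M,\eta)$ with $\bR$ and $M$ of the block structure~(\ref{eq:R M def}), put $K=M_1\bR_1^{-1}$; feasibility of~(\ref{eq:LMI 3x3 sof conv sf}), (\ref{eq:LMI 4x4 sof conv sf}) then delivers, via that theorem, internal stability of~(\ref{eq:CL realization sof singular filtering}) together with $\sn T_{ZW}\sn_a<\gamma$. The one line of bookkeeping needed is that $\bR_1$ is nonsingular: the $(1,1)$ block of~(\ref{eq:LMI 4x4 sof conv sf}) forces $\bPi-\bR-\bR^\T\prec0$, so $\bR+\bR^\T\succ\bPi\succ0$, and the block-diagonal form of $\bR$ then gives $\bR_1+\bR_1^\T\succ0$. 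Beyond this and the strict-versus-attained-infimum remark above, there is no real difficulty --- the substance of the corollary is entirely inherited from Theorem~\ref{theorem:singular filtering sof convex}.
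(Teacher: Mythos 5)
Your proposal is correct and follows essentially the same route as the paper, which states this corollary without a separate proof and relies on exactly the observations you spell out: the constraints depend on $\gamma$ only through $\gamma^2$ (affinely), so minimizing $\wh{\gamma}:=\gamma^2$ over the convex feasible set minimizes $\gamma$, and the controller is recovered as in Theorem~\ref{theorem:singular filtering sof convex}. Your added check that $\bR_1$ is nonsingular (from the $(1,1)$ block of~(\ref{eq:LMI 4x4 sof conv sf}) forcing $\bR+\bR^\T\succ\bPi\succ0$) and the remark about the strict inequality preventing attainment of the infimum are both sound, if not stated explicitly in the paper.
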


It is noted in~\cite{LLK_2006} that since the singular control and
filtering problems are dual, the convex feasibility
problems~(\ref{eq:det inequality sof conv sc})--(\ref{eq:pos def
vars sof conv sc}) and (\ref{eq:det inequality sof conv
sf})--(\ref{eq:pos def vars sof conv sf}) of
Theorems~\ref{theorem:singular control sof convex} and
\ref{theorem:singular filtering sof convex} are in a sense dual
too, just as the convex optimization problems~(\ref{eq:minimum
gamma2 sof sc}) and (\ref{eq:minimum gamma2 sof sf}) of
Corollaries~\ref{corollary:sof problem convex opt sc} and
\ref{corollary:sof problem convex opt sf}. Replacing the
realization matrices and the variables  under the coordinate
transformation in the formulas of Theorem~\ref{theorem:singular
control sof convex} and Corollary~\ref{corollary:sof problem
convex opt sc} as
$$
\begin{array}{rcl}
\{ \bA,\bB_w,\bB_u,\bC_z,D_{zw},\bC_y,D_{yw} \} & \longrightarrow
& \{
\bA^\T,\bC_z^\T,\bC_y^\T,\bB_w^\T,D_{zw}^\T,\bB_u^\T,D_{zu}^\T
\},\\
\{ \eta,\Psi,\bPhi,\bS,L \} & \longrightarrow & \{
\eta,\Psi,\bPi,\bR^\T,M^\T \},
\end{array}
$$
we obtain the respective formulas of Theorem~\ref{theorem:singular
filtering sof convex} and Corollary~\ref{corollary:sof problem
convex opt sf} and the controller is given by $K \longrightarrow
K^\T.$

It is shown in~\cite{LLK_2006} that the results of
Theorem~\ref{theorem:singular control sof convex} and
Corollary~\ref{corollary:sof problem convex opt sc} can be applied
to synthesis of decentralized anisotropic suboptimal and
$\gamma$-optimal static output-feedback and fixed-order
controllers. In turn, Theorem~\ref{theorem:singular filtering sof
convex} and Corollary~\ref{corollary:sof problem convex opt sf}
allow to get a solution to simultaneous anisotropic
output-feedback control problems. These topics are beyond the
limits of this paper and may be discussed elsewhere.

\subsection{Fixed-order controller via convex optimization}

It is well-known (see e.g.~\cite{IS_1994}) that the fixed-order
dynamic controller synthesis problem can be embedded into a static
output-feedback design problem by augmentation of the plant states
with the controller states as
\begin{equation}\label{eq:plant augmentation sof}
\left[
\begin{array}{ccc}
\calA & \calB_w & \calB_u\\ \calC_z & \calD_{zw} & \calD_{zu}\\
\calC_y & \calD_{yw} & 0
\end{array}
\right] := \left[
\begin{array}{cc|c|cc}
A & 0 & B_w & 0 & B_u\\
0 & 0 & 0 & I_{n_\xi} & 0\\\hline C_z & 0 & D_{zw} & 0 &
D_{zu}\\\hline 0 & I_{n_\xi} & 0 & 0 & 0\\ C_y & 0 & D_{yw} & 0 &
0
\end{array}
\right].
\end{equation}
The closed-loop realization is then given by
$$
\left[
\begin{array}{cc}
\euA & \euB\\
\euC & \euD
\end{array}
\right] = \left[
\begin{array}{cc}
\calA & \calB_w\\
\calC_z & \calD_{zw}
\end{array}
\right] + \left[
\begin{array}{c}
\calB_u\\
\calD_{zu}
\end{array}
\right]K\left[
\begin{array}{cc}
\calC_y & \calD_{yw}
\end{array}
\right] = \left[
\begin{array}{cc}
\calA+\calB_uK\calC_y & \calB_w+\calB_uK\calD_{zw}\\
\calC_z+\calD_{zw}K\calC_y & \calD_{zw}+\calD_{zu}K\calD_{yw}
\end{array}
\right]
$$
where the gain matrix $K$ incorporates the controller parameters
\begin{equation}\label{eq:augmented Theta def}
K := \left[
\begin{array}{cc}
A_\c & B_\c\\ C_\c & D_\c
\end{array}
\right].
\end{equation}
Therefore, if the realization of the plant~(\ref{eq:standard
plant}) has one of the matrices $D_{zu}$ or $D_{yw}$ identically
zero with $B_u$ or $C_y$ of full column/row rank, respectively,
 we can make use of
Theorem~\ref{theorem:singular control sof convex} and
Corollary~\ref{corollary:sof problem convex opt sc} or
Theorem~\ref{theorem:singular filtering sof convex} and
Corollary~\ref{corollary:sof problem convex opt sf} to find the
fixed-order anisotropic $\gamma$-optimal (suboptimal) controller
as the static output-feedback gain~(\ref{eq:augmented Theta def})
for the realization~(\ref{eq:plant augmentation sof}) of the
augmented plant.

\section{Numerical examples}\label{sect:numerical example}

In this section we provide several purely illustrative numerical
examples of the anisotropic $\gamma$-optimal controller design via
convex optimization. Only two special design cases are considered,
namely, the full-order output-feedback controller and static
output-feedback gain defined in Theorems~\ref{theorem:full-order
output-feedback problem solution} and \ref{theorem:singular
filtering sof convex}, respectively. As regards general
Problems~\ref{problem:anisotropic suboptimal design},
\ref{problem:aniso sub sof} of the anisotropic suboptimal
controller design with the solutions defined by
Corollaries~\ref{corollay:fo of problem solution},
\ref{corollary:sof problem solution nonconv}, testing and
benchmark of various algorithms for finding reciprocal matrices
under convex constraints
(e.g.,~\cite{El_Ghaoui_1997,Balandin_Kogan_2005,Polyak_Gryazina_2011})
is the issue of future work and will be presented elsewhere.
However, it should be mentioned that the algorithms
of~\cite{El_Ghaoui_1997,Balandin_Kogan_2005} have been tested on
some reasonable number of state-space realizations randomly
generated by the MATLAB Control Systems Toolbox function
\texttt{drss} and some models from the \emph{COMP$l_eib$}
collection~\cite{Leibfritz_2004,Leibfritz_Lipinski_2003}. The
numerical experiments have shown that application of both of that
algorithms often leads to convergence to local minima and depend
on initial conditions. The randomized technique proposed
in~\cite{Polyak_Gryazina_2011} aimed at generation of the initial
conditions seems to be able to improve the situation.

All computations have been carried out by means of MATLAB 7.9.0
(R2009b), Control System Toolbox, and Robust Control Toolbox in
combination with the YALMIP interface~\cite{Lofberg_2004} and the
SeDuMi solver~\cite{Sturm_1999} with CPU P8700 $2\times2.53$GHz.

\subsection{Full-order output-feedback design}

\subsubsection{TU-154 aircraft landing}\label{subsubsect:TU-154}

First we consider the problem of longitudinal flight control in
landing approach under the influence of both deterministic and
stochastic external disturbances in conditions of a windshear and
noisy measurements. The control aims at disturbance attenuation
and stabilization of the aircraft longitudinal motion along some
desired glidepath. The linearized discrete time-invariant model of
TU-154 aircraft landing is given in~\cite{KPTV_2004}, where the
problem was solved by means of the anisotropic optimal controller
derived in~\cite{VKS_1996_2}. Here we present the results of
solving the anisotropic $\gamma$-optimal full-order synthesis
problem via convex optimization as defined in
Theorem~\ref{theorem:full-order output-feedback problem solution}
and Corollary~\ref{corollay:fo of problem solution}.

The mathematical model of the aircraft longitudinal motion
defining deviation from a nominal trajectory was derived
in~\cite{KPTV_2004} at the trajectory point characterized by the
airspeed $V_0 = 71.375$ m/sec, flying path slope angle $\theta_0 =
-2.7$ deg, pitch angle rate $\omega_{z0} = 0$ deg/sec, pitch angle
$\vartheta_0 = 0$ deg, height $h_0 = 600$ m, and thrust $T_0 =
52540$ newton. The model has order $n_x = 6$, two control inputs
(the signal $\Delta\vartheta_{cy}$ generated by the controller to
deflect the generalized ailerons and the throttle lever position
$\Delta\delta_{t}$) and two measured outputs (the airspeed $\Delta
V+w_{V,k}$ and the height $\Delta h+w_{h,k}$). The sampling time
of the model $\Delta t = 0.01$ sec.

The anisotropic $\gamma$-optimal controller $K_a$ was derived from
a solution to the convex optimization problem~(\ref{eq:minimum
gamma2 fo of}) as defined in Theorem~\ref{theorem:full-order
output-feedback problem solution}. The state-space realization of
the anisotropic $\gamma$-optimal controller $K_a$ computed for the
mean anisotropy level $a=0.7$ is presented below together with the
realizations of $\H2$ and $\Hinf$ optimal controllers $K_2$ and
$K_\infty$ computed by MATLAB Robust Control Toolbox functions
\texttt{h2syn} (Riccati equations technique) and \texttt{hinfsyn}
(LMI optimization technique):
\begin{equation*}\label{eq:TU154 H2cont}
K_2 = \left[\scriptsize
\begin{array}{cccccc|cc}
0.9901   &   -0.0008      &      0   &   -0.0009  &  -0.000133    &   0.0009  & 0.009301  &  0.000133\\
0.002025   &    0.9962  &   0.001999  &   0.008616 &   -0.002482 &  6.243\cdot 10^{-5} & 0.0009729  &  0.003669\\
-0.007851  &   -0.01844    &   0.9754   &   -0.0292   &  -0.01198  & -0.0006086 & 0.0001711 &  0.0003985\\
-0.0001271  & -0.0002021  &   0.009825   &    0.9998  &  -0.001113 & -5.202\cdot 10^{-6} & 6.059\cdot 10^{-5} &   0.001014\\
-0.0006442   &    0.0124    &        0     &       0   &    0.9862    &        0 & 0.0001442   &  0.01381\\
-0.003035 &  -0.0006769 &  -0.0001388 &  -0.0001432  & -0.0004761
& 0.9954 & 0   &  0\\\hline -0.6649  &  -2.021  &  -0.749  &   -1.18 &  -0.9897 & -0.05202 & 0 & 0\\
-0.7587 &  -0.1692 & -0.03469 & -0.03581  &  -0.119  & -0.1572 & 0
&  0
\end{array}
\right],
\end{equation*}
\begin{equation*}\label{eq:TU154 anicont}
K_a = \left[\scriptsize
\begin{array}{cccccc|cc}
0.9959 &  -0.0001701 &  -0.0009358 &  -0.001023 &   0.005572 &
0.01975 & 1.698  &  0.7115\\
-0.001248  &    0.9946  &  0.007195 & -0.0001598  &  -0.00115 &
0.002974 & 0.2287 &   0.1535\\
0.003114  &  -0.01651   &   0.9865 &  0.0004621   &  0.01104 &
-0.004157 & -0.02124 & -0.06646\\
0.0009071 &  0.0004571 &  -0.002899   &   0.9953  &  -0.00819 &
0.006493 & -5.48 &   -1.223\\
-0.001239  & -0.004594 &  -0.002913 & -0.0006268   &   0.9905 &
-0.003993 & 22.7  &   1.848\\
-0.0006717  &   -0.0216  &   -0.0315  &  -0.06266  &  0.007809 &
0.9647 & 79.44   &  34.71\\\hline 5.558\cdot 10^{-6} &  4.835\cdot
10^{-5} & -2.522\cdot 10^{-5} & 0.0006601  &   0.002001 &
-0.001502 & -0.08091 & -0.05013\\
1.122\cdot 10^{-5} &  4.891\cdot 10^{-5}  & 7.805\cdot 10^{-5}  &
0.0001031 & -0.0005779 & -0.001997 & -0.1794 &  -0.07335
\end{array}
\right],
\end{equation*}
\begin{equation*}\label{eq:TU154 Hinfcont}
K_\infty = \left[\scriptsize
\begin{array}{cccccc|cc}
0.9953  &  -0.01065 &  0.0009194  &  0.001878   & 0.001211   &  -0.0018 & 0.005824 &  0.002953\\
0.01093   &   0.9835  &  0.001251  &  0.001524 &   0.001613 &  -0.003523 & 0.01302 & -0.004987\\
-0.004656  &  0.005729  &    0.9938 & -0.0003104 & -0.0007016 &  -0.003079 & 0.01889  & 0.009435\\
0.0032  &  -0.03714  &  0.009426  &    0.9808  & -0.003764  &  0.005483 & 0.01575  &  -0.2312\\
-0.004986  &   0.09533  &  -0.01069   &  0.05927   &   0.9833  &  0.009611 & -0.2954  &   0.9168\\
0.1371   &    -0.23   &   0.3707  &   -0.1227 &  -0.004499 & 0.805
&
-5.644  &   -2.599\\\hline 0.01302 & -0.007612  &  0.02119 &   -0.0378  & -0.06662  &   0.1193 & -0.3646 & -0.1998\\
0.01882 &  -0.02973  &  0.05876 &  -0.01479 &   0.01373   & 0.2482
& -0.778 &  -0.3508
\end{array}
\right].
\end{equation*}
The results of simulation of the closed-loop systems in conditions
of a windshear and noisy measurements are presented together with
the problem solution results in Table~\ref{table:TU-154} below and
illustrated in Figures~\ref{fig:TU154 output}--\ref{fig:TU154
disturb}. In the simulation we use a typical wind profile
described by the ring vortex downburst model~\cite{Ivan_1996}.

\begin{table}[thpb]
  \caption{TU-154 aircraft landing. Comparison of closed-loop systems}
  \begin{center}\label{table:TU-154}
\scriptsize
  \begin{tabular}{|cc|c|c|c|}
\hline & & \multicolumn{3}{|c|}{\underline{Controller in feedback loop}}\\
 & & $K_2$ & $K_a$ & $K_\infty$\\\hline
 \multicolumn{2}{c}{Solution results:}\\\hline
 $\min\gamma$ & & 0.516 & 5.4203 & 10.894\\
  $\|T_{ZW}\|_2$ &  & 0.516 & 1.1473 & 3.1448\\
$\sn T_{ZW}\sn_{0.7}$ &  & 7.8391 & 5.1768 & 5.5944\\
 $\|T_{ZW}\|_\infty$ & & 15.855 & 10.93 & 10.891\\
CPU time, & sec & 0.78001 & 5.928 & 1.7004\\
  \hline\multicolumn{2}{c}{Simulation results:}\\\hline
$\max{|\Delta V|},$ & m/sec & 11.3 & 3.559 & 4.329\\
$\max{|\Delta h|},$ & m & 54.79 & 46.87 & 39.79\\\hline
$\max{|\Delta\theta|},$ & deg & 14.86 & 16.04 & 31.6\\
$\max{|\Delta\omega_z|},$ & deg/sec & 4.884 & 5.043 & 10.56\\
$\max{|\Delta\vartheta|},$ & deg & 19.06 & 19 & 38.08\\
$\max{|\Delta T|},$ & kN & 7.263 & 22.58 & 42.48\\\hline
$\max{|\Delta\vartheta_{cy}|},$ & deg & 20.7 & 20.8 & 21.91\\
$\max{|\Delta\delta_t|},$ & deg & 8.224 & 29.25 & 29.23\\
\hline
  \end{tabular}
  \end{center}
\end{table}

From the solution results in Table~\ref{table:TU-154} we can
conclude that
\begin{itemize}
\item the respective minimum square root values of the objective
functions satisfy $\gamma_2 < \gamma_a < \gamma_\infty$; \item the
$a$-anisotropic norm of the closed-loop system with the
anisotropic $\gamma$-optimal controller satisfies $\sn
T_{ZW}\sn_{0.7}<\gamma_a;$ the controller is actually suboptimal.
\end{itemize}
Analysis of the simulation results presented in
Table~\ref{table:TU-154} and Figures~\ref{fig:TU154
output}--\ref{fig:TU154 disturb} shows that
\begin{itemize}
\item the anisotropic $\gamma$-optimal controller results in the
least maximal absolute deviation of the airspeed and admissible
maximal absolute deviation of the height; \item the worst maximal
absolute deviations of the controlled variables are demonstrated
by the $\H2$ optimal controller;
 \item the anisotropic
controller provides the maximal absolute deviation of the thrust
required for the manoeuvre \emph{almost two times less} than the
additional thrust required by the system with the $\Hinf$
controller; \item the same concerns the maximal absolute
deviations of the trajectory slope angle, pitch rate, and pitch;
\item the least maximal additional thrust is required by the
closed-loop system with the $\H2$ optimal controller; \item the
maximal values of the control signals of the anisotropic and
$\Hinf$ controllers are close, the control generated by the
anisotropic controller looks more realistic.
\end{itemize}

The anisotropic $\gamma$-optimal controller is obviously more
effective than the $\H2$ controller and less conservative than the
$\Hinf$ controller in this example of the disturbance attenuation
problem.

\begin{figure}[thpb]
      \centering
      \includegraphics[width=170mm]{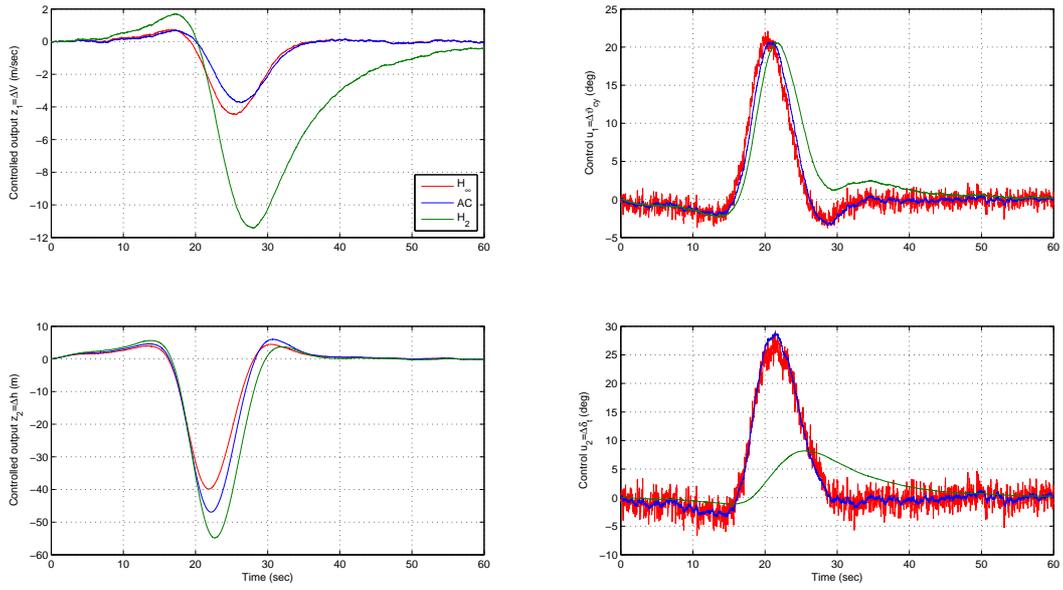}
      \caption{TU-154 aircraft landing. Airspeed $\Delta V$, height $\Delta h$ (left plots) and control signals $\Delta\vartheta_{cy}$, $\Delta\delta_t$ (right plots)}\label{fig:TU154 output}
\end{figure}

\begin{figure}[tphb]
      \centering
      \includegraphics[width=170mm]{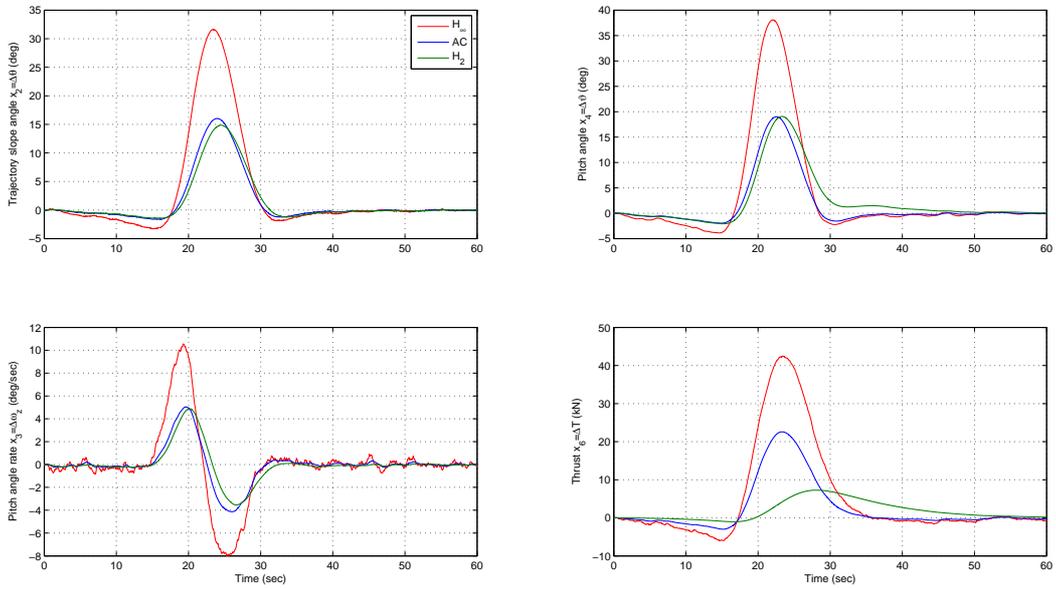}
      \caption{TU-154 aircraft landing. Trajectory slope angle $\Delta\theta$, pitch angle rate $\Delta\omega_z$ (left plots), pitch angle $\Delta\vartheta$, thrust $\Delta T$ (right plots)}\label{fig:TU154 state}
\end{figure}

\begin{figure}[tphb]
      \centering
      \includegraphics[width=170mm]{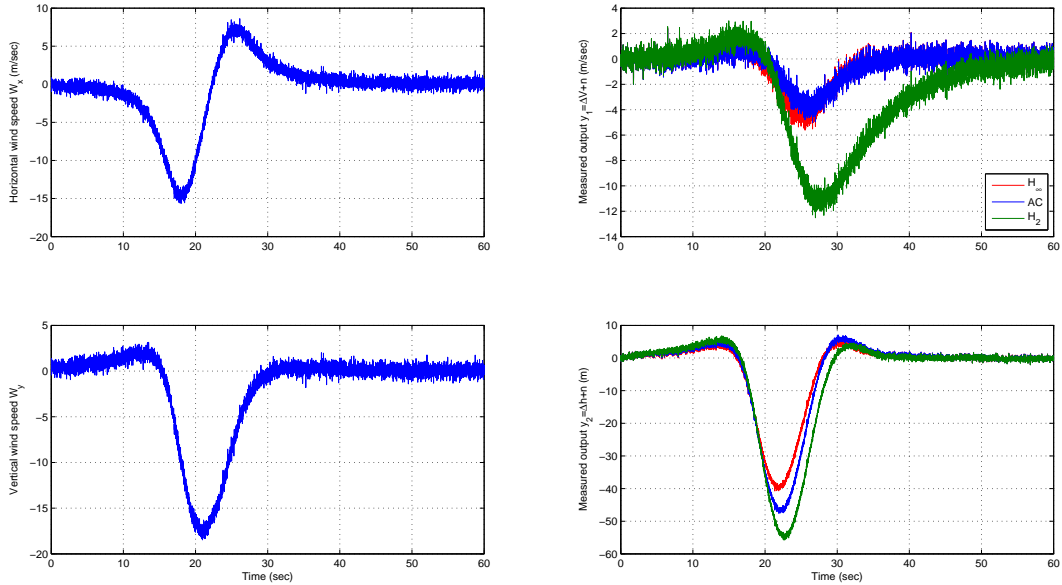}
      \caption{TU-154 aircraft landing. Wind profile (left plots) and noisy measurements (right plots)}\label{fig:TU154 disturb}
\end{figure}

\newpage

\subsubsection{$\bf{COMPl_eib}$ examples}

The anisotropic $\gamma$-optimal full-order controllers have been
computed for some models from the \emph{COMP$l_eib$}
collection~\cite{Leibfritz_2004,Leibfritz_Lipinski_2003} listed
below in Table~\ref{table:complib foc}. All of them were converted
from continuous- to discrete-time models with the sampling time
$\Delta t$. It is known from~\cite{Leibfritz_Lipinski_2003} that
almost all of these models (excepting ROC5) are SOF-stabilizable,
but here the respective problems are solved by the dynamic
full-order output-feedback controllers for the testing purpose
solely. In~\cite{TK_2011_prep} it is shown that satisfying the
conditions of SANBRL with $a\to0,+\infty$ ensures the $\H2$ and
$\Hinf$ norms not to exceed a given threshold value. Therefore the
$\H2$ and $\Hinf$ controllers for the respective problems have
also been derived as the limiting cases of the anisotropic
controller from a solution to the convex optimization
problem~(\ref{eq:minimum gamma2 fo of}) as defined in
Theorem~\ref{theorem:full-order output-feedback problem solution}
but with the respective input mean anisotropy levels $a=0$ and
$a=+\infty$.

\begin{table}[thpb]
  \caption{Examples from the \emph{COMP$l_eib$}
collection~\cite{Leibfritz_2004,Leibfritz_Lipinski_2003}.
Full-order design}
  \label{table:complib foc}
  \begin{center}\scriptsize
  \begin{tabular}{|c|c|c|c|c|c|c|c|c|c|}\hline
  Model & $(n_x,m_u,p_y)$ & $\Delta t$ & $\min\gamma_2$ & $a$ & $\min\gamma_a$ & $\min\gamma_\infty$ & \multicolumn{3}{|c|}{\underline{CPU time
  (sec)}}\\
  & & (sec) & & & & & $K_2$ & $K_a$ & $K_\infty$\\\hline
AC4 & $(4,1,2)$ & 0.0005 & 0.1782 & 0.015 & 0.20071 & 0.56227 & 0.92041 & 3.276 & 0.6864\\
AC7 & $(9,1,2)$ & 0.01 & 0.0042953 & 0.05 & 0.0094512 & 0.043755 & 2.7768 & 5.8032 & 2.0592\\
AC8 & $(9,1,5)$ & 0.01 & 0.049999 & 0.05 & 0.20454 & 1.5429 & 3.1668 & 5.7564 & 2.262\\
AC9 & $(10,4,5)$ & 0.01 & 0.04454 & 0.9 & 0.43057 & 1.0007 & 5.4912 & 13.9 & 5.2572\\
AC12 & $(4,3,4)$ & 0.01 & 0.0012071 & 0.01 & 0.0037555 & 0.31439 & 3.6504 & 3.4632 & 1.4508\\
HE3 & $(8,4,6)$ & 0.01 & 0.081028 & 0.015 & 0.18837 & 0.802 & 5.5068 & 2.964 & 3.1512\\
HE5 & $(8,4,2)$ & 0.01 & 0.11888 & 0.2 & 0.67939 & 1.5066 & 5.2104 & 2.3244 & 1.7784\\
HE6 & $(20,4,6)$ & 0.01 & 0.65791 & 0.05 & 0.78951 & 2.3755 & 22.745 & 25.007 & 21.202\\
HE7 & $(20,4,6)$ & 0.01 & 0.55239 & 0.05 & 0.68603 & 2.4341 & 24.087 & 24.679 & 21.481\\
JE1 & $(30,3,5)$ & 0.01 & 0.76355 & 0.1 & 1.1173 & --- & 287.03 & 345.65 & --- \\
JE3 & $(24,3,6)$ & 0.01 & 1.107 & 0.07 & 1.2814 & 2.4149 & 86.035 & 96.05 & 75.192 \\
EB1 & $(10,1,1)$ & 0.001 & 0.044894 & 3 & 3.0259 & 3.1041 & 5.9436 & 3.3228 & 2.6988\\
EB2 & $(10,1,1)$ & 0.001 & 0.027729 & 3 & 1.7246 & 1.7677 & 5.772 & 4.5552 & 3.9312\\
EB3 & $(10,1,1)$ & 0.001 & 0.029817 & 0.3 & 0.92218 & 1.7974 & 5.9748 & 3.198 & 3.1044\\
EB4 & $(20,1,1)$ & 0.001 & 0.030079 & 0.3 & 0.9219 & 1.7863 & 20.592 & 20.202 & 16.614\\
EB5 & $(40,1,1)$ & 0.001 & 0.029731 & 0.3 & 0.92087 & 1.7906 & 1042.6 & 1575.3 & 1258.5\\
ROC5 & $(7,3,5)$ & 0.001 & 0.0029492  & 0.7 & 0.0013201 & 0.0016873 & 1.3416 & 5.0856 & 1.6848\\
TF1 & $(7,2,4)$ & 0.1 & 0.043013 & 0.25 & 0.18306 & 0.24883 & 1.3884 & 3.7284 & 1.7004\\
TF3 & $(7,2,3)$ & 0.1 & 0.043081 & 0.25 & 0.18288 & 0.24799 &
1.4664 & 4.2588 & 1.8564\\\hline
  \end{tabular}
  \end{center}
\end{table}

Below we present the solution and simulation results for the
autopilot control problem for an air-to-air missile (AC4)
initially presented in~\cite{FAN_2001}, where this problem is
considered in robust setting and requires that the autopilot
generates the tail deflection $\delta$ to produce an angle of
attack $\alpha$ corresponding to a manoeuvre defined by the
guidance law~\cite{FAN_2001}. More precisely, the control aims at
tracking step input commands $\alpha_c$ with a steady state
accuracy of 1\%, achieving a rise time less than 0.2\,sec and
limiting overshoot to 2\% over a range of angles of attack
$\pm20$\,deg and variations in Mach number 2.5 to
3.5~\cite{FAN_2001}. The model AC4 from the \emph{COMP$l_eib$}
collection~\cite{Leibfritz_2004} does not take into account the
variations in Mach number and therefore not include uncertain
parameters.

The state-space realization of the anisotropic $\gamma$-optimal
controller $K_a$ computed for the mean anisotropy level $a=0.015$
is presented below together with the realizations of the $\H2$ and
$\Hinf$ controllers $K_2$ and $K_\infty$:
\begin{equation*}\label{eq:AC4 H2cont}
K_2 = \left[\scriptsize
\begin{array}{cccc|cc}
0.9966  & 0.0008498  &  0.000411 & 8.217\cdot10^{-5} &
-0.005969 & -0.002409\\
0.01915   &   0.9961 &  -0.003387 &  0.0003514 & -1.051   &  0.1632\\
-0.06821  &  0.008199  &     0.994  &  0.004536 & 1.194  &  -0.5822\\
0.4085  &  -0.04165 &  -0.006731  &    0.9939 & 11.78 &
5.414\\\hline 0.003389  & -0.001572 & -0.0008248 & -0.0001071 &
-0.0005778 & -0.0004742
\end{array}
\right],
\end{equation*}
\begin{equation*}\label{eq:AC4 anicont}
K_a = \left[\scriptsize
\begin{array}{cccc|cc}
0.9971 &  0.0006249  & 0.0002408 & 9.455\cdot10^{-5} & -0.02163 & 0.006044\\
0.006976   &   0.9975 &  -0.003433  &  0.001949 & -2.477 & -0.03384\\
-0.0344  &  0.005264   &    0.993  &  0.009394 & 3.793  & -0.1013\\
0.2344  &  -0.04362 &  -0.001715   &   0.9836 & 27.38 &
6.235\\\hline 0.002583 &  -0.001028 & -0.0004369 &  -0.000115 &
-0.002215 & -0.0006102
\end{array}
\right],
\end{equation*}
\begin{equation*}\label{eq:AC4 Hinfcont}
K_\infty = \left[\scriptsize
\begin{array}{cccc|cc}
0.9702 &  0.001709 & 0.0001139 & 0.0001427 & 0.07424 & -0.6099\\
0.3737  &   0.9805 & -0.000889  &  0.01253 & 6.54  &  9.539\\
-0.5466 &  0.009232  &   0.9905  &  0.02673 & 14.21  &  -11.1\\
0.2451  &  -0.1887 &   0.06776 &  -0.07204 & 210.2 & 29.54\\\hline
-7.581\cdot10^{-5}  & -0.0006162 &  -0.0002021  & -0.0002354 &
-0.1208 & -0.05091
\end{array}
\right].
\end{equation*}

The results of simulation of the closed-loop systems with the
$\H2$, anisotropic and $\Hinf$ controllers $K_2$, $K_a$ and
$K_\infty$ in conditions of
 noisy measurements are
illustrated in Figures~\ref{fig:missile alpha
delta}--\ref{fig:missile Bode step}. In simulation we generated
the reference commands $\alpha_c$ as steps with random amplitudes
and equal fixed durations. The step responses in
Figure~\ref{fig:missile Bode step} show that the closed-loop rise
time pointed out in these plots does not exceed the desired
0.2\,sec for all three controllers. At that, the diagrams of
Figure~\ref{fig:missile alpha delta} demonstrate acceptable
tracking performance and lesser amplitude of the control $\delta$
required in the closed-loop system with the anisotropic controller
in comparison with the $\Hinf$ controller.

\begin{figure}[thpb]
      \centering
      \includegraphics[width=170mm]{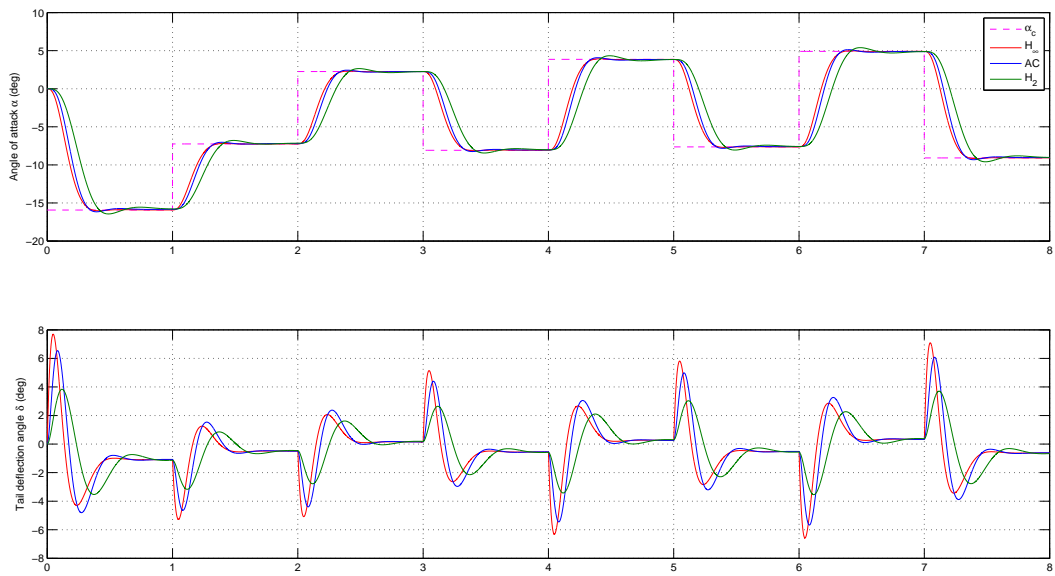}
      \caption{Model AC4 (air-to-air missile)~\cite{FAN_2001,Leibfritz_2004,Leibfritz_Lipinski_2003}. Angle of attack $\alpha$ (top plot) and tail deflection angle $\delta$ (bottom plot)}\label{fig:missile alpha delta}
\end{figure}

\begin{figure}[thpb]
      \centering
      \includegraphics[width=170mm]{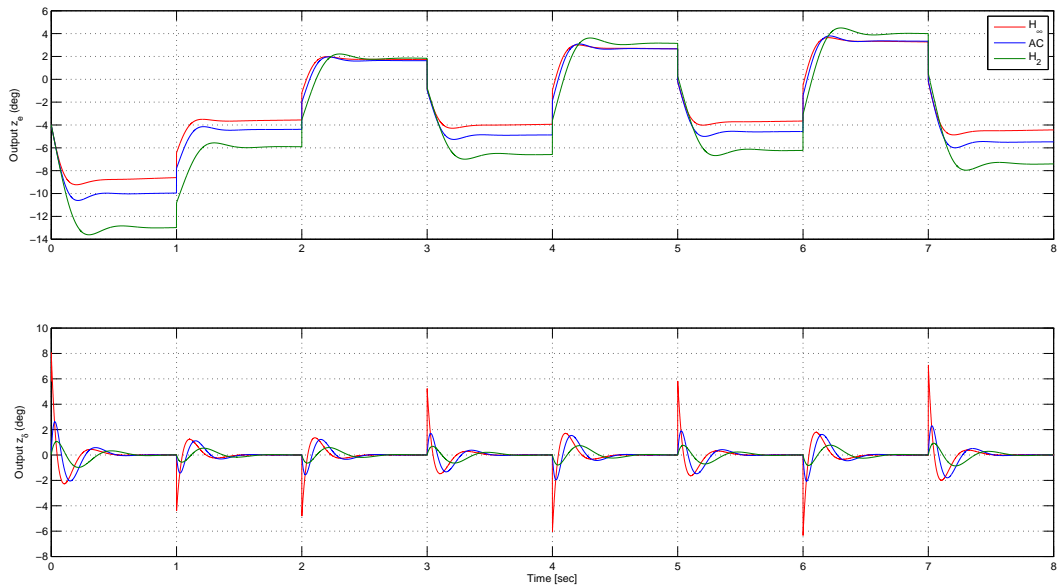}
      \caption{Model AC4 (air-to-air missile)~\cite{FAN_2001,Leibfritz_2004,Leibfritz_Lipinski_2003}. Controlled output $z$}\label{fig:missile z}
\end{figure}

\begin{figure}[thpb]
      \centering
      \includegraphics[width=170mm]{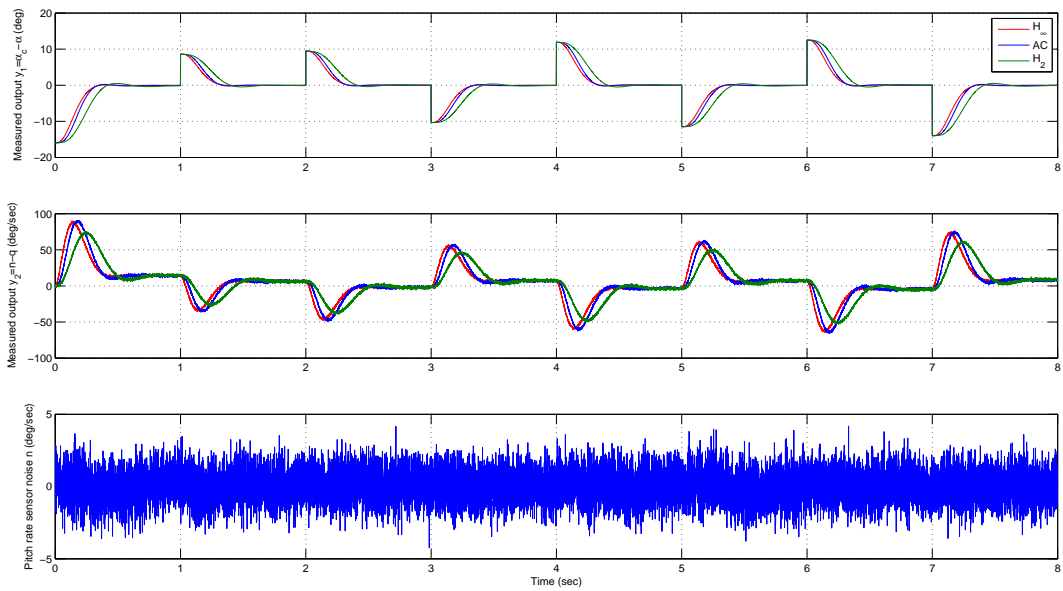}
      \caption{Model AC4 (air-to-air missile)~\cite{FAN_2001,Leibfritz_2004,Leibfritz_Lipinski_2003}. Measurement $y$ (top plots) and pitch rate sensor noise $n$ (bottom plot)}\label{fig:missile y n}
\end{figure}

\begin{figure}[thpb]
      \centering
      \includegraphics[width=170mm]{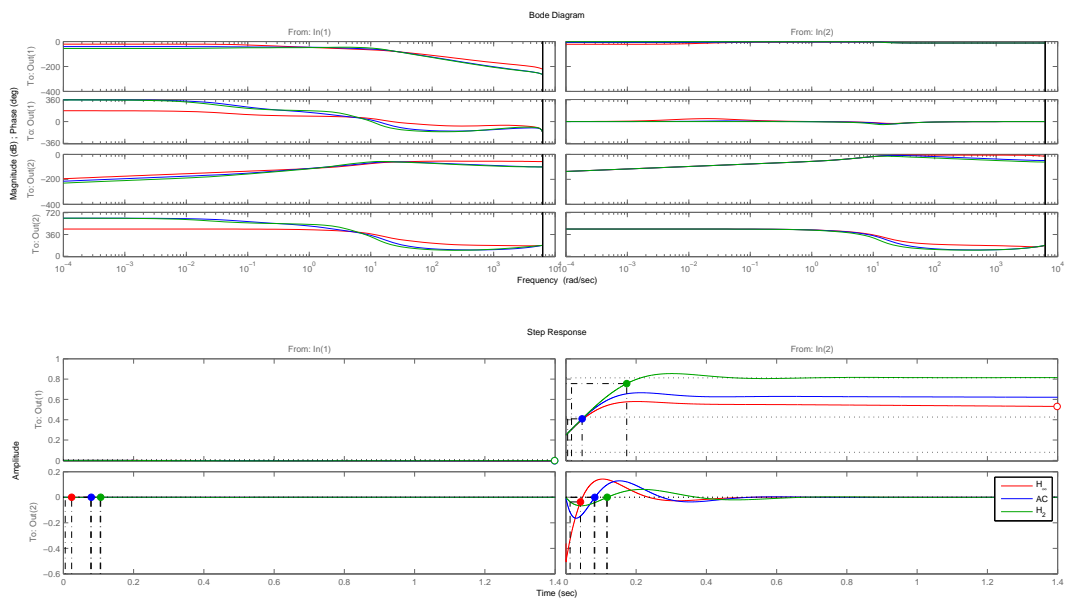}
      \caption{Model AC4 (air-to-air missile)~\cite{FAN_2001,Leibfritz_2004,Leibfritz_Lipinski_2003}. Bode diagram (top plots) and step response (bottom plots)}\label{fig:missile Bode step}
\end{figure}

\newpage

\subsection{Static output-feedback design}

The anisotropic $\gamma$-optimal static output-feedback
controllers have been computed for a number of singular filtering
problems from the \emph{COMP$l_eib$}
collection~\cite{Leibfritz_2004,Leibfritz_Lipinski_2003} listed
below in Table~\ref{table:complib sof}. As above, all of them were
converted from continuous- to discrete-time models with the
sampling time $\Delta t$. The $\H2$ and $\Hinf$ controllers for
the respective problems have been derived as the limiting cases of
the anisotropic controller from a solution to the convex
optimization problem~(\ref{eq:minimum gamma2 sof sf}) as defined
in Theorem~\ref{theorem:singular filtering sof convex} and
Corollary~\ref{corollary:sof problem convex opt sf} but with the
respective input mean anisotropy levels $a=0$ and $a=+\infty$.

\begin{table}[thpb]
  \caption{Examples from the \emph{COMP$l_eib$}
collection~\cite{Leibfritz_2004,Leibfritz_Lipinski_2003}. Static
output-feedback design}
  \label{table:complib sof}
  \begin{center}\scriptsize
  \begin{tabular}{|c|c|c|c|c|c|c|c|c|c|}\hline
  Model & $(n_x,m_u,p_y)$ & $\Delta t$ & $\min\gamma_2$ & $a$ & $\min\gamma_a$ & $\min\gamma_\infty$ & \multicolumn{3}{|c|}{\underline{CPU time
  (sec)}}\\
  & & (sec) & & & & & $K_2$ & $K_a$ & $K_\infty$\\\hline
AC1 & $(5,3,3)$ & 0.01 & 0.00045695 & 0.9 & 0.0034448 & 0.0036873 & 0.81121 & 3.042 & 0.546\\
AC2 & $(5,3,3)$ & 0.01 & 0.021254 & 0.9 & 1.3559 & 1.6199 & 0.99841 & 3.432 & 0.6552\\
AC15 & $(4,2,3)$ & 0.0001 & 0.037899 & 0.8 & 0.67708 & 0.79834 & 3.588 & 1.0764 & 0.702\\
HE1 & $(4,2,1)$ & 0.0001 & 0.00075643 & 0.15 & 0.0063848 & 0.0099472 & 2.8548 & 0.546 & 0.5148\\
HE4 & $(8,4,6)$ & 0.01 & 2.8727 & 0.05 & 8.0104 & 21.823 & 5.8812 & 3.6816 & 1.7784\\
NN15 & $(3,2,2)$ & 0.001 & 0.015202 & 0.3 & 0.25514 & 0.3441 & 3.0888 & 0.7488 & 0.81121\\
NN16 & $(8,4,4)$ & 0.001 & 0.0098319 & 0.5 & 0.20576 & 0.41639 & 4.0872 & 1.716 & 0.93601\\
BDT1 & $(11,3,3)$ & 1 & 0.010557 & 0.007 & 0.042299 & 0.32302 & 4.4928 & 1.3884 & 1.2168\\
PSM & $(7,2,3)$ & 0.001 & 0.035481 & 0.01 & 0.10554 & 0.92672 & 1.7784 & 4.134 & 0.7644\\
UWV & $(8,2,2)$ & 0.001 & 0.016479 & 0.03 & 0.011414 & 0.024207 &
3.1824 & 1.0452 & 0.90481\\\hline
  \end{tabular}
  \end{center}
\end{table}

For the purely illustrative purpose, below we present the solution
and simulation results for the aircraft control problem (AC1)
initially considered in~\cite{Hung_MacFarlane_1982}. The model AC1
from the \emph{COMP$l_eib$} collection~\cite{Leibfritz_2004} is
recast into a disturbance attenuation singular filtering problem
with noiseless measurements. The anisotropic $\gamma$-optimal
static gain $K_a$ computed for the mean anisotropy level $a=0.9$
is presented below together with the $\H2$ and $\Hinf$ gains $K_2$
and $K_\infty:$
\begin{equation*}\label{eq:AC1 H2cont}
K_2 = \left[\scriptsize
\begin{array}{ccc}
7.278\cdot10^{-5}   &  -0.9994 &  -0.000203\\
-0.0002887  & -0.002706  &   -0.9966\\
-0.9871   &   -14.34   &    49.35
\end{array}
\right],
\end{equation*}
\begin{equation*}\label{eq:AC1 anicont}
K_a = \left[\scriptsize
\begin{array}{ccc}
2.935\cdot10^{-6}    &       -1 & -1.241\cdot10^{-5}\\
-1.025\cdot10^{-5}  & -0.0001273  &    -0.9998\\
-0.5795   &    -12.46    &    54.56
\end{array}
\right],
\end{equation*}
\begin{equation*}\label{eq:AC1 Hinfcont}
K_\infty = \left[\scriptsize
\begin{array}{ccc}
-4.024\cdot10^{-7}     &      -1 &  1.904\cdot10^{-6}\\
-5.207\cdot10^{-6} & -6.032\cdot10^{-5}   &   -0.9999\\
-0.6788   &    -13.12    &    53.39
\end{array}
\right].
\end{equation*}

The results of simulation of the closed-loop systems in conditions
of a windshear are presented together with the problem solution
results in Table~\ref{table:AC1} below and illustrated in
Figures~\ref{fig:AC1 out contr}--\ref{fig:AC1 responses}. In the
simulation we use the same wind profile as in the example of
TU-154 aircraft flight control in Section~\ref{subsubsect:TU-154}.

\begin{table}[thpb]
  \caption{Example AC1 (aircraft) from the \emph{COMP$l_eib$}
collection~\cite{Leibfritz_2004,Leibfritz_Lipinski_2003}.
Comparison of closed-loop systems}
  \begin{center}\label{table:AC1}
\scriptsize
  \begin{tabular}{|cc|c|c|c|}
\hline & & \multicolumn{3}{|c|}{\underline{Controller in feedback loop}}\\
 & & $K_2$ & $K_a$ & $K_\infty$\\\hline
 \multicolumn{2}{c}{Solution results:}\\\hline
 $\min\gamma$ & & 0.00045695 & 0.0034448 & 0.0036873\\
 $\|T_{ZW}\|_2$ &  & $2.6532\cdot10^{-5}$ & $1.2762\cdot10^{-6}$ & $6.3218\cdot10^{-7}$\\
$\sn T_{ZW}\sn_{0.9}$ &  & 0.00050863 & $2.3466\cdot10^{-5}$ & $1.1795\cdot10^{-5}$\\
  $\|T_{ZW}\|_\infty$ & & 0.00075676 & $3.5153\cdot10^{-5}$ & $1.7708\cdot10^{-5}$\\
CPU time, & sec & 0.81121 & 3.042 & 0.546\\
  \hline\multicolumn{2}{c}{Simulation results:}\\\hline
  $\max{|z_1|},$ & m/sec & $9.539\cdot10^{-5}$  & $4.941\cdot10^{-6}$ & $1.368\cdot10^{-6}$\\
  $\max{|z_2|},$ & deg & 0.0003134  & $1.437\cdot10^{-5}$   & $9.539\cdot10^{-5}$\\\hline
$\max{|x_1|},$ & m & 3.152 & 3.412  & 3.35\\
$\max{|x_2|},$ & m/sec & 0.1647 & 0.1108 & 0.124\\
$\max{|x_3|},$ & deg & 0.02948  & 0.0192  & 0.02172\\
$\max{|x_4|},$ & deg/sec & 0.008596  & 0.006704  & 0.006841\\
$\max{|x_5|},$ & m/sec & 0.406  & 0.278  & 0.3097 \\\hline
$\max{|u_1|},$ & $10^{-1}$\,deg & 0.1648  & 0.1108  & 0.124\\
$\max{|u_2|},$ & m/sec$^{2}$ & 0.0299 & 0.01922  & 0.02173\\
$\max{|u_3|},$ & deg & 0.2117  & 0.1355  & 0.154\\
\hline
  \end{tabular}
  \end{center}
\end{table}

The solution results presented in Table~\ref{table:AC1} shows that
\begin{itemize}
\item the respective minimum square root values of the objective
functions satisfy $\gamma_2 < \gamma_a < \gamma_\infty$; \item the
$a$-anisotropic norm of the closed-loop system with the
anisotropic $\gamma$-optimal static gain satisfies $\sn
T_{ZW}\sn_{0.9}<\gamma_a;$ the controller is actually suboptimal;
\item the $\H2$ and $\Hinf$ norms of the closed-loop systems with
the respective $\gamma$-optimal gains satisfy
$\|T_{ZW}\|_2<\gamma_2$, $\|T_{ZW}\|_\infty<\gamma_\infty;$ the
$\H2$ and $\Hinf$ controllers are actually suboptimal too.
\end{itemize}
The simulation results presented in Table~\ref{table:AC1} and
Figures~\ref{fig:AC1 out contr}--\ref{fig:AC1 responses} allow to
conclude that
\begin{itemize}
\item the anisotropic $\gamma$-optimal output-feedback static gain
leads to the least maximal absolute deviations of the forward
speed $x_2$, pitch angle $x_3$, pitch angle rate $x_4$, and
vertical speed $x_5$, at that the least maximal absolute deviation
of the height error $x_1$ is achieved with the $\H2$
$\gamma$-optimal static gain; \item the worst maximal absolute
values of the controlled output are demonstrated by $\H2$
$\gamma$-optimal static gain;
 \item the anisotropic
$\gamma$-optimal static gain leads to the least maximum absolute
amplitudes of the control signals.
\end{itemize}

\begin{figure}[thpb]
      \centering
      \includegraphics[width=170mm]{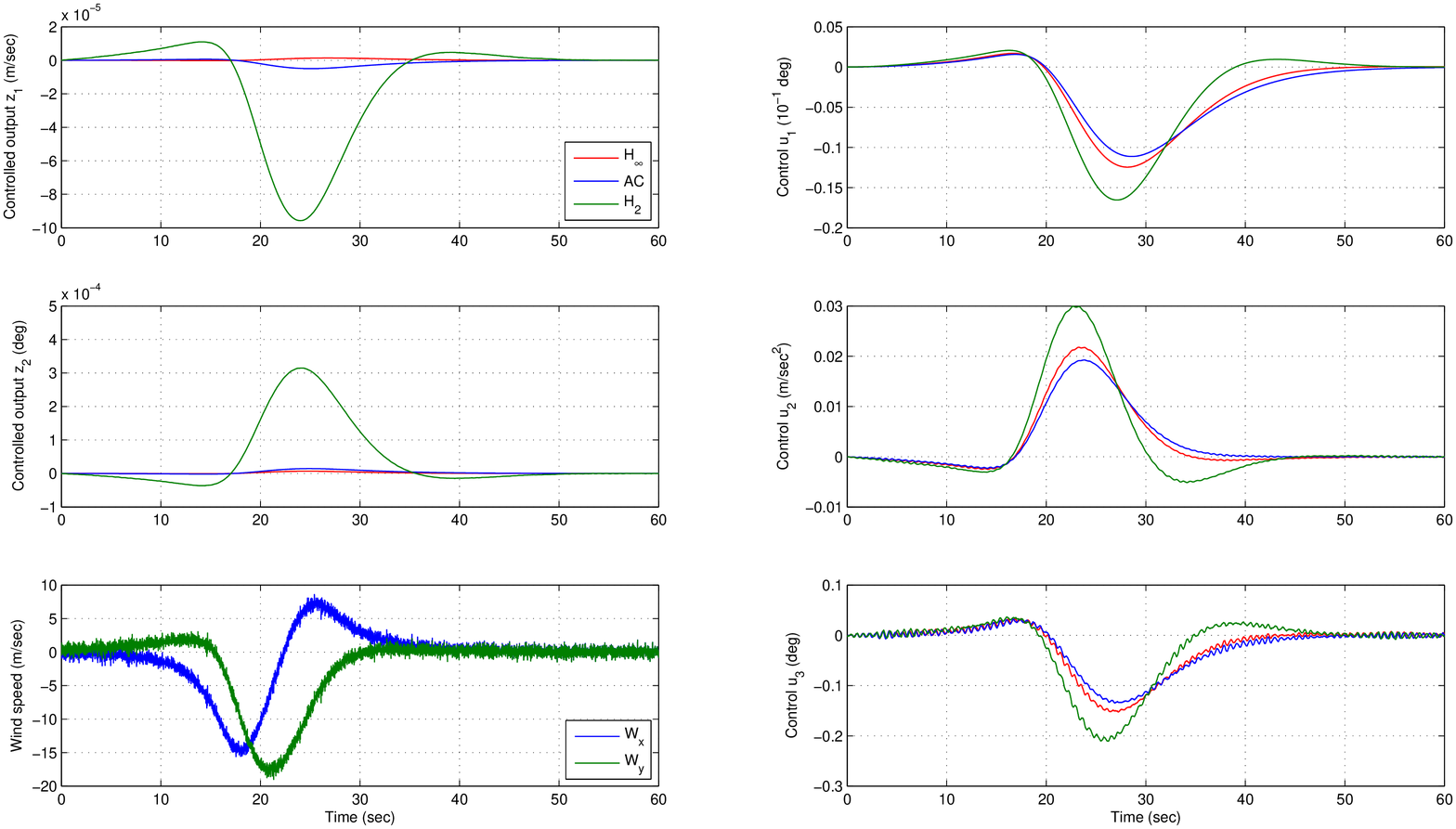}
      \caption{Model AC1 (aircraft)~\cite{Hung_MacFarlane_1982,Leibfritz_2004,Leibfritz_Lipinski_2003}. Controlled output and wind profile (left plots), control (right plots)}\label{fig:AC1 out contr}
\end{figure}

\begin{figure}[thpb]
      \centering
      \includegraphics[width=170mm]{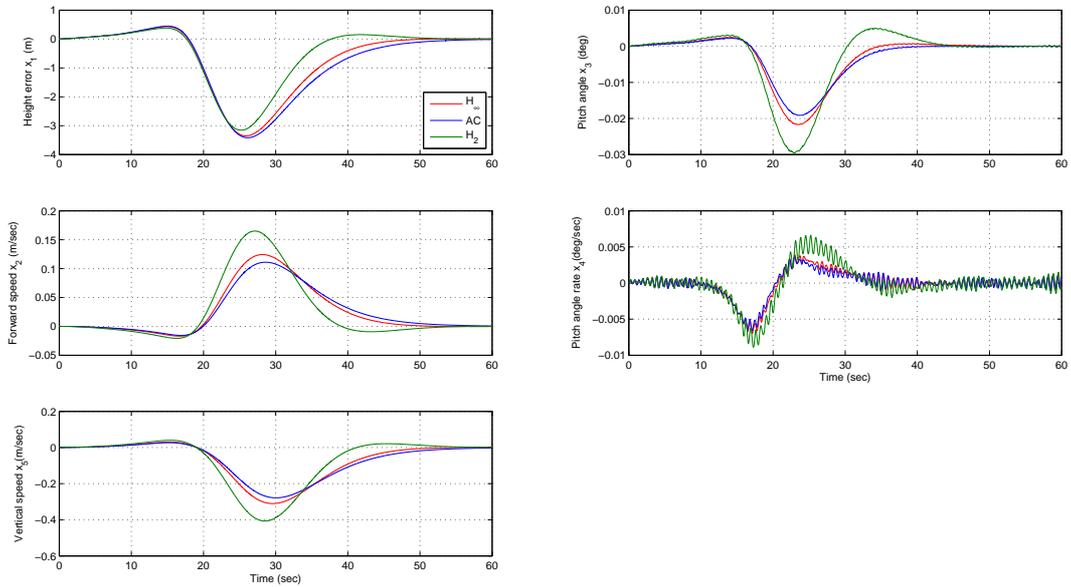}
      \caption{Model AC1 (aircraft)~\cite{Hung_MacFarlane_1982,Leibfritz_2004,Leibfritz_Lipinski_2003}. Height error $x_1$, forward speed $x_2$, vertical speed $x_5$ (left plots), pitch angle $x_3$, pitch angle rate $x_4$ (right plots)}\label{fig:AC1 state}
\end{figure}

\begin{figure}[thpb]
      \centering
      \includegraphics[width=170mm]{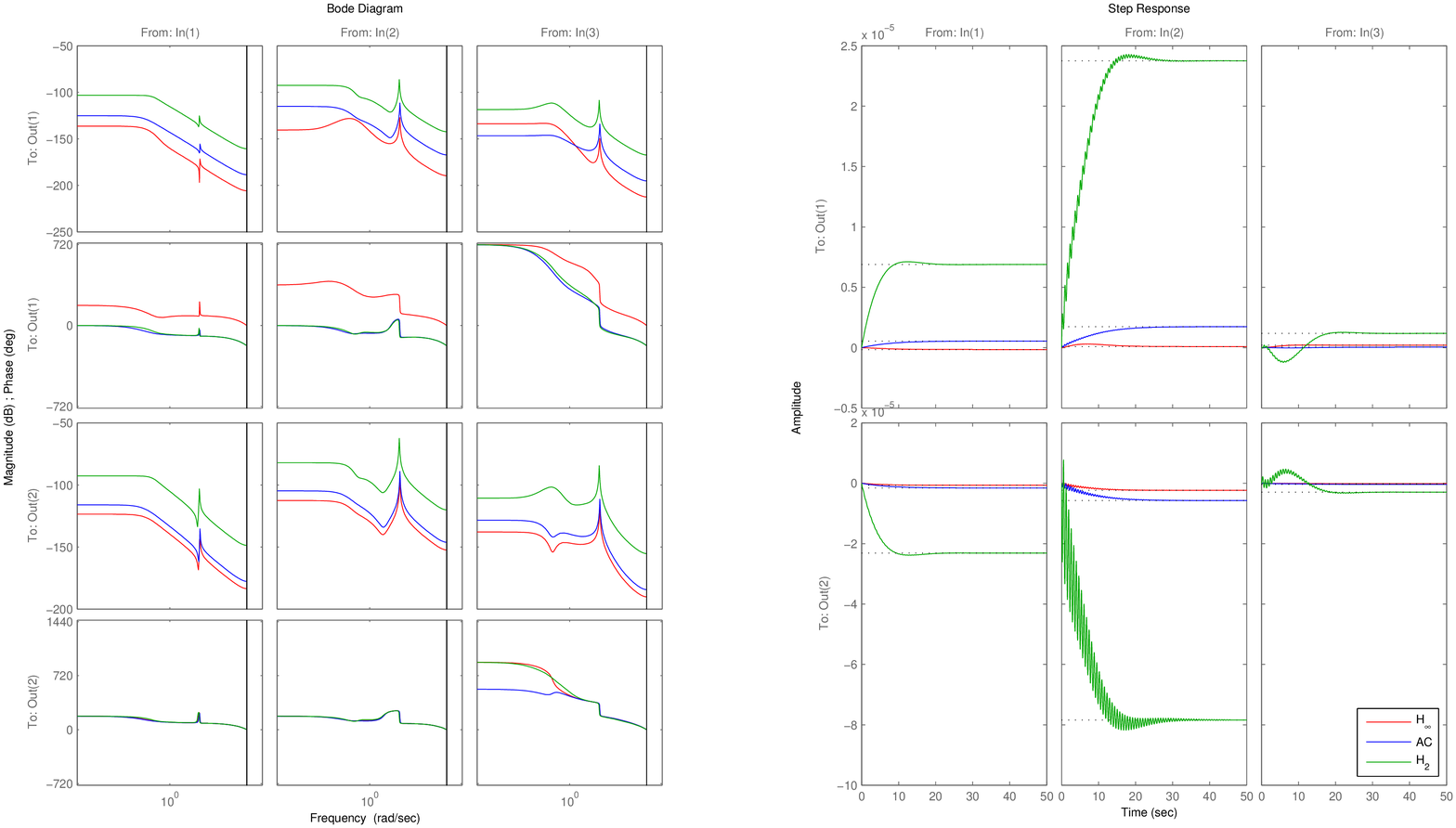}
      \caption{Model AC1 (aircraft)~\cite{Hung_MacFarlane_1982,Leibfritz_2004,Leibfritz_Lipinski_2003}. Bode diagram (left plot) and step response (right plot)}\label{fig:AC1 responses}
\end{figure}

\section{Conclusion}\label{sect:conclusion}

In this paper, we have proposed a solution to the anisotropic
suboptimal and $\gamma$-optimal controller synthesis problems by
convex optimization technique. The  anisotropic suboptimal
controller design is a natural extension of the optimal approach
developed in~\cite{VKS_1996_2}. Instead of minimizing the
anisotropic norm of the closed-loop system, the suboptimal
controller is only required to keep it below a given threshold
value. The general fixed-order synthesis procedure employs solving
an inequality on the determinant of a positive definite matrix and
two linear matrix inequalities in reciprocal matrices which make
the general optimization problem nonconvex. By applying the known
standard convexification procedures it have been shown that the
resulting optimization problem can be made convex for the
full-information state-feedback, output-feedback full-order
controllers, and static output-feedback controller for some
specific classes of plants defined by certain structural
properties. In the convex cases, the anisotropic $\gamma$-optimal
controllers are obtained by minimizing the squared norm threshold
value subject to convex constraints. In comparison with the
solution to the anisotropic optimal controller synthesis problem
derived in~\cite{VKS_1996_2} which results in a unique full-order
estimator-based controller defined by a complex system of
cross-coupled nonlinear matrix algebraic equations, the proposed
optimization-based approach is novel and does not require
developing specific homotopy-like computational algorithms.

\end{document}